\newcommand\Z{\mathbb{Z}}
\newcommand\Q{\mathbb{Q}}
\newcommand\R{\mathbb{R}}
\newcommand\C{\mathbb{C}}
\newcommand{\calC}{{\mathcal C}}
\newcommand\Hom{\operatorname{Hom}}
\newcommand\sign{\operatorname{sign}}
\newcommand{\sig}{\operatorname{sig}}
\newcommand{\odeg}{\operatorname{odeg}}
\newcommand{\ord}{\operatorname{ord}}
\newtheorem{definition}{Definition}
\newtheorem{theorem}{Theorem}
\newtheorem{proposition}{Proposition}
\newtheorem{example}{Example}
\newtheorem{remark}{Remark}
\begin{document}

\title[Simplicial degree in complex networks]{Simplicial degree in complex networks. Applications of topological data analysis to Network Science.}

\date{\today}
\thanks{This work has been supported by Ministerio de Econom\'ia y Competitividad (Spain) and the European Union through FEDER funds under grants TIN2017-84844-C2-1-R, MTM2017-86042-P and TEC2015-68734-R, the project STAMGAD 18.J445 / 463AC03 supported by Consejer\'{\i}a de Educaci\'on (GIR, Junta de Castilla y Le\'on, Spain), and the Generalitat de Catalunya grant 2017-SGR-00782.
}

\subjclass[2020]{55U10, 62R40, 91D30, 05C82, 82M99, 82B43, 05E45\\ \emph{PACS numbers:} 89.75.-k, 89.75.Fb, 89.75. Hc}

\keywords{complex networks, simplicial complexes, combinatorial laplacian, , topological data analysis, network science, statistical mechanics}

\author[D. Hern\'andez Serrano]{Daniel Hern\'andez Serrano}
\author[J. Hern\'andez Serrano]{Juan Hern\'andez Serrano}
\author[D. S\'anchez G\'omez]{Dar\'io S\'anchez G\'omez}

\address{Departamento de Matem\'aticas and Instituto Universitario de F\'isica Fundamental y Matem\'aticas (IUFFyM), Universidad de Salamanca, Salamanca, Spain}
\email{dani@usal.es, dario@usal.es}

\address{Department of Network Engineering, Universitat Polit\`ecnica de Catalunya, Barcelona, Spain.}
\email{j.hernandez@upc.edu}

\begin{abstract}

Network Science provides a universal formalism for modelling and studying complex systems based on pairwise interactions between agents. However, many real networks in the social, biological or computer sciences involve interactions among more than two agents, having thus an inherent structure of a simplicial complex. The relevance of an agent in a graph network is given in terms of its degree, and in a simplicial network there are already notions of adjacency and degree for simplices that, as far as we know, are not valid for comparing simplices in different dimensions. We propose new notions of higher-order degrees of adjacency for simplices in a simplicial complex, allowing any dimensional comparison among them and their faces. We introduce multi-parameter boundary and coboundary operators in an oriented simplicial complex and also a novel multi-combinatorial Laplacian is defined. As for the graph or combinatorial Laplacian, the multi-combinatorial Laplacian is shown to be an effective tool for calculating the higher-order degrees presented here. To illustrate the potential applications of these theoretical results, we perform a structural analysis of higher-order connectivity in simplicial-complex networks by studying the associated distributions with these simplicial degrees in 17 real-world datasets coming from different domains such as coauthor networks, cosponsoring Congress bills, contacts in schools, drug abuse warning networks, e-mail networks or publications and users in online forums. We find rich and diverse higher-order connectivity structures and observe that datasets of the same type reflect similar higher-order collaboration patterns. Furthermore, we show that if we use what we have called the maximal simplicial degree (which counts the distinct maximal communities in which our simplex and all its strict sub-communities are contained), then its degree distribution is, in general, surprisingly different from the classical node degree distribution.
\end{abstract}

\maketitle
{\small \tableofcontents}

\section{Introduction} \label{intro}

The simplest way to mathematically describe real-world networks is to use graphs, where nodes represent the agents of the network and the edges are thought to be the interactions between these agents. These systems are called complex networks in the literature and Network Science and Statistical Mechanics of complex networks (\cite{BR02, BA16}) provide a universal language which allows to classify networks, elucidate patterns of interactions and make predictions about the structure and evolution of such systems.

Despite the success of complex networks analysis and Network Science, there is a major drawback to this approach due to the fact that an implicit assumption is made: the complex system is described by combinations of pairwise interactions, that is, binary relations. Nonetheless, many complex systems and datasets in real-world networks come with a richer inherent structure, since there are higher-order interactions involving group of agents. Thus, studying higher-order structural connectivity is essential to understanding complex networks. Many of the most successful results with regard to this are based on the use of a powerful algebraic-topology tool used in Topological Data Analysis (TDA) known as simplicial complexes. Simplicial complexes generalize the standard graph tools by allowing many-body interactions, providing robust results under continuous deformations of the system or dataset. The basic idea is simple, in graph theory we cannot distinguish the three agents which are pairwise linked (for example, if they have written a paper pairwise, so that it is represented as a triangle) from the situation where the three of them have published a joint paper (and thus, in particular they have written it also pairwise, again a triangle but this time is filled). In simplicial theory, the filled triangle is a $2$-simplex which, by definition, contains all of its faces ($1$-simplices are pairwise connections and $0$-simplices are the agents). This simplicial point of view naturally allows to keep track of the multi-interactions among the agents or group of agents. Given the finite set of vertices $\{v_0,v_1,\dots ,v_n\}$ in a network, a $q$-simplex is a subset $\sigma^{(q)}=\{v_0,v_1,\dots ,v_q\}$ such that $v_i\neq v_j$ for all $i\neq j$, and a $p$-face (for $p\leq q$) of $\sigma^{(q)}$ is just a subset $\tau^{(p)}=\{v_{i_0},\dots, v_{i_p}\}$ of $\sigma^{(q)}$. A simplicial complex $K$ is a collection of simplices such that if $\sigma$ is a simplex in $K$, then all the faces of $\sigma$ are also in $K$. The number $q$ is called the dimension of the simplex $\sigma^{(q)}$.

The application of TDA on the study of higher-order structures in complex systems has produced significant advances in a wide variety of domains; see for instance \cite{BASJK18,BK19,CH13,MDS15,MR12,MR14,PGV17} for social systems, \cite{ER18,GH15,KFH16,XW14,XW15} for biomolecular systems, \cite{CL18,GGB15,GPCI15,LC19,P14,SPGB18} for brain networks, \cite{ME06} for network control and sensing, \cite{HN15} for material science or \cite{Ghrist08} for a survey on algebraic topology tools and data and a novel use of barcodes in natural images. Many of the above contributions make use of the homology and persistent homology of a simplicial complex to analyse topological aspects of the data collected. But in order to study higher-order patterns of connectivity in a complex network,  we will focus on a different aspect: the relevance or influence of a simplicial community in terms of a new generalized simplicial degree, both from a theoretical and an applied point of view.

This influence is measured in a graph network by the degree of a node, the number of its incident edges (or upper adjacent ones), but when dealing with simplicial complexes, lower adjacency might appear (think of a triangle where its three edges are lower adjacent to it, or a triangle having another triangle attached to it on one of its edges) and thus different definitions of adjacency are required. As far as we know, the notions of adjacency for simplices used in the literature are the following: two simplices $\sigma^{(q)}$ and $\sigma^{(q)}$ are lower adjacent if there exists a $(q-1)$-simplex $\tau^{(q-1)}$ which is a common $(q-1)$-face of both of them; they are said to be upper adjacent if there exists $\tau^{(q+1)}$ having both as $q$-faces; and they are considered adjacent if they are strictly lower adjacent but not upper adjacent. Thus, it is possible to upper compare two $q$-simplices if they are faces of the same (one more dimensional) $(q+1)$-simplex, and lower compare them if they share a common (one less dimensional) $(q-1)$-simplex. See for instance \cite{G02, MR12, MR14} for theoretical or social-networks implications, \cite{ER18} for protein interactions or \cite{HJ13, MS13,PR17} for spectral theory and random walks on simplicial complexes. The notion of upper adjacency is also further expanded in \cite{MDS15} following the idea that a vertex-to-triangle degree can be computed by counting the number of triangles incident to each edge, which is incident to the vertex, and then dividing by $2$. They propose a $q$-simplex to $(q+h)$-simplex upper degree of a $q$-simplex and a vertex-to-facet upper degree (a facet of a simplicial complex is a maximal simplex under the set inclusion), and affirm that, given a facet list, this degree can be ``computed with a relatively straightforward searching and counting procedure'', but no explicit theoretical or computational method is given.

None of the above definitions allow us to compare simplices of different dimensions sharing (or being contained in) any dimensional face. Additionally, the upper simplicial degree of \cite{MDS15} does not allow us to count the collaborations of the strict faces of a simplex with other simplicial communities, since it does not take into account lower adjacency or general adjacency. These are the reasons why we are introducing in this paper a mathematical framework for generalizing the notions of lower, upper and general adjacency and their associated degrees, which are valid for any simplicial dimension comparisons. We define new notions of higher-order upper and lower adjacency that generalize all the notions commented above. We show how to explicitly compute the strict lower and upper degrees by giving a closed formula, which in particular states an explicit mechanism to compute the ``$q$-simplex to facets degree'' of \cite{MDS15} as a type of upper simplicial degree. We will also give closed formulas for our general adjacency degree.

On the other hand, it is known that the degree of a node and the binary adjacency relations between nodes in a graph network are codified in the graph Laplacian matrix. This result also holds in the simplicial case (see for example \cite{G02}) by using the $q$-combinatorial Laplacian (\cite{E44}) and the definitions of simplicial degree mentioned above (comparing two simplices by one more or one less dimensional face). For the notion of $q$-simplex to $(q+h)$-simplex upper degree of \cite{MDS15}, the counting procedure proposed can be also stated in terms of certain entries of the product several combinatorial Laplacian matrices, but not from a single Laplacian one. Therefore, and in the same spirit, we will define a higher-order multi-parameter boundary operator which allows us to introduce a novel higher-order multi-combinatorial Laplacian, generalizing both the graph Laplacian and the $q$-combinatorial Laplacian. The entries of the associated multi-combinatorial Laplacian matrix compute some of the higher-order simplicial degrees here defined. In addition, we use the boundary and coboundary operators to state closed and effective formulas for all the higher-order simplicial degrees in a simplicial network.

From an applied point of view, we propose two notions as important for studying the degree of relevance of a simplicial community: the maximal upper simplicial degree of a simplex (counting the distinct maximal simplicial communities on which the simplex is nested in), and the maximal simplicial degree of a simplex (counting the distinct maximal simplicial communities on which the simplex is nested in and also the different maximal communities the strict faces of the simplex are contained in). We perform a structural analysis of the higher-order connectivity of 17 real-world datasets given in \cite{BASJK18}. These datasets are collected as simplices with their corresponding nodes (bounded to a maximum of 25 nodes), and are obtained from real-world data from a rich variety of  domains such as coauthor networks, cosponsoring Congress bills, contacts in schools, drug abuse warning networks, e-mail networks, national drug code classes and substances, publications in online forums or users in online forums. We study some of their statistical properties and the degree distributions associated with the simplicial degrees proposed in these notes, and compare these distributions with the classical node degree and node-to-facets degree distributions of the datasets. We observe that while the collaboration sizes tend to be varied among the different categories, the distributions of the facets size are similar across most datasets. We show that when studying the distributions of our proposed simplicial degrees, rich and varied higher-order connectivity structures make apparent but nonetheless, and as it would be expected, datasets of the same type reflect similar higher-order patterns. Moreover, the same type of classical node and node-to-facets degree distributions are observed, something which is consistent with the results and observations of \cite{PGV17} and \cite{MDS15}. We prove that for every dataset the degree distribution associated with the maximal upper simplicial degree of Definition \ref{d:simpdeg} is closer to a power law distribution having a more pronounced decay. Furthermore, we show that if we use the maximal simplicial degree of Definition \ref{d:simpdeg}, which counts the facets on which a simplex is contained in and also counts the facets on which the strict sub communities of the simplex are contained in, then its degree distribution is, in general, surprisingly different from the classical and upper adjacency ones.

The rest of this paper is organized as follows. Section \ref{sec:simplicial} starts by recalling well-known definitions and properties in simplicial complexes. In Section \ref{s:qhdeg} we introduce the new notions of simplicial adjacencies and degrees (which generalize the usual notions) and we provide  closed formulas to explicitly compute the higher-order simplicial degrees. In Section \ref{s:qhLap}, a new higher-order multi-combinatorial Laplacian operator is defined by using a novel multi-parameter boundary operator and we explicitly compute the entries of the matrix associated with the  multi-combinatorial Laplacian in terms of the simplicial degrees. We present in Section \ref{s:realapp} an application of the theoretical results: a structural higher-order connectivity analysis in simplicial networks is done by studying statistical properties and simplicial degree distributions of several and varied real-world datasets. A brief summary of some of the results and a set possible lines for future research are given in Section \ref{s:concl}. Finally, we present in Appendix \ref{s:A} the proofs of some theoretical results concerning the multi-combinatorial Laplacian.

\section{Simplicial  complexes, adjacency and combinatorial Laplacian}\label{sec:simplicial}

Simplicial complexes have been very much studied in the literature and during the last decade they have been proved to be a powerful tool in Topological Data Analysis (TDA). Very recently, the simplicial techniques of TDA are being also applied in the context of Complex Networks and Network Science. We shall start with some well-known definitions and properties on the category of  simplicial complexes. We refer to  (\cite{Mun84,G02}) for a wide exposition and details.

Roughly speaking, given a finite set of points $\{v_0,v_1,\dots ,v_n\}$, which we call vertices, a $q$-simplex is a subset of vertices $\{v_0,v_1,\dots ,v_q\}$ such that $v_i\neq v_j$ for all $i\neq j$. A $p$-face (for $p<q$) of a $q$-simplex is just a subset $\{v_{i_1},\dots, v_{i_p}\}$ of the $q$-simplex. A simplicial complex $K$ is a collection of simplices such that if $\sigma$ is a simplex in $K$, then all the faces of $\sigma$ are also in $K$. 

Formally, a set $\{v_0,\dots, v_q\}$ of points of $\mathbb{R}^n$  is said to be geometrically independent if the vectors $\{v_0-v_1,\dots, v_0-v_q\}$ are linearly independent. 

The $q$-simplex spanned by these points is the convex envelope, that is, the set of all points of $\mathbb{R}^n$ such that
$$\sigma=\Big\{\sum_{i=0}^q \lambda_i v_i \,\colon\, \sum_{i=0}^q\lambda_i=1 \text{ and } \lambda_i\geq 0\text{ for all } i\Big\}$$
The points $\{v_0,\dots, v_q\}$ that span $\sigma$ are called vertices of $\sigma$ and the number $q$ is the dimension of $\sigma$. The simplex spanned by a proper nonempty subset $\{v_{i_1},\dots, v_{i_p}\}$ of $\{v_0,\dots, v_q\}$ is called a $p$-face of $\sigma$. If a simplex is not a face of any other simplex, then it is called a facet.

A (finite) simplicial complex in $\mathbb{R}^n$ is a (finite) collection $K$ of simplicies in $\mathbb{R}^n$ satisfying the following conditions:
\begin{enumerate}
\item If $\sigma\in K$ and $\tau$ is a face of $\sigma$, then $\tau\in K$.
\item The non-empty intersection of any two simplices of $K$ is a face of each of them.
\end{enumerate}
Each element $\sigma\in K$ is called a $q$-simplex of $K$, being $q+1$ the cardinality of $\sigma$. The union of $0$-simplices of $K$ is called the vertex set of $K$. The dimension of $K$ is defined as $\dim K= \max\{\dim \sigma \colon \sigma\in K\}$. We shall use the notation $\sigma^{(q)}$ to denote a simplex $\sigma$ of dimension $q$.

Hence, simplices can be understood as higher dimensional generalizations of a point, line, triangle, tetrahedron, and so on. Since simplices can codify multi interaction relations in classical networks (co-authorship network, social networks, protein interaction network, biological networks, ...), they are starting to be introduced in Network Science. 

\begin{remark}
Even if simplicial complexes and many of its associated properties can be defined over a commutative ring with unity, for the sake of clarity we shall restrict ourselves to the base field $\R$.
\end{remark}

Recall that the degree of a vertex is the number of its incident edges. It has local relevance in determining the centrality of a vertex and global importance in modelling the network in virtue of its degree distribution. This notion can be generalized to q-simplices. 

Notice that as a $0$-simplex, a vertex has degree $d$ if there are $d$ edges, $1$-simplices, incident to it, but a $1$-simplex have two $0$-simplices adjacent to it (the two vertices the edge has) but it also might be adjacent to a $2$-simplex (triangle). That is, we need a notion of upper and lower adjacency in order to define the degree for $q$-simplices when $q>0$ (see \cite{ER18,G02} for details).

\begin{definition}\label{d:qULAdj}
Two $q$-simplices $\sigma_i^{(q)}$ and $\sigma_j^{(q)}$ of a simplicial complex $K$ are lower adjacent if they share a common $(q-1)$-face, which is called their common lower simplex. Lower adjacency is denoted as $\sigma_i^{(q)} \sim_L \sigma_j^{(q)}$.
 
Two $q$-simplices $\sigma_i^{(q)}$ and $\sigma_j^{(q)}$ of a simplicial complex $K$ are upper adjacent if they are both faces of the same common $(q+1)$-simplex, called their common upper simplex. Upper adjacency is denoted as $\sigma_i^{(q)} \sim_U \sigma_j^{(q)}$. 
\end{definition}
Notice that if two $q$-simplices are upper adjacent, then they are also lower adjacent. Moreover, if $\sigma_i^{(q)}$ and $\sigma_j^{(q)}$ are upper adjacent (resp. lower adjacent), then their common upper $(q+1)$-simplex (resp. their common lower simplex) is unique.

\begin{definition}\label{d:qLUdeg}
The lower degree of a $q$-simplex $\sigma^{(q)}$, denoted $\deg_L(\sigma^{(q)})$, is the number of $(q-1)$-simplices in $K$ which are contained in $\sigma^{(q)}$, which is always $\binom{q+1}{q}=q+1$. The upper degree of a $q$-simplex $\sigma^{(q)}$, denoted $\deg_U(\sigma)$, is the number of $(q+1)$-simplices in $K$ of which $\sigma^{(q)}$ is a $q$-face.

The degree of a $q$-simplex is defined as:
$$\deg(\sigma^{(q)}):=\deg_L(\sigma^{(q)})+\deg_U(\sigma^{(q)})=\deg_U(\sigma^{(q)})+q+1\,.$$
\end{definition}

In network theory, the degree of a vertex also appears as a diagonal entry of the graph Laplacian matrix, defined as $D-A$, where $D$ is a diagonal matrix with the degree of the vertices as diagonal entries, and $A$ is the usual vertex adjacency matrix. We will recall here the definition of the $q$-combinatorial Laplacian for $q$-simplices, and that of its associated matrix (which takes control of the degrees of $q$-simplices in a simplicial complex $K$ and their adjacency relations). As we shall see, the $q$-Laplacian operator makes use of the $q$-boundary operator, so that, an orientation is needed in the simplicial complex.

Let $\sigma$ be a simplex, we define two orderings of its vertex set to be equivalent if they differ from one another by an even permutation. If $\dim\sigma >0$, this relations provides two equivalence classes and each of them is called an orientation of $\sigma$. An oriented simplex is a simplex $\sigma$ together with an orientation of $\sigma$. For a geometrically independent set of points $\{v_0, v_1,\dots, v_q\}$ we denote by $[v_0,\dots, v_q]$ and $-[v_0,\dots, v_q]$ the opposite oriented simplices spanned by $\{v_0, v_1,\dots, v_q\}$. We say that a finite simplicial complex $K$ is oriented if all of its simplices are oriented. 
We shall denote by $\widetilde{S}_p(K)$ the set of oriented $p$-simplices of the simplicial complex $K$, and by $S_p(K)$ the set of non oriented $p$-simplices. 
 
Given and oriented simplicial complex $K$, we define the  group of $q$-chains as the free abelian group $C_q(K)$ with basis the set of oriented $q$-simplices of $K$. The dimension $f_q$ of $C_q(K)$ is the number of $q$-dimensional simplices of the simplicial complex $K$, and it is codified in a topological invariant called the $f$-vector $f=(f_0,f_1,\dots ,f_q, \dots ,f_{\dim K})$. By assumption $C_q(K)$ is trivial if $q\notin[0,\dim K]$.

\begin{definition}
The $q$-boundary operator $\partial_q\colon C_q(K)\to C_{q-1}(K)$ is the homomorphism given as the linear extension of

$$\partial_q([v_0,\dots, v_q])=\sum_{i=0}^q(-1)^i[v_0,\dots,\hat{v_i},\dots, v_q]$$
where $[v_0,\dots,\hat{v_i},\dots, v_q]$ denotes the oriented $(q-1)$-simplex obtained from removing the vertex $v_i$ in $[v_0,\dots, v_q]$.
\end{definition}

\begin{figure}[!htb]
\centering
\includegraphics[scale=1.2]{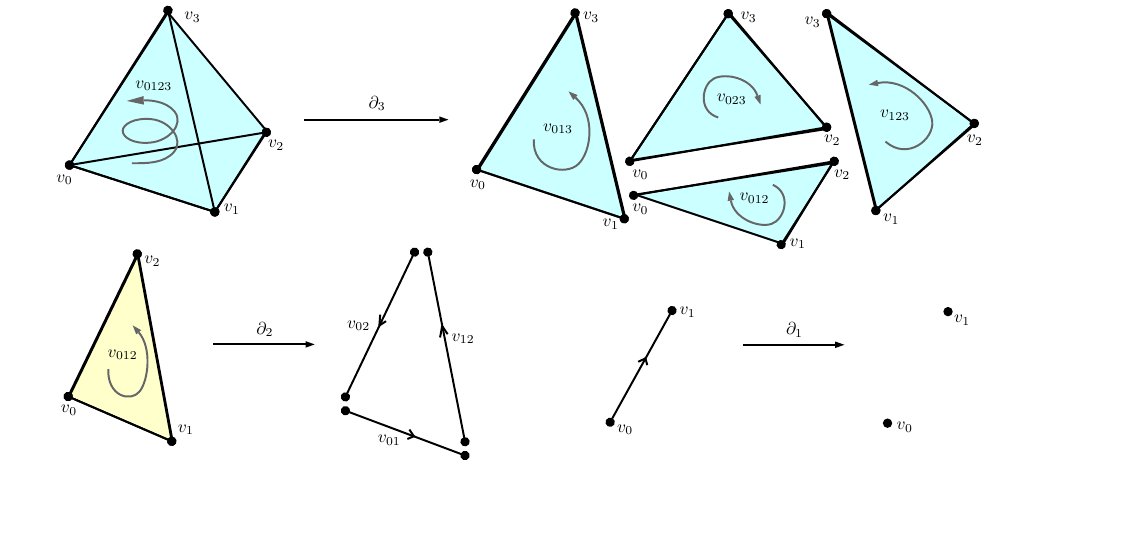}
\caption{Examples of $q$-boundary operators.}
\label{fig:Lq}
\end{figure}

\begin{remark} In Figure \ref{fig:Lq} we are denoting $[v_0,v_1,\dots, v_q]$ by $v_{01\cdots q}$.
\end{remark}

Let $C^q(K)=\Hom_k(C_q(K),k)$ the dual vector space of $C_q(K)$ (the field $k$ is allowed to be $\Z_p$, $\Q$, $\R$ or $\C$). Its elements, called \emph{cochains}, are completely determined by specifying its value on each simplex (since chains are linear combinations of simplices). Fixing an auxiliary inner product (with respect to which the basis of $C_q(K)$ can be chosen to be orthonormal), we can identify (via the associated polarity): $C_q(K)\simeq C^q(K)$. Then, we can define the \emph{coboundary operator}:
$$\delta_{q-1}\colon C^{q-1}(K) \to C^{q}(K)\,,$$
which is nothing but the \emph{adjoint operator}: 
$$\partial_q^*\colon C_{q-1}(K) \to C_{q}(K)$$
of the boundary map $\partial_q$. Given a linear form $\omega \in C^{q-1}(K)$ and an oriented simplex $\sigma=[v_0,\dots ,v_q]\in C_q(K)$, the map $\delta_{q-1}$ is defined as follows:
$$\delta_{q-1}(\omega)(\sigma)=\omega\Big(\sum_{i=0}^{q}(-1)^i[v_0,\dots ,\hat v_i,\dots  ,v_q]\Big),$$
which represents the evaluation of $\omega$ on all the faces of $\sigma$.

\begin{definition}
Given an oriented simplicial complex $K$, for $q\geq 0$ the $q$-combinatorial Laplacian is the linear operator $\Delta_q\colon C_q(K)\to C_q(K)$ defined by:
$$\Delta_q:=\partial_{q+1}\circ \partial_{q+1}^*+\partial_q^*\circ \partial_q\,.$$
The upper $q$-combinatorial Laplacian is defined as $\Delta_q^U:=\partial_{q+1}\circ \partial_{q+1}^*$ and the lower $q$-combinatorial Laplacian is defined as $\Delta_q^L:=\partial_q^*\circ \partial_q$.
\end{definition}

\begin{definition}\label{d:qUsimor}
Let $K$ be an oriented simplicial complex and $\sigma_i^{(q)}, \sigma_i^{(q)}\in C_q(K)$ two $q$-simplices which are upper adjacent with common upper $(q+1)$-simplex $\tau^{(q+1)}$. We say that $\sigma_i^{(q)}$ and $\sigma_j^{(q)}$ are similarly oriented with respect to $\tau^{(q+1)}$ if both have the same sign in $\partial_{q+1}(\tau^{(q+1)})$. If the signs are opposite, we say that they are upper dissimilarly oriented. 
\end{definition}
\begin{definition}\label{d:qLsimor}
Let $K$ be an oriented simplicial complex and $\sigma_i^{(q)}, \sigma_i^{(q)}\in C_q(K)$ two $q$-simplices which are lower adjacent with common lower $(q-1)$-simplex $\tau^{(q-1)}$. One says that $\sigma_i^{(q)}$ and $\sigma_j^{(q)}$ are similarly oriented with respect to $\tau^{(q-1)}$ if $\tau^{(q-1)}$ has the same sign in both $\partial_{q}(\sigma_i^{(q)})$ and $\partial_{q}(\sigma_j^{(q)})$. If the sign is different, they are lower dissimilarly oriented. 
\end{definition}

Let $B_q$ be the associated matrix to the boundary operator $\partial_q$ with respect to the orthonormal basis of elementary chains with some ordering for  $C_q(K)$ and $C_{q-1}(K)$. Then, the associated matrix to its adjoint operator $\partial_q^*=\delta_{q-1}$ with respect to the same ordered basis is the transpose matrix $B_q^t$. We call the \emph{$q$-Laplacian matrix of $K$} the associated matrix of $\Delta_q$:
\begin{equation}\label{e:Lq}
L_q=B_{q+1}B_{q+1}^t+B_q^tB_q\,,
\end{equation}
where we shall also denote $L_q^U=B_{q+1}B_{q+1}^t$ the upper $q$-Laplacian matrix and $L_q^L=B_q^tB_q$ the lower $q$-Laplacian matrix.

Then, the $q$-Laplacian matrices are given by (see for example \cite{G02}):
$$(L_q^U)_{ij}=\begin{cases}
\deg_U(\sigma_i^{(q)})  &\mbox{ if }\, i=j\,, \\
1  &\mbox{ if }\, i\neq j,\, \sigma_i^{(q)}\sim_U \sigma_j^{(q)} \, \mbox{ with similar orientation, }\\
-1  &\mbox{ if }\, i\neq j,\, \sigma_i^{(q)}\sim_U \sigma_j^{(q)} \, \mbox{ with dissimilar orientation, }\\
0  &\mbox{ otherwise. }
\end{cases}
$$
$$(L_q^L)_{ij}=\begin{cases}
\deg_L(\sigma_i^{(q)})=q+1  &\mbox{ if }\, i=j \,,\\
1  &\mbox{ if }\, i\neq j,\, \sigma_i^{(q)}\sim_L \sigma_j^{(q)} \, \mbox{ with similar orientation, }\\
-1  &\mbox{ if }\, i\neq j,\, \sigma_i^{(q)}\sim_L \sigma_j^{(q)} \, \mbox{ with dissimilar orientation, }\\
0  &\mbox{ otherwise. }
\end{cases}
$$

$$(L_q)_{ij}=\begin{cases}
\deg(\sigma_i^{(q)})  &\mbox{ if }\, i=j \,,\\
1  &\mbox{ if }\, i\neq j,\, \sigma_i^{(q)}\not\sim_U \sigma_j^{(q)} \,\mbox{ and }\, \sigma_i^{(q)}\sim_L \sigma_j^{(q)} \mbox{ with similar orientation. }\\
-1  &\mbox{ if }\, i\neq j,\, \sigma_i^{(q)}\not\sim_U \sigma_j^{(q)} \,\mbox{ and }\, \sigma_i^{(q)}\sim_L \sigma_j^{(q)} \mbox{ with dissimilar orientation.}\\
0  &\mbox{ if }\, i\neq j \mbox{ and either }\, \sigma_i^{(q)}\sim_U\sigma_j^{(q)} \,\mbox{ or }\, \sigma_i^{(q)}\not\sim_L \sigma_j^{(q)} \,.
\end{cases}
$$

\begin{remark}
The off diagonal entries of the $q$-adjacency matrix given in \cite{ER18} are the absolute value of the off diagonal entries of the $q$-Laplacian matrix $L_q$ just given following \cite{G02,MR12}. 
\end{remark}

\begin{remark}
Since a network is the $1$-skeleton of a simplicial complex and $\delta_0$ and $\partial_0^*$ are zero maps, then $L_0=\partial_1\circ \partial_1^*$ and $L_0=B_1B_1^t$ is given by:
$$(L_0)_{ij}=\begin{cases}
\deg (v_i)=\deg_U(v_i)  &\mbox{ if }\, i=j\,, \\-1  &\mbox{ if }\, v_i \, \mbox{ is upper adjacent to }\, v_j\,,\\0  &\mbox{ otherwise. }
\end{cases}
$$
Thus, $L_0=D-A$ is the traditional Laplacian matrix of a network.
\end{remark}

\section{Higher-order adjacency and simplicial degree}\label{s:qhdeg}\quad

In \cite{MDS15} the notion of upper degree for a simplex is further expanded. Their definition follows the idea that a vertex-to-triangle degree (that is the number of triangles connected to a vertex) can be computed by counting the number of triangles incident to each edge which is incident to the vertex, and then dividing by $2$ (since each triangle incident to the vertex has to be incident to a pair of edges connected to the vertex). Thus, they show that the ``vertex to triangle degree'' is given by: 
\begin{equation}\label{e:MDSv2deg}
\frac{1}{2}\sum_{j_1,j_2}|(B_1)_{i,j_1}||(B_2)_{j_1,j_2}|\,,
\end{equation}
where $B_1$ is the vertex-edge incidence matrix of equation  (\ref{e:Lq}).
 By an inductive procedure they propose the following formula for the ``vertex to $h$-simplex degree'':
\begin{equation}\label{e:MDSvhdeg}
\frac{1}{h!}\sum_{j_1,\dots,j_h}|(B_{1})_{i,j_1}||(B_{2})_{j_1,j_2}|\cdots |(B_{h})_{j_{h-1},j_h}|
\end{equation}
 
And, for the ``$q$-simplex to $(q+h)$-simplex degree'':
\begin{equation}\label{e:MDSqhdeg}
\frac{1}{h!}\sum_{j_1,\dots,j_h}|(B_{q+1})_{i,j_1}||(B_{q+2})_{j_1,j_2}|\cdots |(B_{q+h})_{j_{h-1},j_h}|
\end{equation}

They perform numerical analysis to study the associated degree distributions for the co-authorship network finding that, apart from the usual degree, there are not clear models for the ``$q$-simplex to $(q+h)$-simplex degree of a $q$-simplex''. Thus, they introduce an alternative extension of this higher-order notion of degree: the ``vertex to facets degree''. In social networks, a facet represents the number of different groups within which the social individual interacts, thus the ``vertex to facets degree of a vertex $v$'' is the number of distinct maximal collaborative groups which the vertex $v$ belongs to. That is, for each $h$ the ``vertex to facets degree of the vertex $v$'' defined in \cite{MDS15} is the number of $h$-simplices incident to the vertex $v$, and such that they are not incident to any other $(h+1)$-simplex incident to the vertex $v$: 
\begin{equation}\label{e:facetdeg}
\sum_{h\geq 1}\# \{\sigma^{(h)} \,|\, v\in \sigma^{(h)}\cap \sigma^{(h+1)} \, \mbox { and }\, \sigma^{(h)}\not\subset \sigma^{(h+1)}\}
\end{equation}

A generalization for ``$q$-simplex to facets degree'' is also given in \cite{MDS15}:
\begin{equation}\label{e:simplexfacetdeg}
\sum_{h\geq 1}\# \{\sigma^{(q+h)} \,|\, \sigma^{(q)}\subseteq \sigma^{(q+h)} \, \mbox { and }\, \sigma^{(q+h)}\not\subset \sigma^{(q+h+1)} \mbox{ if }\sigma^{(q)}\subseteq \sigma^{(q+h+1)}\}
\end{equation}

They affirm that, given a facet list, this degree can be ``computed with relatively straightforward searching and counting procedure'', but no explicit formula is given.

In this section we shall define a new notion of higher-order lower, upper and general adjacency for simplices and their associated degrees, which in particular allows us to redefine the ``$q$-simplex to $(q+h)$-simplex degree of a $q$-simplex'' of \cite{MDS15}. Then we will present properties and closed formulae for these degrees, and we shall also illustrate how to explicitly compute the ``$q$-simplex to facets degree'' by using our upper degree definition.

\subsection{General adjacencies for simplices of different dimensions.}\quad 
Let $\sigma_i^{(q)}$ be a $q$-simplex and $\sigma^{(q')}_j$ be a $q'$-simplex of a simplicial complex $K$. For simplicity we shall omit the subscripts $i$ and $j$, unless confusion can arise.
\begin{definition}\label{d:qhLoAdj}

We say that $\sigma^{(q)}$ and $\sigma^{(q')}$ are $p$-lower adjacent if there exists a $p$-simplex $\tau^{(p)}$ which is a common face of both $\sigma^{(q)}$ and $\sigma^{(q')}$:
$$\sigma^{(q)}\sim_{L_p}\sigma^{(q')}\,\iff\, \exists\,\, \tau^{(p)} \quad \colon\quad  \tau^{(p)}\subseteq \sigma^{(q)}\quad \&\quad  \tau^{(p)}\subseteq \sigma^{(q')}\,.$$

Note that if $\sigma^{(q)}\sim_{L_p}\sigma^{(q')}$, then $\sigma^{(q)}\sim_{L_{p'}}\sigma^{(q')}$ for all $0\leq p'\leq p$. Therefore, we say that $\sigma^{(q)}$ and $\sigma^{(q')}$ are strictly $p$-lower adjacent, referred as $p^*$-lower adjacent, if $\sigma^{(q)}\sim_{L_p}\sigma^{(q')}$ and $\sigma^{(q)}\not \sim_{L_{p+1}}\sigma^{(q')}$. We shall write $\sigma^{(q)}\sim_{L_{p^*}}\sigma^{(q')}$ for the strict lower adjacency.
\end{definition}

\begin{definition}\label{d:qhUpAdj}
We say that $\sigma^{(q)}$ and $\sigma^{(q')}$ are $p$-upper adjacent if there exists a $p$-simplex $\tau^{(p)}$ having both $\sigma^{(q)}$ and $\sigma^{(q')}$ as faces:
$$\sigma^{(q)}\sim_{U_p}\sigma^{(q')}\,\iff\, \exists\,\, \tau^{(p)} \quad \colon\quad  \sigma^{(q)}\subseteq \tau^{(p)} \quad \&\quad  \sigma^{(q')}\subseteq \tau^{(p)}\,.$$

We say that $\sigma^{(q)}$ and $\sigma^{(q')}$ are strictly $p$-upper adjacent, referred as $p^*$-upper adjacent and denoted as $\sigma^{(q)}\sim_{U_{p^*}}\sigma^{(q')}$, if $\sigma^{(q)}\sim_{U_p}\sigma^{(q')}$ and $\sigma^{(q)}\not \sim_{U_{p+1}}\sigma^{(q')}$.
\end{definition}

\begin{remark}
Let us point out some comments.
\begin{enumerate}
\item If $q=q'$ and $p=q-1$ (resp. $p=q+1$), then the notion $(q-1)$-lower (resp. $(q+1)$-upper) adjacency recovers the ordinary lower (resp. upper) adjacency for $q$-simplices of Definition \ref{d:qULAdj}. Thus, $(q+1)$-upper adjacency implies $(q-1)$-lower adjacency for $q$-simplices. However, in contrast to the ordinary case, the uniqueness of the common lower (resp. upper) simplex is no longer true.

\item If $h\geq 0$ and $\sigma^{(q)}\sim_{U_{q+h}}\sigma^{(q+h)}$, then $\sigma^{(q)}$ is a face of $\sigma^{(q+h)}$ and thus $\sigma^{(q)}\sim_{L_{q}}\sigma^{(q+h)}$. 
\item Although, in general is no longer true that $p$-upper adjacency implies $p'$-lower adjacency, one has that for $q\geq q'\geq h$ if $\sigma_i^{(q)}\sim_{U_{q+h}}\sigma_j^{(q')}$, then $\sigma_i^{(q)}\sim_{L_{q'-h}}\sigma_j^{(q')}$.
\end{enumerate}
\end{remark}

\begin{proposition}
Assume $\sigma_i^{(q)}\sim_{L_{p^*}}\sigma_j^{(q')}$ for some $p$ and put $p'=q+q'-p$. If $\sigma_i^{(q)}\not\sim_{U_{p'}}\sigma_j^{(q')}$, then $\sigma_i^{(q)}\not\sim_{U_{p'+h}}\sigma_j^{(q')}$ for all $h\geq 1$.
\end{proposition}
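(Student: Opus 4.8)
The plan is to reduce the statement to an elementary count of vertex-set cardinalities. The key observation is that strict $p$-lower adjacency $\sigma_i^{(q)}\sim_{L_{p^*}}\sigma_j^{(q')}$ determines $|\sigma_i^{(q)}\cap\sigma_j^{(q')}|$ exactly: a common $m$-face exists if and only if the intersection $\sigma_i^{(q)}\cap\sigma_j^{(q')}$ --- which is a common face of both whenever it is nonempty, by the second axiom of a simplicial complex --- has at least $m+1$ vertices, so the existence of a common $p$-face together with the non-existence of a common $(p+1)$-face in Definition \ref{d:qhLoAdj} forces $|\sigma_i^{(q)}\cap\sigma_j^{(q')}|=p+1$. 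By inclusion--exclusion, $|\sigma_i^{(q)}\cup\sigma_j^{(q')}|=(q+1)+(q'+1)-(p+1)=q+q'-p+1=p'+1$, so $\sigma_i^{(q)}\cup\sigma_j^{(q')}$ is the unique set of $p'+1$ vertices containing both $\sigma_i^{(q)}$ and $\sigma_j^{(q')}$ (any such set must contain their union, and has the same cardinality as it).

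The rest is a contradiction argument. Suppose $\sigma_i^{(q)}\sim_{U_{p'+h}}\sigma_j^{(q')}$ for some $h\geq 1$. By Definition \ref{d:qhUpAdj} there is a $(p'+h)$-simplex $\tau\in K$ with $\sigma_i^{(q)}\subseteq\tau$ and $\sigma_j^{(q')}\subseteq\tau$, hence $\sigma_i^{(q)}\cup\sigma_j^{(q')}\subseteq\tau$. Since $K$ is closed under taking faces, this subset of $\tau$ is itself a simplex of $K$, and by the cardinality count it is a $p'$-simplex having both $\sigma_i^{(q)}$ and $\sigma_j^{(q')}$ as faces. Therefore $\sigma_i^{(q)}\sim_{U_{p'}}\sigma_j^{(q')}$, contradicting the hypothesis $\sigma_i^{(q)}\not\sim_{U_{p'}}\sigma_j^{(q')}$.

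I do not expect a serious obstacle here: the whole content is that strict lower adjacency pins down $|\sigma_i^{(q)}\cap\sigma_j^{(q')}|$, hence $|\sigma_i^{(q)}\cup\sigma_j^{(q')}|$, hence the minimal dimension $p'$ of any simplex able to contain both simplices; once this minimal candidate $\sigma_i^{(q)}\cup\sigma_j^{(q')}$ fails to lie in $K$, downward closure forbids any larger simplex from lying in $K$ as well. The only care needed is with edge cases --- checking that $p\geq 0$ so that the common face is genuinely nonempty and the intersection axiom applies, and observing that $p'=q+q'-p\geq\max(q,q')$ really is the smallest admissible upper-adjacency parameter, so that "$\not\sim_{U_{p'}}$" is the appropriate, non-vacuous hypothesis rather than a statement about smaller parameters.
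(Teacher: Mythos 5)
Your proof is correct and follows essentially the same route as the paper's: assume $\sigma_i^{(q)}\sim_{U_{p'+h}}\sigma_j^{(q')}$, note that the witnessing $(p'+h)$-simplex contains $\sigma_i^{(q)}\cup\sigma_j^{(q')}$ and hence admits a $p'$-face containing both simplices, contradicting $\sigma_i^{(q)}\not\sim_{U_{p'}}\sigma_j^{(q')}$. Your extra step pinning down $|\sigma_i^{(q)}\cap\sigma_j^{(q')}|=p+1$ via strictness just makes explicit the cardinality count the paper leaves implicit.
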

\begin{proof}
Assume $\sigma_i^{(q)}\sim_{U_{p'+h}}\sigma_j^{(q')}$ for some $h \geq 1$. Then, there exists a $(p'+h)$-simplex $\tau^{(p'+h)}$ such that $\sigma_i^{(q)}\cup\sigma_j^{(q')}\subseteq \tau^{(p'+h)}$. In particular, $\tau^{(p'+h)}$ has a $p'$-face containing both $\sigma_i^{(q)}$ and $\sigma_j^{(q')}$ as faces, that is, $\sigma_i^{(q+1)}\sim_{U_{p'}}\sigma_j^{(q')}$.
\end{proof}

\begin{remark} Notice that if $\sigma_i^{(q)}\sim_{L_{p^*}}\sigma_j^{(q')}$, then $\sigma_i^{(q)}$ and $\sigma_j^{(q')}$ share $p+1$-vertices. Thus, the smallest simplex which might contain both as faces (and therefore all of their vertices) has to have $q+1+q'+1-(p+1)=q+q'-p+1$ vertices, and thus it should be a $p'=q+q'-p$-simplex.
\end{remark}

This justifies the following definition.
\begin{definition}\label{d:qhAdj}
We say that $\sigma^{(q)}$ and $\sigma^{(q')}$ are $p$-adjacent if they are $p^*$-lower adjacent and not $p'$-upper adjacent, for $p'=q+q'-p$:
$$\sigma^{(q)}\sim_{A_p} \sigma^{(q')}\iff \sigma^{(q)}\sim_{L_{p^*}} \sigma^{(q')} \quad \& \quad \sigma^{(q)}\not\sim_{U_{p'}} \sigma^{(q')}.$$

In order to agree with graph theory, for $q=0$ we say that two vertices $v_i$ and $v_j$ are adjacent if $v_i\sim_{U_1} v_j$.

We say that $\sigma^{(q')}$ is maximal $p$-adjacent to $\sigma^{(q)}$ if:
$$\sigma^{(q')}\sim_{A_{p^*}} \sigma^{(q)}\iff \sigma^{(q')}\sim_{A_p} \sigma^{(q)} \quad \& \quad \sigma^{(q')}\not \subset \sigma^{(q'')} \quad \forall\quad \sigma^{(q'')}\,|\, \sigma^{(q'')}\sim_{A_p} \sigma^{(q)}\,.$$
\end{definition}

\begin{remark}
With the maximal $p$-adjacency we are saying that $\sigma^{(q')}$ and $\sigma^{(q)}$ are maximal collaborative simplicial communities in the sense that, even if some faces (sub-communities) of $\sigma^{(q')}$ might be $p$-adjacent to $\sigma^{(q)}$, they are not taken into account since the biggest one $p$-adjacent to $\sigma^{(q)}$ is $\sigma^{(q')}$.
\end{remark}

\subsection{Generalized lower degree for simplices.}\quad 

\begin{definition}\label{d:hpLdeg}
We define the $p$-lower degree of a $q$-simplex $\sigma^{(q)}$ as the number of $q'$-simplices which are $p$-lower adjacents to $\sigma^{(q)}$: 
$$\deg^{p}_L(\sigma^{(q)}):=\#\{\sigma^{(q')}\,\colon \, \sigma^{(q')}\sim_{L_p}\sigma^{(q)} \}\,.$$

The strictly $p$-lower degree of a $q$-simplex $\sigma^{(q)}$ is the number of $q'$-simplices which are $p^*$-lower adjacents to $\sigma^{(q)}$: 
$$\deg^{p^*}_L(\sigma^{(q)}):=\#\{\sigma^{(q')}\,\colon\, \sigma^{(q')}\sim_{L_p}\sigma^{(q)} \quad \& \quad \sigma^{(q')}\not\sim_{L_{p+1}}\sigma^{(q)}\}\,.$$

We define the $(h,p)$-lower degree of a $q$-simplex $\sigma^{(q)}$ as the number of $(q-h)$-simplices which are $p$-lower adjacents to $\sigma^{(q)}$: 
$$\deg^{h,p}_L(\sigma^{(q)}):=\#\{\sigma^{(q-h)}\,\colon\, \sigma^{(q-h)}\sim_{L_p}\sigma^{(q)}\}\,.$$
The strictly $(h,p)$-lower degree of a $q$-simplex $\sigma^{(q)}$ is the number of $(q-h)$-simplices which are $p^*$-lower adjacent to $\sigma^{(q)}$: 
$$\deg^{h,p^*}_L(\sigma^{(q)}):=\#\{\sigma^{(q-h)}\,\colon\, \sigma^{(q-h)}\sim_{L_p}\sigma^{(q)} \quad \& \quad \sigma^{(q-h)}\not\sim_{L_{p+1}}\sigma^{(q)}\}\,.$$
\end{definition}

\begin{figure}[!htb]
\centering
\includegraphics[scale=1.5]{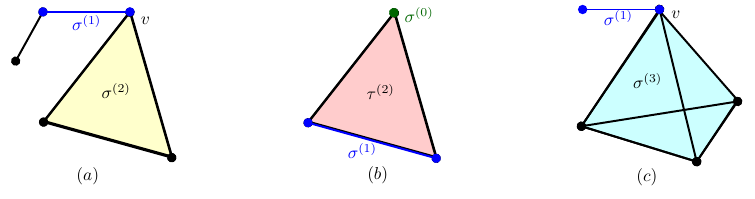}
\caption{Simplices and adjacency.}
\label{fig:Lower}
\end{figure}

\begin{example}
In Figure \ref{fig:Lower} (a) we have that the blue edge $\sigma^{(1)}$ is lower adjacent to the yellow triangle $\sigma^{(2)}$ (in the vertex $v$), but there does not exist $p'$ such that $\sigma^{(1)}$ and $\sigma^{(2)}$ were $p'$-upper adjacent. That is, $\sigma^{(1)}\sim_{L_0}\sigma^{(2)}$ and $\sigma^{(1)}\not\sim_{U_{p'}}\sigma^{(2)}$, and thus $\sigma^{(1)}\sim_{A_0} \sigma^{(2)}$. In this same picture, if we consider $q=1$, $h=0$ and $p=0$, then the $(0,0)$-lower degree of $\sigma^{(1)}$:
$$\deg_L^{0,0}(\sigma^{(1)})=\#\{\tau^{(1)}\,|\, \tau^{(1)}\sim_{L_0}\sigma^{(1)}\}$$
is the number of adjacent edges distinct to $\sigma^{(1)}$ in one of its vertices, so that it is $3$. If we choose $q=1$, $h=1$ and $p=0=q-h$, then the $(1,0)$-lower degree of $\sigma^{(1)}$:
$$\deg_L^{1,0}(\sigma^{(1)})=\#\{\tau^{(0)}\,|\, \tau^{(0)}\sim_{L_0}\sigma^{(1)}\}$$
is the number of vertices of $\sigma^{(1)}$, which is $2$.

Figure \ref{fig:Lower} (b) shows that the green vertex $\sigma^{(0)}$ is not lower adjacent to the blue edge $\sigma^{(1)}$ but they are $2$-upper adjacent, since there exists a triangle $\tau^{(2)}$ (in pink) containing both as faces. This gives an example that with our notion of adjacency, upper adjacency does not imply lower adjacency in general. 

In Figure \ref{fig:Lower} (c)  we have that $\sigma^{(1)}\sim_{L_0}\sigma^{(3)}$ (they intersect in the vertex $v$) but they are not upper adjacent, so that $\sigma^{(1)}\sim_{A_0} \sigma^{(3)}$. Setting $q=3$, $h=2$ and $p=0$ then $\deg_L^{2,0}(\sigma^{(3)})$ is the number of edges lower incident to the blue tetrahedron $\sigma^{(3)}$ in a vertex, ad thus it is $7$ (six coming from the edges of $\sigma^{(3)}$ plus the edge $\sigma^{(1)}$).
\end{example}

We have the following properties:
\begin{itemize}
\item If $h=1$ and $p=q-h=q-1$ then the $(1,q-1)$-lower degree of a $q$-simplex is the lower degree of the $q$-simplex of Definition \ref{d:qLUdeg}:
\begin{align*}
\deg^{1,q-1}_L(\sigma^{(q)})&:=\#\{\tau^{(q-1)}\,|\, \tau^{(q-1)}\sim_{L_{q-1}}\sigma^{(q)}\}=\\
&=\#\{(q-1)-\mbox{faces of }\sigma^{(q)}\}=q+1
\end{align*}

\item If $p=q-h$ we have that:
\begin{equation}\label{e:lowfaces}
\begin{aligned}
\deg^{h,q-h}_L(\sigma^{(q)})&:=\#\{\tau^{(q-h)}\,|\, \tau^{(q-h)}\sim_{L_{q-h}}\sigma^{(q)}\}=\\
&=\#\{(q-h)-\mbox{simplices of }\sigma^{(q)}\}=\binom{q+1}{q-h+1}
\end{aligned}
\end{equation}

\item From the very definition we have that: 
\begin{equation}\label{e:stlowdeg}
\deg^{h,p^*}_L(\sigma^{(q)})=\deg^{h,p}_L(\sigma^{(q)})-\deg^{h,p+1}_L(\sigma^{(q)})\,.
\end{equation}

\item $\deg^{p}_L(\sigma^{(q)})=\displaystyle\sum_{h=q-\dim K}^{q-p}\deg^{h,p}_L(\sigma^{(q)})$.
\item $\deg^{p^*}_L(\sigma^{(q)})=\displaystyle\sum_{h=q-\dim K}^{q-p}\deg^{h,p^*}_L(\sigma^{(q)})$.
\end{itemize}

\begin{figure}[!htb]
\centering
\includegraphics[scale=1.7]{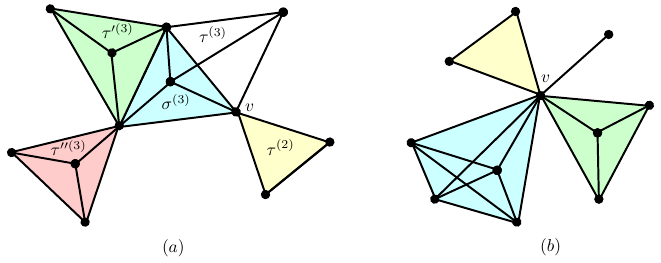}
\caption{Simplicial complexes. Computing lower and upper strict degrees.}
\label{fig:Sqhdeg}
\end{figure}

\begin{example}
Let us use Figure \ref{fig:Sqhdeg} (a) to perform a few computations for the lower degree of equation (\ref{e:stlowdeg}).
\begin{itemize}
\item Set $q=3$, $h=1$ and $p=1$. Let us compute the $(1,1)$-lower degree of the blue tetrahedron $\sigma^{(3)}$:
\begin{align*}
&\deg^{1,1}_L(\sigma^{(3)})=\#\{\mbox{incident triangles to }\sigma^{(3)} \mbox{ in an edge}\}=\\
&=\#\{\mbox{incident triangles to }\sigma^{(3)} \mbox{ in an edge and not in a triangle}\}+ \\
&+\#\{\mbox{triangles of }\sigma^{(3)}\}=\\ 
&=\deg^{1,1^*}_L(\sigma^{(3)})+\binom{q+1}{q-h+1}=\deg^{1,1^*}_L(\sigma^{(3)})+\deg^{1,2}_L(\sigma^{(3)})=\\
&=(2+3)+4=9\,.
\end{align*}
Where the $2$ comes from the two triangles of the green tetrahedron $\tau'^{(3)}$ incident to $\sigma^{(3)}$ in an edge and not in a triangle, the number $3$ comes from the three triangles of the white tetrahedron $\tau^{(3)}$ incident to $\sigma^{(3)}$ in an edge and not in a triangle, and finally the number  $4$ is the number of triangles in a tetrahedron.

\item Set $q=3$, $h=2$ and $p=0$. Let us compute the $(2,0)$-lower degree of the blue tetrahedron $\sigma^{(3)}$:
\begin{align*}
&\deg^{2,0}_L(\sigma^{(3)})=\#\{\mbox{incident edges to }\sigma^{(3)} \mbox{ in a vertex}\}=\\
&=\#\{\mbox{incident edges to }\sigma^{(3)} \mbox{ in a vertex and not in an edge}\}+ \\
&+\#\{\mbox{edges of }\sigma^{(3)}\}=\\ 
&=\deg^{2,0^*}_L(\sigma^{(3)})+\binom{q+1}{q-h+1}=\deg^{2,0^*}_L(\sigma^{(3)})+\deg^{2,1}_L(\sigma^{(3)})=\\
&=(3+4+3+2)+6=12+6=18\,.
\end{align*}
Where the first $3$ comes from the three edges of the white tetrahedron $\tau^{(3)}$ incident to $\sigma^{(3)}$ in a vertex and not in a edge, the number $4$ comes from the four edges of the green tetrahedron $\tau'^{(3)}$ incident to $\sigma^{(3)}$ in a vertex and not in a edge, the second $3$  comes from the three edges of the pink tetrahedron $\tau''^{(3)}$ incident to $\sigma^{(3)}$ in a vertex and not in a edge, the number $2$ comes from the two edges of the yellow triangle $\tau^{(2)}$ incident to $\sigma^{(3)}$ in a vertex, and $6$ is the number of edges in a tetrahedron.
\end{itemize}
\end{example}

\subsection{Generalized upper degree for simplices.}\quad 

\begin{definition}\label{d:hpUdeg}
We define the $p$-upper degree of a $q$-simplex $\sigma^{(q)}$ as the number of $q'$-simplices which are $p$-upper adjacent to $\sigma^{(q)}$: 
$$\deg^{p}_U(\sigma^{(q)}):=\#\{\sigma^{(q')}\,\colon\, \sigma^{(q)}\sim_{U_p}\sigma^{(q')}\}\,.$$

The strictly $p$-upper degree of a $q$-simplex $\sigma^{(q)}$ as the number of $q'$-simplices which are $p^*$-upper adjacent to $\sigma^{(q)}$: 
$$\deg^{p^*}_U(\sigma^{(q)}):=\#\{\sigma^{(q')}\,\colon\, \sigma^{(q)} \sim_{U_p}\sigma^{(q')}\quad \& \quad \sigma^{(q)}\not\sim_{U_{p+1}}\sigma^{(q')}\}\,.$$

We define the $(h,p)$-upper degree of a $q$-simplex $\sigma^{(q)}$ as the number of $(q+h)$-simplices which are $(h,p)$-upper adjacent to $\sigma^{(q)}$: 
$$\deg^{h,p}_U(\sigma^{(q)}):=\#\{\sigma^{(q+h)}\,\colon\, \sigma^{(q)}\sim_{U_p}\sigma^{(q+h)}\}\,.$$

The strictly $(h,p)$-upper degree of a $q$-simplex $\sigma^{(q)}$ as the number of $(q+h)$-simplices which are $p^*$-upper adjacent to $\sigma^{(q)}$: 

$$\deg^{h,p^*}_U(\sigma^{(q)}):=\#\{\sigma^{(q+h)}\,\colon\, \sigma^{(q)} \sim_{U_p}\sigma^{(q+h)}\quad \& \quad \sigma^{(q)}\not\sim_{U_{p+1}}\sigma^{(q+h)}\}\,.$$
\end{definition}

Notice that:

\begin{itemize}
\item For $q=0$, $h=1$ and $p=q+h=1$ then the $(1,1)$-upper degree of a vertex $v$ is the usual degree:
$$\deg^{1,1}_U(v)=\#\{\mbox {edges incident to }v\}=\deg (v)\,.$$
\item $\deg^{p}_U(\sigma^{(q)})=\displaystyle\sum_{h=-q}^{p-q}\deg^{h,p}_U(\sigma^{(q)})$.
\item $\deg^{p^*}_U(\sigma^{(q)})=\displaystyle\sum_{h=-q}^{p-q}\deg^{h,p^*}_U(\sigma^{(q)})$.
\end{itemize}

If $h>0$ and $p=q+h$ we recover the following known notions of degrees given in \cite{MDS15}:

\begin{itemize}
\item For $q=0$, $h=2$ and $p=q+h=2$ then the $(2,2)$-upper degree of a vertex $v$ is the ``vertex to triangle degree'' of equation (\ref{e:MDSv2deg}):
$$\deg^{2,2}_U(v)=\#\{\mbox {triangles incident to }v\}=\frac{1}{2}\sum_{j_1,j_2}|(B_1)_{i,j_1}||(B_2)_{j_1,j_2}|\,.$$
\item For $q=0$ and $p=q+h$ then the $(h,h)$-upper degree of a vertex $v$ is the ``vertex to $h$-simplex degree'' of equation (\ref{e:MDSvhdeg}):

\begin{align*}
\deg^{h,h}_U(v)&=\#\{(h)\mbox {-simplices incident to }v\}=\\
&=\frac{1}{h!}\sum_{j_1,\dots,j_h}|(B_{1})_{i,j_1}||(B_{2})_{j_1,j_2}|\cdots |(B_{h})_{j_{h-1},j_h}|\,.
\end{align*}
\item In general, for $p=q+h$, the $(h,q+h)$-upper degree of a $q$-simplex $\sigma^{(q)}$ is the ``$q$-simplex to $(q+h)$-simplex degree'' of equation (\ref{e:MDSqhdeg}):

\begin{align*}
\deg^{h,q+h}_U(\sigma^{(q)})&=\#\{(q+h)\mbox {-simplices incident to }\sigma^{(q)}\}=\\
&=\frac{1}{h!}\sum_{j_1,\dots,j_h}|(B_{q+1})_{i,j_1}||(B_{q+2})_{j_1,j_2}|\cdots |(B_{q+h})_{j_{h-1},j_h}|\,.
\end{align*}
In the following section, we will show a different way of computing these degrees by introducing a single new combinatorial Laplacian matrix.
\item The ``$q$-simplex to facets degree'' of equation (\ref{e:simplexfacetdeg}) can be given by the formula:
\begin{equation}\label{e:simplexfacetdeg2}
\sum_{h=1}^{\dim K-q}\deg^{h,(q+h)^*}_U(\sigma^{(q)})\,.
\end{equation}
\end{itemize}

Now, let us show how to compute the $(h,p^*)$-upper degree of a $q$-simplex, $\deg_U^{(h,p^*)}(\sigma^{(q)})$, and thus the facets degree. 

Let us consider the simplicial complex of Figure \ref{fig:Sqhdeg} (b), having a total of $11$ vertices ($0$-simplices), $20$ edges ($1$-simplices), $15$ triangles ($2$-simplices), $6$ tetrahedrons ($3$-simplices) and $1$ pentahedron ($4$-simplices). Its maximum dimension is therefore $4$. 
\begin{itemize}
\item Let us count the $(1,1^*)$-upper degree of the vertex $v$. We are in the case $q=0$, $h=1$ and $p=q+h=1$.  It is clear from the figure that the number of edges incident to $v$ which are not contained in a higher dimensional simplex is $1$. The count can be also performed in the following way: we start by counting the number of edges incident to $v$ (its usual degree, which is $10$), then many of these edges are contained in higher dimensional simplices, so we  star by subtracting the ones belonging to the same triangle still containing the vertex $v$, and we have to multiply this number by the number of triangles incident to $v$. That is, there are $10$ triangles incident to $v$ ($1$ alone in yellow, $3$ coming from the green tetrahedron and $6$ coming from the blue pentahedron) and there are always $2$ edges incident to a vertex in a triangle (thus we are subtracting $20$). At this step we have over subtracted edges incident to $v$: the ones that are also contained in a tetrahedron. Then we sum the number of incident tetrahedrons to $v$ (which are $1$ from the green tetrahedron and $3$ coming from the blue pentahedron) times the number of edges containing $v$ in a tetrahedron (which is always $3$). Again, we have over counted edges incident to $v$ which are contained in a pentahedron, thus we have to subtract the number of incident pentahedrons to $v$ (which is $1$, the blue one) times the number of edges containing $v$ in a pentahedron (which is $4$). We have now finished since the maximum dimension of the simplicial complex is $4$. Thus, we have:
{\small \begin{align*}
&\deg^{1,1^*}_U(v)=\#\{\mbox{incident edges to }v\}-\\
&-\#\{\mbox{incident triangles to }v\}\cdot \#\{\mbox{incident edges to }v\mbox{ in a triangle}\}+\\ 
&+\#\{\mbox{incident tetrahedrons to }v\}\cdot \#\{\mbox{incident edges to }v\mbox{ in a tetrahedron}\}-\\
&- \#\{\mbox{incident pentahedrons to }v\}\cdot \#\{\mbox{incident edges to }v\mbox{ in a pentahedron}\}=\\
&=10-(1+3+6)\cdot 2 + (1+4)\cdot 3-1\cdot 4=10-20+15-4=1\,.
\end{align*}}

\item Let us perform the computation for the $(2,2^*)$-upper degree of $v$, which is the number of incident triangles to $v$ which are not contained in a higher dimensional simplex (again from the figure we read that this number is $1$). We are in the case $q=0$, $h=2$ and $p=q+h=2$. 
{\small \begin{align*}
&\deg^{2,2^*}_U(v)=\#\{\mbox{incident triangles to }v\}-\\
&-\#\{\mbox{incident tetrahedrons to }v\}\cdot \#\{\mbox{incident triangles to }v\mbox{ in a tetrahedron}\}+\\
&+ \#\{\mbox{incident pentahedrons to }v\}\cdot \#\{\mbox{incident triangles to }v\mbox{ in a pentahedron}\}=\\
&=(1+3+6)- (1+4)\cdot 3-1\cdot 6=10-15+6=1\,.
\end{align*}}
\end{itemize}
A straightforward generalization of this strategy produces the following formula for the strict upper degree in terms of the upper degree (see definition \ref{d:hpUdeg}):

\begin{equation}\label{e:strictdegform}
\deg^{h,(q+h)^*}_U(\sigma^{(q)})=\sum_{i=0}^{\dim K-(q+h)} (-1)^i \deg^{h+i,q+h+i}_U(\sigma^{(q)})\cdot \binom{h+i}{h}\,.
\end{equation}

\begin{remark}
The combinatorial number $\binom{h+i}{h}=\binom{h+i}{i}$ comes from counting the number of $(q+h)$-simplices having $\sigma^{(q)}$ as a $q$-face in a single $(q+h+i)$-simplex. Since we have a total of $q+h+i+1$ points, and there are $q+1$ in $\sigma^{(q)}$, we have $(q+h+i+1)-(q+1)=h+i$ to form, joint with the $q+1$ points of $\sigma^{(q)}$, a $(q+h)$-simplex ($q+h+1$ points). Thus, we need to group the $h+i$ points in subsets of $i$ points.
 \end{remark}

Formula (\ref{e:strictdegform}) allows to compute the ``$q$-simplex to facets degree'' (a sum of the strict $(q,q+h)$-upper degrees) of equation (\ref{e:simplexfacetdeg2}) in terms of the $(h+i,q+h+i)$-upper degrees. We will show in the following section how to compute these generalized upper degrees as the diagonal entries of a single (multi parameter) combinatorial Laplacian matrix, instead of as a product of several entries of different matrices associated with distinct $q$-boundary operators, as stated in \cite{MDS15}.

\subsection{Generalized adjacency degree for simplices and simplicial degree.}\quad 

Let us finish this section by defining the generalized adjacency degrees associated with Definition \ref{d:qhAdj} and a general simplicial degree of a simplex.

\begin{definition}\label{d:Adjdeg}\quad 
\begin{enumerate}

\item We define the $p$-adjacency degree of a $q$-simplex $\sigma^{(q)}$ by:  
$$\deg^p_A(\sigma^{(q)}):=\#\{\sigma^{(q')} \,|\,\sigma^{(q)}\sim_{A_p} \sigma^{(q')}\}\,.$$

\item We define the maximal $p$-adjacency degree of a $q$-simplex $\sigma^{(q)}$ by:  
$$
\deg^{p*}_A(\sigma^{(q)}):=\#\{\sigma^{(q')} \,|\,\sigma^{(q')}\sim_{A_{p^*}} \sigma^{(q)}\}\,.
$$
\end{enumerate}
\end{definition}

\begin{remark}
Let us remark that with the $p$-adjacency degree we might be over counting certain simplices in the following sense: imaging we a triangle $\sigma^{(2)}$ to which another triangle $\sigma'^{(2)}$ is $0$-adjacent in a vertex $v$, then, since there are two edges ($1$-faces) of $\sigma'^{(2)}$ that are $0$-adjacent to $\sigma^{(2)}$, they are also being counted with the $p$-adjacency degree. That is, we are counting the community $\sigma'^{(2)}$ and two of its $1$-faces. This suggests that we should use in certain applications the maximal $p$-adjacency degree for a $q$-simplex, which only counts the maximal collaborative communities $p$-adjacent to a given simplex. 
\end{remark}

\begin{example}
In Figure \ref{fig:Sqhdeg} (a) we have that there are two triangles of $\tau^{'(3)}$ which are $1$-adjacent to $\sigma^{(3)}$, but which are not maximal $1$-adjacent to $\sigma^{(3)}$; The tetrahedron $\tau^{'(3)}$ is maximal $1$-adjacent to $\sigma^{(3)}$; the tetrahedron $\sigma^{(3)}$ is maximal $2$-adjacent to $\tau^{(3)}$.
\end{example}

If one would like to count all the collaborations of a simplex with different simplicial communities, both the ones collaborating with its faces and also the bigger simplicial communities on which the simplex is nested in, we can define a two parameter simplicial degree using both the adjacency degree and the upper degree as follows.

\begin{definition}\label{d:p1p2deg}
Given $p_1>q$ and $p_2<q$, we define the $(p_1,p_2^*)$-degree of a $q$-simplex $\sigma^{(q)}$ by:
$$\deg^{(p_1,p_2^*)}(\sigma^{(q)}):=\deg^{p_1}_U(\sigma^{(q)})+\deg^{p_2^*}_A(\sigma^{(q)})\,.$$
Similarly, for strict upper degree, we define the $(p_1^*,p_2^*)$-degree of a $q$-simplex $\sigma^{(q)}$ by:
$$\deg^{(p_1^*,p_2^*)}(\sigma^{(q)}):=\deg^{p_1^*}_U(\sigma^{(q)})+\deg^{p_2^*}_A(\sigma^{(q)})\,.$$
\end{definition}

Finally, let us propose a definition of maximal simplicial degree of a $q$-simplex, which counts all the maximal communities collaborating with the faces of the $q$-simplex (the ones that are maximal $p$-adjacent) and also the maximal communities to which the $q$-simplex belongs to (these last being strictly upper adjacent). That is to say, the maximal simplicial degree of a $q$-simplex counts the number of distinct facets which the $p$-faces of the $q$-simplex belongs to (for $p<q$) and also counts the distinct facets (different from the above ones) which the $q$-simplex belong to.

\begin{definition}\label{d:simpdeg}
We define the maximal simplicial degree of $\sigma^{(q)}$ by:
$$\deg^*(\sigma^{(q)})=\deg^*_A(\sigma^{(q)})+\deg^*_U(\sigma^{(q)})\,,$$ 
where: 
$$\deg^*_A(\sigma^{(q)}):=\sum_{p=0}^{q-1}\deg^{p^*}_A(\sigma^{(q)})\,;\quad \deg^*_U(\sigma^{(q)}):=\sum_{h=1}^{\dim K-q}\deg^{h,(q+h)^*}_U(\sigma^{(q)})\,.$$
We call  $\deg^*_U(\sigma^{(q)})$ the maximal upper simplicial degree of $\sigma^{(q)}$. 
\end{definition}

\begin{example}
Let $e$ be the edge ($1$-simplex) given by the intersection of the tetrahedra $\tau'^{(3)}$ and $\sigma^{(3)}$ in Figure \ref{fig:Sqhdeg} (a), $e=\tau'^{(3)} \cap \sigma^{(3)}$. By definition $e$ is contained in the facets $\tau'^{(3)}$ and $\sigma^{(3)}$, so its strict upper degree is $\deg^*_U(e)=2$. Moreover, the edge $e$ has two incident nodes (its $0$-faces) of which one is contained in the facet $\tau''^{(3)}$ and the other in the facet $\tau^{(3)}$, and thus its maximal adjacent degree is $\deg^*_A(e)=2$. Therefore, the maximal simplicial degree of $e$ is $\deg^*(e)=4$.
\end{example}

\section{The multi-combinatorial Laplacian}\label{s:qhLap}\quad 

We will define in this section generalized multi parameter boundary and coboundary operators in an oriented simplicial complex, and a new higher-order multi-combinatorial Laplacian will be introduced. They will give us a way to effectively compute all the higher-order degrees of the previous section.

\subsection{The generalized boundary operator.}

Let $\sigma^{(q)}$ be a $q$-simplex spanned by the set of points $\{v_0,\dots,v_q\}$. Given the $(q-h)$-face $\sigma^{(q-h)}$ of $\sigma^{(q)}$ spanned by the set of vertices $\{v_{0},\dots,\widehat v_{{j_1}},\dots,\widehat v_{{j_h}},\dots,v_{q}\}$ let us  denote by $\epsilon_{j_1\cdots j_h}$ the permutation 
$$\begin{pmatrix}
0&\cdots& h-1&h&\cdots&q \\ 
{j_1}&\cdots&  {j_h}&0&\cdots & {q}
\end{pmatrix}\,.$$  As oriented $q$-simplex, $\sigma^{(q)}$ is represented by $[v_{\eta(0)},\dots,v_{\eta(q)}]$, for some permutation $\eta$ in the set of its vertices. 

\begin{definition}\label{d:qhBoundOp}
We define the {$(q,h)$-boundary operator} $$\partial_{q,h}\colon \calC_q(K)\to \calC_{q-h}(K)$$ as the homomorphism given as the linear extension of:

$$\partial_{q,h}([v_{\eta(0)},\dots, v_{\eta(q)}])=\sum_{j_{1},\dots,j_{h}} \sign(\eta)\sign(\epsilon_{j_1\cdots j_h})[v_0,\dots,\widehat{v_{j_1}},\dots, \widehat{v_{j_{h}}},\dots ,v_q]$$
where $[v_0,\dots,\widehat{v_{j_1}},\dots, \widehat{v_{j_{h}}},\dots ,v_q]$ denotes the oriented $q$-simplex obtained from removing the vertices $v_{j_1},\dots, v_{j_{h}}$ in $[v_0,\dots, v_q]$. 
\end{definition}

Note that $[v_{\eta(0)},\dots,v_{\eta(q)}]=[v_{\eta'(0)},\dots,v_{\eta'(q)}]$ if and only if $\sign (\eta)=\sign (\eta')$, so that $[v_{\eta(0)},\dots,v_{\eta(q)}]=\sign(\eta)\,[v_0,\dots,v_q]$. Then, this operator is well defined and $\partial_{q,h}(-\sigma^{(q)})=-\partial_{q,h}(\sigma^{(q)})$. Moreover, for $h=1$ the operator $\partial_{q,h}$ is the ordinary $q$-boundary operator $\partial_q$.

\begin{figure}[!htb]
\centering
\includegraphics[scale=1.3]{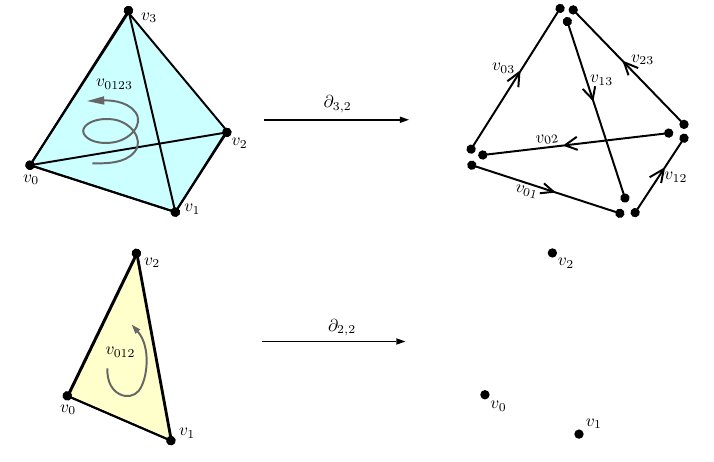}
\caption{Examples of $(q,h)$-boundary operators.}
\label{fig:Lqh}
\end{figure}

Given $\tau^{(p)}$ a $p$-simplex and $\sigma^{(q)}$ a $q$-face in $K$, with $q<p$, we denote by $\sign\big(\tau^{(p)},\sigma^{(q)}\big)$ the coefficient of $\sigma^{(q)}$ in the sum $\partial_{p,p-q}(\tau^{(p)})$.

\begin{definition}
Let $\sigma_i^{(q)}$ and $\sigma_j^{(q')}$ be two simplices which are $p$-upper adjacent. Let $\tau^{(p)}$ be a common upper $p$-simplex. We say that $\sigma_i^{(q)}$ and $\sigma_j^{(q')}$ are {upper similarly oriented with respect to $\tau^{(p)}$} if $\sign\big(\tau^{(p)},\sigma^{(q)}\big)=\sign\big(\tau^{(p)},\sigma^{(q')}\big)$.

We shall denote it by $\sigma_i^{(q)}\sim_{U_{\tau^{(p)}}^+}\sigma_j^{(q')}$. If the signs are different, we say that they are { dissimilarly oriented} with respect to $\tau^{(p)}$. We shall denote it by $\sigma_i^{(q)}\sim_{U_{\tau^{(p)}}^-}\sigma_j^{(q')}$.
\end{definition}
\begin{remark}
The equality or inequality of $\sign\big(\tau^{(p)},\sigma^{(q)}\big)$ and $\sign\big(\tau^{(p)},\sigma^{(q')}\big)$ does not depend on the orientation of $\tau^{(p)}$ but only on the orientations of $\sigma^{(q)}$ and $\sigma^{(q')}$.
\end{remark}
\begin{remark}
For $h=1$ this definition recovers Definition \ref{d:qUsimor}.
\end{remark}

Let $\sigma_i^{(q)},\sigma_j^{(q')}$  and $\tau^{(p)}$ oriented simplices.
\begin{definition}\label{d:psigU}
 We define the {upper sign} of $\sigma_i^{(q)}$ and $\sigma_j^{(q')}$ with respect to $\tau^{(p)}$ as the following function:
$$
\sig_U(\sigma_i^{(q)},\sigma_j^{(q')};\tau^{(p)}):=\begin{cases}
0 & \mbox{ if } \sigma_i^{(q)}\cup\sigma_j^{(q')}\nsubseteq\tau^{(p)}\\
1 & \mbox{ if } \sigma_i^{(q)}\sim_{U_{\tau^{(p)}}^+}\sigma_j^{(q')} 
\\
-1 & \mbox{ if } \sigma_i^{(q)}\sim_{U_{\tau^{(p)}}^-}\sigma_j^{(q')}
\end{cases}
$$
\end{definition}
Note that if $\tau^{(p)}$ is a common upper $p$-simplex to $\sigma_i^{(q)}$ and $\sigma_j^{(q')}$, then $$\sig_U(\sigma_i^{(q)},\sigma_j^{(q')};\tau^{(p)})=\sign\big(\tau^{(p)},\sigma^{(q)}\big)\cdot\sign\big(\tau^{(p)},\sigma^{(q')}\big)\,,$$ which does not depend on the orientation of $\tau^{(p)}$. This justifies the following definition.
\begin{definition}
Let $\sigma_i^{(q)}$ and $\sigma_j^{(q')}$ two simplices. 

We define the $p$-{upper oriented degree} of $\sigma_i^{(q)}$ and $\sigma_j^{(q')}$ as the following sum:
$$\odeg^p_U(\sigma_i^{(q)},\sigma_j^{(q')}):=\frac{1}{2}\sum_{\tau^{(p)}\in \widetilde{S}_p(K)}\sig_U(\sigma_i^{(q)},\sigma_j^{(q')};\tau^{(p)})\,.$$
where by $\widetilde{S}_p(K)$ we denote the set of oriented $p$-simplices of the simplicial complex $K$.
\end{definition}

\begin{remark}
Note that we are dividing by $2$ since, as mentioned above, we have $\sig_U(\sigma_i^{(q)},\sigma_j^{(q')};\tau^{(p)})=\sig_U(\sigma_i^{(q)},\sigma_j^{(q')};-\tau^{(p)})$.
\end{remark}

Let us point out that if $p=q+1$ and $\sigma_i^{(q)}\sim_{U_{q+1}}\sigma_j^{(q)}$, then there exists a unique common $(q+1)$-simplex $\tau^{(q+1)}$, and thus:
\begin{align*}
\odeg^{q+1}_U&(\sigma_i^{(q)},\sigma_j^{(q)})=\sig_U(\sigma_i^{(q)},\sigma_j^{(q)};\tau^{(q+1)})=\\
&=\begin{cases}
1 & \mbox{ if } \sigma_i^{(q)}\sim_{U_{q+1}}\sigma_j^{(q)} \mbox{ and similarly oriented w.r.t. }\tau^{(q+1)}\,.\\
-1 & \mbox{ if } \sigma_i^{(q)}\sim_{U_{q+1}}\sigma_j^{(q)} \mbox{ and dissimilarly oriented w.r.t. }\tau^{(q+1)}\,,
\end{cases}
\end{align*}
which therefore recovers the notion of similarly and dissimilarly oriented of \cite{G02}.

Let us now give the analogous definition for the lower adjacency. 

\begin{definition}
Let $\sigma_i^{(q)}$ and $\sigma_j^{(q')}$ two simplices which are $p$-lower adjacent and  $\tau^{(p)}$ be a common lower $p$-face. We say that $\sigma_i^{(q)}$ and $\sigma_j^{(q')}$ are { lower similarly oriented with respect to $\tau^{(p)}$} if $\sign\big(\sigma^{(q)},\tau^{(p)}\big)=\sign\big(\sigma^{(q')},\tau^{(p)}\big)$. If the signs are different, we say that they are {dissimilarly oriented} with respect to $\tau^{(p)}$. As before, we shall denote it by $\sigma_i^{(q)}\sim_{L_{\tau^{(p)}}^+}\sigma_j^{(q')}$ and $\sigma_i^{(q)}\sim_{L_{\tau^{(p)}}^-}\sigma_j^{(q')}$, respectively.
\end{definition}

\begin{remark}
The equality or inequality of $\sign\big(\sigma^{(q)},\tau^{(p)}\big)$ and $\sign\big(\sigma^{(q')},\tau^{(p)}\big)$ does not depend on the orientation of $\tau^{(p)}$ but only on the orientations of $\sigma^{(q)}$ and $\sigma^{(q')}$.
\end{remark}

\begin{remark}
For $p=q-1$ this definition recovers Definition \ref{d:qLsimor}.
\end{remark}

\begin{definition}\label{d:psigL}
Let $\sigma_i^{(q)}$ and $\sigma_j^{(q')}$ two simplices. We define the {lower sign} of $\sigma_i^{(q)}$ and $\sigma_j^{(q')}$ with respect to a $p$-simplex $\tau^{(p)}$ as the following function:
$$
\sig_L(\sigma_i^{(q)},\sigma_j^{(q')};\tau^{(p)}):=\begin{cases}
0 & \mbox{ if } \tau^{(p)}\nsubseteq\sigma_i^{(q)}\cap\sigma_j^{(q')}\\
1 & \mbox{ if } \sigma_i^{(q)}\sim_{L_{\tau^{(p)}}^+}\sigma_j^{(q')} 
\\
-1 & \mbox{ if } \sigma_i^{(q)}\sim_{L_{\tau^{(p)}}^-}\sigma_j^{(q')} 
\end{cases}
$$

\end{definition}
As we pointed out for the upper sign, notice that if $\tau^{(p)}$ is a common lower $p$-face of $\sigma_i^{(q)}$ and $\sigma_j^{(q')}$, then $$\sig_L(\sigma_i^{(q)},\sigma_j^{(q')};\tau^{(p)})=\sign\big(\sigma^{(q)},\tau^{(p)}\big)\cdot\sign\big(\sigma^{(q')},\tau^{(p)}\big)\,,$$ which does not depend on the orientation of $\tau^{(p)}$. This justifies the following definition.
\begin{definition}
Let $\sigma_i^{(q)}$ and $\sigma_j^{(q')}$ two simplices. We define the $p$-{lower oriented degree} of $\sigma_i^{(q)}$ and $\sigma_j^{(q')}$ as the following sum:
$$\odeg^p_L(\sigma_i^{(q)},\sigma_j^{(q')}):=\frac{1}{2}\sum_{\tau^{(p)}\in \widetilde{S}_p(K)}\sig_L(\sigma_i^{(q)},\sigma_j^{(q')};\tau^{(p)})\,.$$
where by $\widetilde{S}_p(K)$ we denote the set of oriented $p$-simplices of the simplicial complex $K$.
\end{definition}

\begin{remark}
Note that we are dividing by $2$ since, as mentioned above, we have $\sig_L(\sigma_i^{(q)},\sigma_j^{(q')};\tau^{(p)})=\sig_L(\sigma_i^{(q)},\sigma_j^{(q')};-\tau^{(p)})$.
\end{remark}

\begin{figure}[!htb]
\centering
\includegraphics[scale=1.]{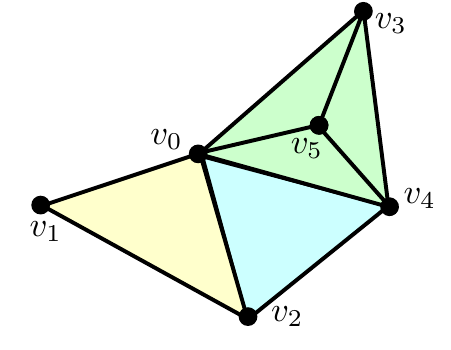}
\caption{Simplicial complexes. Computing oriented degrees.}
\label{fig:Odeg}
\end{figure}

\begin{example}\label{ex:odeg}
Let $K$ be the simplicial complex given by Figure \ref{fig:Odeg}. Let $$C_0(K)=\{v_0,v_1,\dots,v_5\}$$ be the oriented set of vertices and $$C_2(K)=\langle v_{012},v_{024},v_{034},v_{035},v_{045},v_{345}\rangle$$ the free abelian group generated by triangles (choosing one of its orientations) where we are writing $v_{ijk}$ for the oriented triangle $[v_i,v_j,v_k]$). Let 
\begin{align*}
\partial_{2,2} \colon C_2(K)&\to C_0(K)\\
v_{ijk}&\mapsto  \sign(\bar\epsilon_{j,k})v_{i}+ \sign(\bar\epsilon_{i,k})v_{j}+ \sign(\bar\epsilon_{i,j})v_{k}
\end{align*}
be the $(2,2)$-boundary operator, with
$$\bar\epsilon_{j,k}=
\begin{pmatrix}
i&j&k \\ 
j&k&i
\end{pmatrix},\, \bar\epsilon_{i,k}=
\begin{pmatrix}
i&j&k \\ 
i&k&j
\end{pmatrix}\text{ y }\,\bar\epsilon_{i,j}=
\begin{pmatrix}
i&j&k \\ 
i&j&k
\end{pmatrix}$$
Thus: 
$$\partial_{2,2}(v_{ijk})=v_i-v_j+v_k$$
We have that:

\begin{align*}
\odeg_U^2(v_0,v_2)&:=\sum_{\tau^{(2)}\in C_2(K)}\sig_U(v_0,v_2;\tau^{(2)})=\\
&\,\,=\sig_U(v_0,v_2;v_{012})+\sig_U(v_0,v_2;v_{024})=1-1=0
\end{align*}
\begin{align*}
\odeg_L^0(v_{012},v_{024})&:=\sum_{\tau^{(0)}\in C_0(K)}\sig_L(v_{012},v_{024};\tau^{(0)})=\\
&\,\,=\sig_L(v_{012},v_{024};v_0)+\sig_L(v_{012},v_{024};v_2)=1-1=0
\end{align*}

\begin{align*}
\odeg_U^2(v_0,v_3)&:=\sum_{\tau^{(2)}\in C_2(K)}\sig_U(v_0,v_3;\tau^{(2)})=\\
&\,\,=\sig_U(v_0,v_3;v_{034})+\sig_U(v_0,v_3;v_{035})=-1-1=-2\,.
\end{align*}
\end{example}

Assume that $\tau^{(q)}$ is a $q$-simplex and $\sigma^{(p)}$ is a $p$-face of $\tau^{(q)}$. By the above definitions we have:
\begin{enumerate}\itemsep.25cm
\item $\odeg_U^q(\sigma^{(p)},\tau^{(q)})=\sig_U(\sigma^{(p)},\tau^{(q)};\tau^{(q)})$.
\item $\odeg_L^p(\sigma^{(p)},\tau^{(q)})=\sig_L(\sigma^{(p)},\tau^{(q)};\sigma^{(p)})$.
\item $\sig_U(\sigma^{(p)},\tau^{(q)};\tau^{(q)})=\sign\big(\tau^{(q)},\sigma^{(p)}\big)=\sig_L(\sigma^{(p)},\tau^{(q)};\sigma^{(p)})$.
\end{enumerate}

Then one obtains the following result. 
\begin{proposition}\label{p:bounop}
\begin{equation}\label{eq:boundary}
\partial_{q,h}(\tau^{(q)})=\displaystyle\sum_{\sigma^{(q-h)}\in S_{q-h}(K)}\odeg_L^{q-h}(\tau^{(q)},\sigma^{(q-h)})\sigma^{(q-h)}.
\end{equation}
\end{proposition}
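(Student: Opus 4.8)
The plan is to unwind both sides of the claimed identity on a single oriented basis simplex and match coefficients of each non-oriented $(q-h)$-simplex $\sigma^{(q-h)}\in S_{q-h}(K)$. Since $\partial_{q,h}$ is defined as the linear extension of its values on oriented $q$-simplices, it suffices to verify \eqref{eq:boundary} when $\tau^{(q)}$ is a single oriented simplex, say $\tau^{(q)}=[v_0,\dots,v_q]$ (the general oriented representative $[v_{\eta(0)},\dots,v_{\eta(q)}]$ differs only by the scalar $\sign(\eta)$, and since both sides are $\R$-linear and satisfy $F(-\tau)=-F(\tau)$ — already observed for $\partial_{q,h}$ right after Definition \ref{d:qhBoundOp}, and true for the right-hand side because $\odeg_L^{q-h}$ is linear in its first argument — the identity for the base representative implies it for all representatives).

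First I would fix an oriented $(q-h)$-simplex $\sigma^{(q-h)}$ appearing on the right-hand side and compute its coefficient there, namely $\odeg_L^{q-h}(\tau^{(q)},\sigma^{(q-h)})$. If $\sigma^{(q-h)}$ is not a face of $\tau^{(q)}=[v_0,\dots,v_q]$, then $\sigma^{(q-h)}\nsubseteq \tau^{(q)}\cap \tau^{(q)}$, so $\sig_L(\tau^{(q)},\tau^{(q)};\sigma^{(q-h)})$ vanishes, and more to the point $\sig_L(\tau^{(q)},\sigma^{(q-h)};\rho)=0$ for every $p$-simplex $\rho$ because no common lower face of the required type exists; hence the coefficient is $0$, matching the left-hand side which only produces faces of $\tau^{(q)}$. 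If $\sigma^{(q-h)}$ \emph{is} a face of $\tau^{(q)}$, I invoke the three bulleted identities stated just before the Proposition: by item (2), $\odeg_L^{q-h}(\tau^{(q)},\sigma^{(q-h)})=\sig_L(\tau^{(q)},\sigma^{(q-h)};\sigma^{(q-h)})$, and by item (3) this equals $\sign\big(\tau^{(q)},\sigma^{(q-h)}\big)$, the coefficient of $\sigma^{(q-h)}$ in $\partial_{q,q-(q-h)}(\tau^{(q)})=\partial_{q,h}(\tau^{(q)})$ by the definition of the symbol $\sign(\tau^{(p)},\sigma^{(q)})$ introduced right after the boundary-operator figure. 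That is exactly the coefficient of $\sigma^{(q-h)}$ on the left-hand side.

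Concretely, unpacking Definition \ref{d:qhBoundOp} for $\tau^{(q)}=[v_0,\dots,v_q]$ (so $\eta=\mathrm{id}$): $\partial_{q,h}([v_0,\dots,v_q])=\sum_{j_1,\dots,j_h}\sign(\epsilon_{j_1\cdots j_h})[v_0,\dots,\widehat{v_{j_1}},\dots,\widehat{v_{j_h}},\dots,v_q]$, where the sum ranges over ways of deleting $h$ of the $q+1$ vertices. Each summand is one $(q-h)$-face of $\tau^{(q)}$, each such face arises from exactly one choice of the deleted index set (once we fix the increasing convention on $j_1<\dots<j_h$, or account for the $h!$ orderings against the normalization — I would follow whichever convention the paper's sum over $j_1,\dots,j_h$ uses, consistent with the earlier examples), and its signed coefficient is by definition $\sign(\tau^{(q)},\sigma^{(q-h)})$. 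Summing over all $(q-h)$-faces and using the coefficient computation above gives $\partial_{q,h}(\tau^{(q)})=\sum_{\sigma^{(q-h)}\subseteq\tau^{(q)}}\sign(\tau^{(q)},\sigma^{(q-h)})\,\sigma^{(q-h)}=\sum_{\sigma^{(q-h)}\in S_{q-h}(K)}\odeg_L^{q-h}(\tau^{(q)},\sigma^{(q-h)})\,\sigma^{(q-h)}$, which is \eqref{eq:boundary}.

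The main obstacle, and the only genuinely delicate point, is the bookkeeping of signs and orientations: one must check that $\sign(\tau^{(q)},\sigma^{(q-h)})$ — defined via $\partial_{p,p-q}$ applied to $\tau^{(p)}$ with a chosen orientation of $\sigma^{(q-h)}$ — is consistent with the sign $\sig_L(\tau^{(q)},\sigma^{(q-h)};\sigma^{(q-h)})=\sign(\tau^{(q)},\sigma^{(q-h)})\cdot\sign(\sigma^{(q-h)},\sigma^{(q-h)})$ and that the self-pairing $\sign(\sigma^{(q-h)},\sigma^{(q-h)})$ equals $+1$, so that item (3) holds as stated; and that passing from the basis representative $[v_0,\dots,v_q]$ to an arbitrary $[v_{\eta(0)},\dots,v_{\eta(q)}]$ introduces the same factor $\sign(\eta)$ on both sides. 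All of these are consequences of the remarks already recorded in the excerpt (independence of $\sig_L$ from the orientation of the common face, and $[v_{\eta(0)},\dots,v_{\eta(q)}]=\sign(\eta)[v_0,\dots,v_q]$), so the proof is essentially a careful transcription rather than a new idea; I would present it in the two-case format (face / non-face) above, citing items (1)--(3) for the face case.
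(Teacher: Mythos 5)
Your proof is correct and follows exactly the route the paper intends: the Proposition is stated there without an explicit proof, as an immediate consequence of the three bulleted identities, and your coefficient-by-coefficient matching (face vs.\ non-face cases, using items (2) and (3) to identify $\odeg_L^{q-h}(\tau^{(q)},\sigma^{(q-h)})$ with $\sign\big(\tau^{(q)},\sigma^{(q-h)}\big)$) is precisely that argument written out. The orientation bookkeeping you flag is handled correctly, so nothing is missing.
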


Let us now show that there exists a coboundary operator in the oriented simplicial complex $K$, and that it can be written down in terms of the oriented degree. 

\begin{proposition}\label{p:cobounop}
Given an oriented simplicial complex $K$, let $\partial_{_{q,h}}$ be the $(q,h)$-boundary operator. There exists a unique homomorphism: $$\partial_{q,h}^*\colon C_{q-h}(K)\to C_q(K)$$ 
defined as: 
\begin{equation}\label{eq:adjoint}
\partial_{q,h}^*(\sigma^{(q-h)})=\displaystyle\sum_{\tau^{(q)}\in S_q(K)}\odeg_U^{q}(\sigma^{(q-h)},\tau^{(q)})\tau^{(q)}
\end{equation}
and such that $\partial_{_{q,h}}$ and $\partial_{q,h}^*$ are adjoint operators.
\end{proposition}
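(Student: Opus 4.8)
The plan is to construct $\partial_{q,h}^*$ explicitly by the formula \eqref{eq:adjoint} and then verify directly that it is adjoint to $\partial_{q,h}$ with respect to the fixed inner product (the one making elementary chains orthonormal), which by non-degeneracy of the pairing also gives uniqueness. First I would recall that for any linear map $\partial\colon C_q(K)\to C_{q-h}(K)$ between finite-dimensional inner-product spaces there is a unique adjoint $\partial^*$, characterised by $\langle \partial(x),y\rangle = \langle x,\partial^*(y)\rangle$ for all $x,y$; so the only real content is the \emph{closed formula} for $\partial_{q,h}^*$ in terms of the oriented degree. Thus the statement reduces to checking, for basis elements $\tau^{(q)}\in S_q(K)$ and $\sigma^{(q-h)}\in S_{q-h}(K)$, that
\begin{equation*}
\langle \partial_{q,h}(\tau^{(q)}),\sigma^{(q-h)}\rangle \;=\; \langle \tau^{(q)},\partial_{q,h}^*(\sigma^{(q-h)})\rangle,
\end{equation*}
where the right-hand side is defined by \eqref{eq:adjoint}. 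Since the basis is orthonormal, the left-hand side is exactly the coefficient of $\sigma^{(q-h)}$ in $\partial_{q,h}(\tau^{(q)})$ and the right-hand side is exactly the coefficient of $\tau^{(q)}$ in $\partial_{q,h}^*(\sigma^{(q-h)})$, i.e.\ $\odeg_U^{q}(\sigma^{(q-h)},\tau^{(q)})$.

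So the heart of the argument is the identity
\begin{equation*}
\big(\text{coeff.\ of }\sigma^{(q-h)}\text{ in }\partial_{q,h}(\tau^{(q)})\big)\;=\;\odeg_U^{q}(\sigma^{(q-h)},\tau^{(q)}).
\end{equation*}
I would handle the two cases separately. If $\sigma^{(q-h)}$ is not a face of $\tau^{(q)}$, then the coefficient on the left is $0$ by Definition \ref{d:qhBoundOp}, and on the right $\sig_U(\sigma^{(q-h)},\tau^{(q)};\tau^{(q)}) = 0$ since $\sigma^{(q-h)}\cup\tau^{(q)}\nsubseteq\tau^{(q)}$, while every other $\tau^{(q)}$-term in the oriented-degree sum contributes $\sig_U(\sigma^{(q-h)},\tau^{(q)};\rho^{(q)})$ with $\rho^{(q)}=\tau^{(q)}$ being the only $q$-simplex that could contain $\tau^{(q)}$; hence both sides vanish. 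If $\sigma^{(q-h)}$ \emph{is} a face of $\tau^{(q)}$, then by construction $\odeg_U^{q}(\sigma^{(q-h)},\tau^{(q)}) = \sig_U(\sigma^{(q-h)},\tau^{(q)};\tau^{(q)}) = \sign(\tau^{(q)},\sigma^{(q-h)})$, which by the remark immediately before Definition \ref{d:qhBoundOp} (the definition of $\sign(\tau^{(p)},\sigma^{(q)})$ as the coefficient of $\sigma^{(q)}$ in $\partial_{p,p-q}(\tau^{(p)})$) is precisely the coefficient of $\sigma^{(q-h)}$ in $\partial_{q,h}(\tau^{(q)})$. This is really just unwinding the three bullet-pointed identities stated just before Proposition \ref{p:bounop} — in fact item (3) there, $\sig_U(\sigma^{(p)},\tau^{(q)};\tau^{(q)})=\sign(\tau^{(q)},\sigma^{(p)})$, together with the observation that in $\odeg_U^{q}(\sigma^{(q-h)},\tau^{(q)})$ only the term $\tau^{(p)}=\tau^{(q)}$ is nonzero — so I would cite those and assemble.

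The mildly delicate point — the step I expect to be the main obstacle — is the factor of $\tfrac12$ and the indexing conventions in $\odeg_U^q$: one must be sure that when summing over \emph{oriented} $p$-simplices $\tau^{(p)}\in\widetilde S_p(K)$, both orientations of the relevant $q$-simplex $\tau^{(q)}$ occur and each contributes the same value $\sign(\tau^{(q)},\sigma^{(q-h)})^2$-normalised sign, so that the $\tfrac12$ exactly cancels the double count, yielding a single well-defined integer independent of orientation choice; the remarks following Definitions \ref{d:psigU} already record this invariance, so I would invoke them rather than recompute. After that, uniqueness is automatic: if $\phi,\psi\colon C_{q-h}(K)\to C_q(K)$ both satisfy $\langle\partial_{q,h}(x),y\rangle=\langle x,\phi(y)\rangle=\langle x,\psi(y)\rangle$ for all $x$, then $\langle x,(\phi-\psi)(y)\rangle=0$ for all $x$, so $(\phi-\psi)(y)=0$ for all $y$ by non-degeneracy of the inner product. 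Hence $\partial_{q,h}^*$ is the unique adjoint and is given by \eqref{eq:adjoint}, completing the proof.
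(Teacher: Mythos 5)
Your proposal is correct and takes essentially the same approach as the paper: both verify adjointness by matching coefficients on the orthonormal bases of elementary chains, with the key step being the identity $\sig_U(\sigma^{(q-h)},\tau^{(q)};\tau^{(q)})=\sign\big(\tau^{(q)},\sigma^{(q-h)}\big)=\sig_L(\sigma^{(q-h)},\tau^{(q)};\sigma^{(q-h)})$ from the bulleted list preceding Proposition \ref{p:bounop}, and uniqueness following from non-degeneracy of the inner product. Your explicit treatment of the two cases and of the factor $\tfrac12$ in $\odeg_U^q$ is a slightly more detailed unwinding of what the paper leaves implicit, but the argument is the same.
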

\begin{proof}

Consider $\{\tau_1,\dots,\tau_n\}$ and $\{\sigma_1,\dots,\sigma_m\}$ basis of $C_q(K)$ and $C_{q-h}(K)$ respectively. For all $\sigma_i\subseteq \tau_j$ we have 
$$\langle\sig_U(\sigma_i,\tau_j;\tau_j)\tau_j,\tau_k\rangle=\langle\sig_L(\sigma_i,\tau_j;\sigma_i)\tau_j,\tau_k\rangle=\langle\tau_j,\sig_L(\sigma_i,\tau_j;\sigma_i)\tau_k\rangle$$
where $\langle\,,\,\rangle$ denote the standard inner product 
$\Big\langle \sum_i\alpha_i\tau_i,\sum_j\beta_j\tau_j\Big\rangle=
\sum_{k}\alpha_k\beta_k\,.$

Therefore, 
$$
\begin{aligned}
\Big\langle\sum_{\tau_j}\odeg_U^q(\sigma_i,\tau_j)\tau_j,\tau_k\Big\rangle&=\sig_U(\sigma_i,\tau_k;\tau_k)=\sig_L(\sigma_i,\tau_k;\sigma_i)=\\
&=\Big\langle\sigma_i,\sum_{\sigma_j}\odeg_L^{q-h}(\sigma_j,\tau_k)\sigma_j\Big\rangle=\langle\sigma_i,\partial_{q,h}(\tau_k)\rangle
\end{aligned}$$
and the result follows.
\end{proof}

\subsection{The multi-combinatorial Laplacian.}
Once we have the boundary and coboundary operators in an oriented simplicial complex $K$, we can define a multi-combinatorial Laplacian and show how it computes some higher-order degrees of simplices.  

\begin{definition}\label{d:multiLap}
Let $q,h,h'$ non-negative integers. We define the {$(q,h,h')$-Laplacian operator} $$\Delta_{q,h,h'}\colon C_q(K)\to C_q(K)$$ as the following operator:
$$\Delta_{q,h,h'}:=\partial_{q+h,h}\circ \partial^*_{q+h,h}+\partial^*_{q,h'}\circ \partial_{q,h'}\,.$$

$\Delta^U_{q,h}=\partial_{q+h,h}\circ \partial^*_{q+h,h}\colon C_{q}(K)\to C_q(K)$ is named the upper $(q,h)$-Laplacian operator and $\Delta^L_{q,h'}=\partial^*_{q,h'}\circ \partial_{q,h'}\colon C_q(K)\to C_{q}(K)$ is called the $(q,h')$-Laplacian operator.
\end{definition}

Let us fix basis of ordered simplices for $C_{q+h}(K),\, C_{q}(K)$ and $C_{q-h'}(K)$ and denote by $B_{q+h,h}$ and $B_{q,h'}$ the corresponding matrix representation of $\partial_{q+h,h}\colon C_{q+h}(K)\to C_{q}(K)$ and $\partial_{q,h'}\colon C_{q}(K)\to C_{q-h'}(K)$, respectively. Then the associated matrix of the $(q,h,h')$-Laplacian operator is 
$$L_{q,h,h'}=B_{q+h,h}\,B^t_{q+h,h}+B^t_{q,h'}\,B_{q,h'}\,.$$
We shall call it the $(q,h,h')$-Laplacian matrix and as before we use the notation  $L_{q,h}^U=B_{q+h,h}\,B^t_{q+h,h}$ and $L_{q,h'}^L=B^t_{q,h'}\,B_{q,h'}$.
\begin{theorem}\label{thm:Laplacian}
Let $K$ be an oriented simplicial complex and fix oriented basis on the $q$-chains $C_q(K)$ of $K$. With respect to this basis, the $(i,j)$-th entry of the associated matrices of the multi-combinatorial Laplacian operators is given by:
\begin{align*}
\big(L^U_{q,h}\big)_{i,j}&=\begin{cases}
\deg_U^{h,q+h}(\sigma_i^{(q)}) & \mbox{ if } i=j \\
\odeg_U^{q+h}(\sigma_i^{(q)},\sigma_j^{(q)}) & \mbox{ if } i\neq j
\end{cases}\\
&\\
\big(L^L_{q,h'}\big)_{i,j}&=\begin{cases}
\deg_L^{h',q-h'}(\sigma_i^{(q)})=\binom{q+1}{q-h'+1} & \mbox{ if } i=j \\
\odeg_L^{q-h'}(\sigma_i^{(q)},\sigma_j^{(q)}) & \mbox{ if } i\neq j
\end{cases}\\
&\\
\big(L_{q,h,h'}\big)_{i,j}&=\begin{cases}
\deg_U^{h,q+h}(\sigma_i^{(q)})+\binom{q+1}{q-h'+1} & \mbox{ if } i=j \\
\odeg_U^{q+h}(\sigma_i^{(q)},\sigma_j^{(q)})+\odeg_L^{q-h'}(\sigma_i^{(q)},\sigma_j^{(q)}) & \mbox{ if } i\neq j
\end{cases}
\end{align*}
\end{theorem}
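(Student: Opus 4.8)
The plan is to reduce everything to the two building blocks $L^U_{q,h}=B_{q+h,h}B^t_{q+h,h}$ and $L^L_{q,h'}=B^t_{q,h'}B_{q,h'}$, since $L_{q,h,h'}$ is their sum and hence its entries are the sums of the corresponding entries. The key input is Proposition \ref{p:bounop}: the matrix $B_{q+h,h}$ of $\partial_{q+h,h}\colon C_{q+h}(K)\to C_q(K)$ has, in the row indexed by $\sigma_i^{(q)}$ and the column indexed by $\sigma_k^{(q+h)}$, the entry $\odeg_L^q(\sigma_k^{(q+h)},\sigma_i^{(q)})$, and by items (2) and (3) preceding Proposition \ref{p:bounop} this equals $\sign(\sigma_k^{(q+h)},\sigma_i^{(q)})$ when $\sigma_i^{(q)}$ is a $q$-face of $\sigma_k^{(q+h)}$ and vanishes otherwise. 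By Proposition \ref{p:cobounop} the transpose $B^t_{q+h,h}$ is the matrix of the adjoint $\partial^*_{q+h,h}$. Likewise $B_{q,h'}$ has $(l,j)$-entry $\odeg_L^{q-h'}(\sigma_j^{(q)},\sigma_l^{(q-h')})=\sign(\sigma_j^{(q)},\sigma_l^{(q-h')})$ on faces and $0$ elsewhere.

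For the upper part, multiplying out and using that $\odeg_L^p$ is symmetric in its first two arguments gives
\[
\big(L^U_{q,h}\big)_{ij}=\sum_{\sigma_k^{(q+h)}\in S_{q+h}(K)}\sign(\sigma_k^{(q+h)},\sigma_i^{(q)})\,\sign(\sigma_k^{(q+h)},\sigma_j^{(q)}),
\]
where only the $(q+h)$-simplices containing both $\sigma_i^{(q)}$ and $\sigma_j^{(q)}$ as faces contribute. When $i=j$ every contributing term is $(\pm1)^2=1$, so the sum counts the $(q+h)$-simplices having $\sigma_i^{(q)}$ as a $q$-face, which is precisely $\deg_U^{h,q+h}(\sigma_i^{(q)})$ by Definition \ref{d:hpUdeg}. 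When $i\neq j$ each term equals $\sig_U(\sigma_i^{(q)},\sigma_j^{(q)};\sigma_k^{(q+h)})$ by the observation right after Definition \ref{d:psigU} (this also accounts for the vanishing terms); since $\sig_U$ does not depend on the orientation of its last argument, replacing the sum over $S_{q+h}(K)$ by the sum over $\widetilde{S}_{q+h}(K)$ doubles it, cancelling the factor $\tfrac12$ in the definition of $\odeg_U^{q+h}$, so that $\big(L^U_{q,h}\big)_{ij}=\odeg_U^{q+h}(\sigma_i^{(q)},\sigma_j^{(q)})$.

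The lower part is entirely parallel. Expanding $B^t_{q,h'}B_{q,h'}$ yields
\[
\big(L^L_{q,h'}\big)_{ij}=\sum_{\sigma_l^{(q-h')}\in S_{q-h'}(K)}\sign(\sigma_i^{(q)},\sigma_l^{(q-h')})\,\sign(\sigma_j^{(q)},\sigma_l^{(q-h')}),
\]
the sum running over common $(q-h')$-faces of $\sigma_i^{(q)}$ and $\sigma_j^{(q)}$. For $i=j$ this counts the $(q-h')$-faces of $\sigma_i^{(q)}$, namely the $\binom{q+1}{q-h'+1}$ subsets of $q-h'+1$ vertices among the $q+1$ vertices of $\sigma_i^{(q)}$, in agreement with (\ref{e:lowfaces}) and hence with $\deg_L^{h',q-h'}(\sigma_i^{(q)})$. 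For $i\neq j$ each term equals $\sig_L(\sigma_i^{(q)},\sigma_j^{(q)};\sigma_l^{(q-h')})$ by the remark after Definition \ref{d:psigL}, and the same orientation-independence/halving argument identifies the sum with $\odeg_L^{q-h'}(\sigma_i^{(q)},\sigma_j^{(q)})$. Summing the upper and lower computations gives the stated entries of $L_{q,h,h'}$.

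The only delicate part is the sign bookkeeping: one must consistently track which argument of $\odeg_L$ is the larger simplex (using the symmetry of $\sig_L$, hence of $\odeg_L^p$) and be careful about summing over oriented versus unoriented simplices so the factors of $\tfrac12$ come out right. I do not expect any genuine obstacle beyond this: once the matrix entries of the generalized boundary operators are rewritten as sign functions via Proposition \ref{p:bounop}, the identification with $\deg_U^{h,q+h}$, $\odeg_U^{q+h}$ and $\odeg_L^{q-h'}$ is a direct unwinding of the definitions.
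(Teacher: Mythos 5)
Your proposal is correct and follows essentially the same route as the paper: both identify the entries of $B_{q+h,h}$ and $B_{q,h'}$ with the sign functions $\sig_L(\,\cdot\,,\gamma;\gamma)$ via Propositions \ref{p:bounop} and \ref{p:cobounop}, multiply out the matrix products, and use the product formula $\sig(\sigma_i,\tau;\tau)\sig(\sigma_j,\tau;\tau)=\sig(\sigma_i,\sigma_j;\tau)$ to recognize the diagonal entries as face/coface counts and the off-diagonal entries as the oriented degrees. Your extra care with the oriented-versus-unoriented summation and the factor $\tfrac12$ is a correct unwinding of the definition of $\odeg$ that the paper leaves implicit.
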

\begin{proof}
Fix $\{\tau_1^{(q+h)},\dots,\tau_m^{(q+h)}\},\,\{\sigma_1^{(q)},\dots,\sigma_n^{(q)}\}$ and $\{\gamma_1^{(q-h')},\dots,\gamma_r^{(q-h')}\}$ basis of $C_{q+h}(K)$, $C_{q}(K)$ and $C_{q-h'}(K)$ respectively. By using the above notations one has that the $(i,j)$-th entry of $B_{q,h'}$ is 
$$b_{ij}^{(q,h')}=\langle\partial_{q,h'}(\sigma_j^{(q)}),\gamma_i^{(q-h')}\rangle=\sig_L\big(\sigma_j^{(q)},\gamma_i^{(q-h')};\gamma_i^{(q-h')}\big)\,.$$
Then 
$$
\begin{aligned}
\big(L_{q,h'}^L\big)_{i,j}&=\sum_{k=1}^r b_{ki}^{(q,h')}b_{kj}^{(q,h')}\\&=\sum_{k=1}^r\sig_L\big(\sigma_i^{(q)},\gamma_k^{(q-h')};\gamma_k^{(q-h')}\big)\sig_L\big(\sigma_j^{(q)},\gamma_k^{(q-h')};\gamma_k^{(q-h')}\big)\\
&= 
\begin{cases}
\displaystyle\sum_{k=1}^r\sig_L\big(\sigma_i^{(q)},\sigma_j^{(q)};\gamma_k^{(q-h')}\big)& \text{ for } i\neq j\\
\\
\deg_L^{h',q-h'}(\sigma_i^{(q)}) & \text{ if } i=j 
\end{cases}
\end{aligned}
$$
The explicit description of the adjoint operator of $\partial_{q+h,h}$ previously given in Equation \ref{eq:adjoint} shows that the $(i,j)$-th entry of $B^t_{q+h,h}$ is 
$${b^*}_{ij}^{(q+h,h)}=\langle\partial^*_{q+h,h}(\sigma_j^{(q)}),\tau_i^{(q+h)}\rangle=\sig_U\big(\sigma_j^{(q)},\tau_i^{(q+h)};\tau_i^{(q+h)}\big)$$
so, we get
$
\big(L^U_{q,h}\big)_{i,j}=
\begin{cases}
\displaystyle\sum_{l=1}^m\sig_U\big(\sigma_i^{(q)},\sigma_j^{(q)};\tau_l^{(q+h)}\big)&\text{ for } i\neq j\\
\\
\deg_U^{h,q+h}(\sigma_i^{(q)})&\text{ if }i=j
\end{cases}
$
\end{proof}

Notice that for $h=h'=1$ we recover the $q$-combinatorial Laplacian (see \cite{G02,MR12}).

\begin{remark}
As opposite with the $q$-combinatorial Laplacian matrix and the graph Laplacian matrix, there might be $0$ entries in the multi-combinatorial Laplacian matrix coming not only from non-adjacent simplices, but also from simplices which being, for example, lower adjacent, the orientation of a common face is opposite to one another, and thus it cancels the corresponding oriented degree. See for instance the second computation of Example \ref{ex:odeg}.
\end{remark}

\begin{example}
With the notations of example \ref{ex:odeg}, let us compute an upper and a lower multi-combinatorial Laplacian of the simplicial complex $K$ of Figure \ref{fig:Odeg}. Recall that we have the basis $\{v_{012},v_{024},v_{034},v_{035},v_{045},v_{345}\}$ of $C_2(K)$ and $\{v_0,\dots, v_5\}$ of $C_0(K)$.
\begin{itemize}
\item Let us set $q=0$ and $h=2$. The associated matrix of the $(q+h,h)=(2,2)$-boundary operator $\partial_{2,2}$ is:
$$B_{2,2}=\begin{pmatrix}
1&1&1&1&1&0\\
-1&0&0&0&0&0\\
1&-1&0&0&0&0\\
0&0&-1&-1&0&1\\
0&1&1&0&-1&-1\\
0&0&0&1&1&1
\end{pmatrix}$$
and, since the associated matrix (with respect to the corresponding dual basis) of its adjoint operator $\partial^*_{2,2}$ is its transpose, the matrix of the $(q,h)=(0,2)$-upper Laplacian operator $L^U_{0,2}=\partial_{2,2}\circ \partial^*_{2,2}\colon C_0(K)\to C_0(K)$  is:
$$B_{2,2}\cdot B^t_{2,2}=\begin{pmatrix}
5&-1&0&-2&1&2\\
-1&1&-1&0&0&0\\
0&-1&2&0&-1&0\\
-2&0&0&3&-2&0\\
1&0&-1&-2&4&-2\\
2&0&0&0&-2&3
\end{pmatrix}$$
whose diagonal entries are the upper degrees of the vertices $\deg_U^{h,q+h}(v_i)=\deg_U^{2,2}(v_i)$ (for $i=0,1,\dots ,5$), and the off diagonal entries are the upper oriented degrees $\odeg_U^{2}(v_i,v_j)$ for $i\neq j$ (see example \ref{ex:odeg}).

\item Similarly, let us set $q=2$ and $h=2$, then the matrix of the $(q,h)=(2,2)$-boundary operator $\partial_{2,2}$ is again $B_{2,2}$, and the matrix of the $(q,h)=(2,2)$-lower Laplacian operator $L^L_{2,2}=\partial^*_{2,2}\circ \partial_{2,2}\colon C_2(K)\to C_2(K)$  is:
$$B^t_{2,2}\cdot B_{2,2}=\begin{pmatrix}
3&0&1&1&1&0\\
0&3&2&1&0&-1\\
1&2&3&2&0&-2\\
1&1&2&3&2&0\\
1&0&0&2&3&2\\
0&-1&-2&0&2&3
\end{pmatrix}$$
whose diagonal entries are the lower degrees of the vertices $\deg_L^{h,q-h}(v_{ijk})=\deg_L^{2,0}(v_{ijk})$, and the off diagonal entries are given by the lower oriented degrees $\odeg_L^{0}(v_{ijk},v_{i'j'k'})$ (see example \ref{ex:odeg}).

\end{itemize}
\end{example}

\subsection{The boundary and coboundary operators compute the higher-order degrees.}

Notice that the multi-combinatorial Laplacian of Theorem \ref{thm:Laplacian} does not compute all the higher-order degrees of simplices, for instance the general $p$-lower degree of a simplex $\sigma^{(q)}$ is not computed, only its lower degree $\deg_L^{h',q-h'}(\sigma^{(q)})$ is contained in the multi-combinatorial Laplacian, and we already knew that it were equal to $\binom{q+1}{q-h'+1}$. Let us finish this section by giving an explicit description of all the higher-order degrees in terms of the generalized boundary and coboundary operators. The key point to perform these computations is to use Propositions \ref{p:bounop} and \ref{p:cobounop}.

We start with the $p$-lower degree. Recall that the $p$-lower degree of a $q$-simplex $\sigma^{(q)}$ is the number of $q'$-simplices $\tau^{(q')}$ which are $p$-lower adjacent to $\sigma^{(q)}$, that is, those $q'$-simplices which contain a $p$-face $\gamma^{(p)}$ of $\sigma^{(q)}$. Hence, $\tau^{(q')}$ contributes to $\deg_L^p(\sigma^{(q)})$ as long as $\sig_L\big(\sigma^{(q)},\tau^{(q')};\gamma^{(p)}\big)$ does not vanish for some $\gamma^{(p)}$. Then $|\sig_L\big(\sigma^{(q)},\tau^{(q')};\gamma^{(p)}\big)|$ should be related to $\deg_L^p(\sigma^{(q)})$.

Following the proof of Theorem \ref{thm:Laplacian} is a straightforward computation to show that $\sig_L\big(\sigma^{(q)},\tau^{(q')};\gamma^{(p)}\big)$ can be given in terms of the entries of the matrices corresponding to $\partial$ and $\partial^*$ operators. However, as the common lower simplex might not be unique, the sum $\displaystyle\sum_{\gamma^{(p)}} |\sig_L\big(\sigma^{(q)},\tau^{(q')};\gamma^{(p)})|$ could be bigger than 1 and we would be counting the $q'$-simplex $\tau^{(q')}$ more than once when computing $\deg_L^p(\sigma^{(q)})$. 
\begin{definition}
Let $\sigma^{(q)}$ and $\tau^{(q')}$ be two simplices. We define the $p$-lower order of $\sigma^{(q)}$ and $\tau^{(q')}$ as the number $\ord^p_L(\sigma^{(q)},\tau^{(q')})$ of $p$-simplices of $K$ which are $p$-faces of both $\sigma^{(q)}$ and $\tau^{(q')}$. That is, 
$$\ord^p_L(\sigma^{(q)},\tau^{(q')})=\displaystyle\sum_{\gamma^{(p)}\in S_p(K)} |\sig_L\big(\sigma^{(q)},\tau^{(q')};\gamma^{(p)})|\,.$$

In that case, we shall say that $\sigma^{(q)}$ and $\tau^{(q')}$ are $p$-lower adjacent in order $\ord^p_L(\sigma^{(q)},\tau^{(q')})$. 
\end{definition}

This definition is used in the proof of the following statement.

\begin{theorem}\label{t:2}
Let $p$ and $q$ be non-negative integers. The $p$-lower degree of a $q$-simplex $\sigma_j^{(q)}$ is:
$$\deg_L^p(\sigma_j^{(q)})= -1+\sum_{q'=p}^{\dim K}\sum_k\min\big(1,\sum_{i}|b_{ij}^{(q,h)}||b^{(q',h')}_{ik}|\big)$$
with $h=q-p$ and  $h'=q'-p$. 
\end{theorem}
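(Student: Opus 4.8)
The strategy is to unwind the right-hand side into a pure face-counting statement and match it against Definition \ref{d:hpLdeg}. First I would fix ordered bases $\{\gamma_i^{(p)}\}_i$ of $C_p(K)$, $\{\sigma_j^{(q)}\}_j$ of $C_q(K)$ and, for each $p\le q'\le\dim K$, a basis $\{\tau_k^{(q')}\}_k$ of $C_{q'}(K)$. Since $h=q-p$ and $h'=q'-p$, the matrices $B_{q,h}$ and $B_{q',h'}$ represent $\partial_{q,q-p}\colon C_q(K)\to C_p(K)$ and $\partial_{q',q'-p}\colon C_{q'}(K)\to C_p(K)$, so both have their rows indexed by the \emph{same} basis $\{\gamma_i^{(p)}\}_i$ of $C_p(K)$. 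Invoking Proposition \ref{p:bounop} (equivalently, reading off the definitions of $\sign(\cdot,\cdot)$ and of $B_{q,h}$) one has $b_{ij}^{(q,h)}=\sign\big(\sigma_j^{(q)},\gamma_i^{(p)}\big)$, which is $\pm 1$ when $\gamma_i^{(p)}$ is a $p$-face of $\sigma_j^{(q)}$ and $0$ otherwise; likewise $b_{ik}^{(q',h')}=\sign\big(\tau_k^{(q')},\gamma_i^{(p)}\big)$. Hence $|b_{ij}^{(q,h)}|$ is the indicator of $\gamma_i^{(p)}\subseteq\sigma_j^{(q)}$ and $|b_{ik}^{(q',h')}|$ that of $\gamma_i^{(p)}\subseteq\tau_k^{(q')}$.

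Next I would identify the inner sum. Since $\sig_L\big(\sigma_j^{(q)},\tau_k^{(q')};\gamma_i^{(p)}\big)$ factors as $\sign\big(\sigma_j^{(q)},\gamma_i^{(p)}\big)\sign\big(\tau_k^{(q')},\gamma_i^{(p)}\big)=b_{ij}^{(q,h)}b_{ik}^{(q',h')}$ when $\gamma_i^{(p)}$ is a common $p$-face of $\sigma_j^{(q)}$ and $\tau_k^{(q')}$, and vanishes otherwise (in which case $b_{ij}^{(q,h)}b_{ik}^{(q',h')}=0$ as well), taking absolute values and summing over $i$ yields $\sum_i|b_{ij}^{(q,h)}||b_{ik}^{(q',h')}|=\ord_L^p(\sigma_j^{(q)},\tau_k^{(q')})$, the number of common $p$-faces of the two simplices. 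Therefore $\min\big(1,\sum_i|b_{ij}^{(q,h)}||b_{ik}^{(q',h')}|\big)$ collapses this multiplicity to the plain indicator of $p$-lower adjacency: it equals $1$ precisely when $\tau_k^{(q')}\sim_{L_p}\sigma_j^{(q)}$ and $0$ otherwise. Summing over $k$ counts the $q'$-simplices of $K$ that are $p$-lower adjacent to $\sigma_j^{(q)}$, and summing over $q'$ from $p$ to $\dim K$ counts every simplex of $K$ that is $p$-lower adjacent to $\sigma_j^{(q)}$: the range $[p,\dim K]$ is exhaustive because a simplex possessing a $p$-face has dimension at least $p$ and at most $\dim K$, and each simplex of $K$ carries a single well-defined dimension, so nothing is missed and nothing double-counted.

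Finally I would account for the constant $-1$. Because $p\le q$, any $p$-face of $\sigma_j^{(q)}$ is a common $p$-face of $\sigma_j^{(q)}$ with itself, so $\sigma_j^{(q)}\sim_{L_p}\sigma_j^{(q)}$, and $\sigma_j^{(q)}$ contributes $\min\big(1,\binom{q+1}{p+1}\big)=1$ to the double sum. Since $\deg_L^p(\sigma_j^{(q)})$ counts the simplices $p$-lower adjacent to $\sigma_j^{(q)}$ other than $\sigma_j^{(q)}$ itself, subtracting this single self-count yields the asserted identity. I expect no deep obstacle: the proof is a bookkeeping argument, and the only care needed is to keep the row index $i$ ranging over one fixed ordered basis of $C_p(K)$ shared by $B_{q,h}$ and $B_{q',h'}$ (so the products $|b_{ij}^{(q,h)}||b_{ik}^{(q',h')}|$ are meaningful) and to check that the degenerate cases $h=0$ when $p=q$ (with $\partial_{q,0}=\mathrm{id}$) and $q'=p$ reduce to the expected incidence counts, both immediate from the definitions.
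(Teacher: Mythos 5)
Your proposal is correct and follows essentially the same route as the paper's own proof in the appendix: identify $\sum_i|b_{ij}^{(q,h)}||b_{ik}^{(q',h')}|$ with the $p$-lower order $\ord_L^p(\sigma_j^{(q)},\tau_k^{(q')})$ (the number of common $p$-faces), use $\min(1,\cdot)$ to collapse multiplicity to the adjacency indicator, sum over all dimensions $q'\in[p,\dim K]$, and subtract $1$ for the self-adjacency of $\sigma_j^{(q)}$ arising in the $q'=q$ block. No gaps; the bookkeeping, including the handling of the $-1$ via the convention that a simplex is not lower adjacent to itself, matches the paper's argument.
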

\begin{proof}
See Appendix \ref{s:A}.
\end{proof}

We shall state now an analogous formula to compute the $p$-upper degree. Let us recall that the $p$-upper degree of a $q$-simplex $\sigma^{(q)}$ is the number of $q'$-simplices $\tau^{(q')}$ which are $p$-upper adjacent to $\sigma^{(q)}$. In other words, those $\tau^{(q')}$ such that $|\sig_U\big(\sigma^{(q)}, \tau^{(q')};\gamma^{(p)}\big)|=1$, for some $p$-simplex $\gamma^{(p)}$. As in the lower degree setting, the common upper simplex $\gamma^{(p)}$ could be not unique, which motives the following definition.

\begin{definition}
Let $\sigma^{(q)}$ and $\tau^{(q')}$ be two simplices. The $p$-upper order of $\sigma^{(q)}$ and $\tau^{(q')}$, written $\ord^p_U(\sigma^{(q)},\tau^{(q')})$, is the number of $p$-simplices of which $\sigma^{(q)}$ and $\tau^{(q')}$ are faces. That is, 
$$\ord^p_U(\sigma^{(q)},\tau^{(q')})=\displaystyle\sum_{\gamma^{(p)}\in S_p(K)} |\sig_U\big(\sigma^{(q)},\tau^{(q')};\gamma^{(p)})|\,.$$
In that case, we shall say that $\sigma^{(q)}$ and $\tau^{(q')}$ are $p$-upper adjacent in order $\ord^p_U(\sigma^{(q)},\tau^{(q')})$. 
\end{definition}

The following result holds (and makes use of the definition above).

\begin{theorem}\label{t:3}
Let $p$ and $q$ be non-negative integers. The $p$-upper degree of a $q$-simplex $\sigma_j^{(q)}$ is:
$$\deg_U^p(\sigma_j^{(q)})= -1+\sum_{q'=0}^{p}\sum_k\min\big(1,\sum_{i}|b_{ji}^{(q+h,h)}||b_{ki}^{(q'+h',h')}|\big)$$
where $h=p-q$ and $h'=p-q'$.
\end{theorem}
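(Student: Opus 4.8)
The plan is to mirror the argument for Theorem \ref{t:2} but with upper adjacency in place of lower adjacency, exploiting the symmetry between $\partial_{q,h}$ and its adjoint $\partial_{q,h}^*$ provided by Proposition \ref{p:cobounop}. First I would fix oriented bases of $C_q(K)$, $C_{q'}(K)$ and, for each relevant $p$, of $C_p(K)$, and write $b_{ji}^{(q+h,h)}=\langle\partial^*_{q+h,h}(\sigma_j^{(q)}),\tau_i^{(q+h)}\rangle=\sig_U(\sigma_j^{(q)},\tau_i^{(q+h)};\tau_i^{(q+h)})$ exactly as in the proof of Theorem \ref{thm:Laplacian}; here $h=p-q$ so that $\tau_i^{(q+h)}=\tau_i^{(p)}$ ranges over the $p$-simplices of $K$, and similarly $b_{ki}^{(q'+h',h')}=\sig_U(\sigma_k^{(q')},\tau_i^{(p)};\tau_i^{(p)})$ with $h'=p-q'$. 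Then the product $|b_{ji}^{(q+h,h)}||b_{ki}^{(q'+h',h')}|$ equals $|\sig_U(\sigma_j^{(q)},\sigma_k^{(q')};\tau_i^{(p)})|$, which is $1$ precisely when $\tau_i^{(p)}$ is a common upper $p$-simplex of $\sigma_j^{(q)}$ and $\sigma_k^{(q')}$ and $0$ otherwise; hence $\sum_i |b_{ji}^{(q+h,h)}||b_{ki}^{(q'+h',h')}| = \ord^p_U(\sigma_j^{(q)},\sigma_k^{(q')})$.

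The next step is the counting argument. By Definition \ref{d:hpUdeg}, $\deg_U^p(\sigma_j^{(q)})$ counts the $q'$-simplices $\sigma_k^{(q')}$ (over all $0\le q'\le p$, since a $p$-simplex can contain $\sigma_j^{(q)}$ only if $q'\le p$, and $\sigma_j^{(q)}$ must itself be among the counted simplices or rather excluded — see below) that are $p$-upper adjacent to $\sigma_j^{(q)}$. Such a $\sigma_k^{(q')}$ is $p$-upper adjacent to $\sigma_j^{(q)}$ iff $\ord^p_U(\sigma_j^{(q)},\sigma_k^{(q')})\ge 1$, i.e. iff $\min(1,\ord^p_U(\sigma_j^{(q)},\sigma_k^{(q')}))=1$; the inner $\min$ is exactly what collapses the possibly-non-unique common upper simplices down to a single count, so $\sum_k\min(1,\sum_i|b_{ji}^{(q+h,h)}||b_{ki}^{(q'+h',h')}|)$ counts the $q'$-simplices $p$-upper adjacent to $\sigma_j^{(q)}$. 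Summing over $q'$ from $0$ to $p$ counts every simplex $p$-upper adjacent to $\sigma_j^{(q)}$. The $-1$ then corrects for the fact that $\sigma_j^{(q)}$ itself is trivially $p$-upper adjacent to itself (taking $q'=q$, $k=j$: $\sigma_j^{(q)}\subseteq\tau^{(p)}$ for any $p$-simplex $\tau^{(p)}\supseteq\sigma_j^{(q)}$, and such a $\tau^{(p)}$ exists precisely when $\deg_U^p(\sigma_j^{(q)})$ is meant to be nonzero), so it is included once in the double sum and must be subtracted, matching the analogous $-1$ in Theorem \ref{t:2}.

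I would then assemble these observations into the displayed formula, noting the edge cases: when no $p$-simplex contains $\sigma_j^{(q)}$ the whole double sum is $0$ and the formula gives $-1$, which is consistent with the convention used for $\deg_L^p$ in Theorem \ref{t:2} (one could alternatively remark that in that degenerate situation $\deg_U^p$ is declared $-1$ by the same convention, or restrict to the nondegenerate case). The main obstacle I anticipate is not conceptual but bookkeeping: being careful that the index ranges for $q'$ (namely $0\le q'\le p$) and the corresponding $h'=p-q'\ge 0$ are exactly those for which the matrices $B_{q'+h',h'}$ are defined and nonzero, and verifying cleanly that the only over-count beyond the $\min$-collapse is the single diagonal term $\sigma_j^{(q)}$ itself — in particular checking that no other $\sigma_k^{(q')}$ with $q'=q$, $k\ne j$, nor any proper face of $\sigma_j^{(q)}$ reached via $q'<q$, gets spuriously double-subtracted. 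Since all the real content (the identity $\sum_i|b||b|=\ord_U^p$ and the adjointness) is already available from Theorem \ref{thm:Laplacian} and Proposition \ref{p:cobounop}, and since Theorem \ref{t:2} is proved by the mirror-image argument in Appendix \ref{s:A}, I would ultimately defer the detailed write-up to that appendix, stating here only the result and "See Appendix \ref{s:A}."
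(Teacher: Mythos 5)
Your proposal is correct and follows essentially the same route as the paper: the appendix proof of Theorem \ref{t:3} likewise computes the composition $\partial_{q'+h',h'}\circ\partial^*_{q+h,h}$, identifies $\sum_i|b_{ji}^{(q+h,h)}||b_{ki}^{(q'+h',h')}|$ with $\ord^p_U(\sigma_j^{(q)},\tau_k^{(q')})$, and then declares the rest ``analogous'' to Theorem \ref{t:2} (min-collapse of the order, sum over $q'$, and the $-1$ for the diagonal term). Your observation about the degenerate case in which no $p$-simplex contains $\sigma_j^{(q)}$ (where the formula yields $-1$ rather than $0$, unlike in Theorem \ref{t:2} where the diagonal term is always $1$) is a genuine edge case the paper glosses over, and is worth the explicit remark.
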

\begin{proof}
See Appendix \ref{s:A}.
\end{proof}

We now state explicit formulas to compute de $p$-adjacency degree for a simplex and its maximal $p$-adjacency degree. Take $q,q'$ and $p$ non-negative integers, put $p'=q+q'-p$ and  fix basis of $C_{q'}(K),\, C_{q}(K)\,, C_{p}(K)\,,C_{p+1}(K)$ and $C_{p'}(K)$, namely,  
$\{\sigma_1^{(q')},\dots,\sigma_m^{(q')}\}$, $\{\sigma_1^{(q)},\dots,\sigma_n^{(q)}\}$, $\{\gamma_1^{(p)},\dots,\gamma_r^{(p)}\}$, $\{\gamma_1^{(p+1)},\dots,\gamma_s^{(p+1)}\}$ and $\{\tau_1^{(p')},\dots,\tau_t^{(p')}\}$, respectively, then the following formulas compute the adjacency degrees for simplices.

\begin{theorem}\label{t:4}
Let $q$ and $p$ be non-negative integers. Then:
\begin{enumerate}
\item
$$\deg^p_A(\sigma_j^{(q)})=\displaystyle\sum_{q'=p}^{\dim K}\displaystyle\sum_{k=1}^{f_{q'}} adj^p(\sigma_j^{(q)},\sigma_k^{(q')})$$
where $p'=q+q'-p$, $f_{q'}=\dim C_{q'}(K)$ and $adj^p(\sigma_j^{(q)},\sigma_k^{(q')})$ is defined in Appendix \ref{s:A}.
\item
$$
\deg^{p^*}_A(\sigma_j^{(q)})=\deg^p_A(\sigma_j^{(q)})-\displaystyle\sum_{q'=p}^{\dim K}\displaystyle\sum_{k=1}^{f_{q'}}\Delta_{q',k}
$$
with 
{\small $$
\Delta_{q',k}=\min\Big(1,\sum_{q'',\ell}|\sig_L(\sigma^{(q')}_k,\sigma_\ell^{(q'')};\sigma_k^{(q')})|\cdot adj^p(\sigma_j^{(q)},\sigma_\ell^{(q'')})\Big)adj^p(\sigma_j^{(q)},\sigma_k^{(q')})
$$}
where $p\leq q'\leq \dim K,\,\, 1\leq k\leq f_{q'},\, q'+1\leq q''\leq\dim K,\, 1\leq \ell\leq f_{q''}$ and $\{\sigma_\ell^{(q'')}\}_{\ell}$ is a basis of $C_{q''}(K)$.
\end{enumerate}
\end{theorem}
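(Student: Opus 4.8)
The plan is to unwind the definitions of $p$-adjacency and maximal $p$-adjacency and translate the purely combinatorial statements ``$\sigma^{(q')}$ is $p$-adjacent to $\sigma^{(q)}$'' and ``$\sigma^{(q')}$ is maximal $p$-adjacent to $\sigma^{(q)}$'' into indicator quantities built from the signs $\sig_L$ and $\sig_U$, exactly as was done for the $p$-lower and $p$-upper degrees in Theorems \ref{t:2} and \ref{t:3}. Recall that $\sigma^{(q)}\sim_{A_p}\sigma^{(q')}$ means $\sigma^{(q)}\sim_{L_{p^*}}\sigma^{(q')}$ and $\sigma^{(q)}\not\sim_{U_{p'}}\sigma^{(q')}$ with $p'=q+q'-p$; so the auxiliary function $adj^p(\sigma_j^{(q)},\sigma_k^{(q')})$ defined in Appendix \ref{s:A} should be the product of three factors: an indicator of $p$-lower adjacency (built as $\min(1,\sum_i|b_{ij}^{(q,h)}||b_{ik}^{(q',h')}|)$ with $h=q-p$, $h'=q'-p$, as in Theorem \ref{t:2}), an indicator of \emph{non}-$(p{+}1)$-lower adjacency (i.e.\ $1-\min(1,\sum_i|b_{ij}^{(q,h-1)}||b_{ik}^{(q',h'-1)}|)$), and an indicator of non-$p'$-upper adjacency (i.e.\ $1-\min(1,\sum_i|b_{ji}^{(p',p'-q)}||b_{ki}^{(p',p'-q')}|)$, paralleling Theorem \ref{t:3}). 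Summing $adj^p$ over all $q'\ge p$ and all $q'$-simplices then gives $\deg^p_A(\sigma_j^{(q)})$ directly from Definition \ref{d:Adjdeg}(1); this is part (1), and it is essentially immediate once the three indicators are verified to behave as claimed.

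For part (2), the point of $\Delta_{q',k}$ is to subtract off exactly those $\sigma_k^{(q')}$ which \emph{are} $p$-adjacent to $\sigma_j^{(q)}$ but \emph{fail} to be maximal $p$-adjacent, i.e.\ which are a strict face of some $\sigma_\ell^{(q'')}$ (with $q''>q'$) that is itself $p$-adjacent to $\sigma_j^{(q)}$. So I would show: $\sigma_k^{(q')}$ contributes to $\Delta_{q',k}$ with weight $adj^p(\sigma_j^{(q)},\sigma_k^{(q')})$ precisely when there exists such a dominating $\sigma_\ell^{(q'')}$. The inner expression $\sum_{q'',\ell}|\sig_L(\sigma^{(q')}_k,\sigma_\ell^{(q'')};\sigma_k^{(q')})|\cdot adj^p(\sigma_j^{(q)},\sigma_\ell^{(q'')})$ counts, with multiplicity, the pairs $(q'',\ell)$ for which $\sigma_k^{(q')}$ is a face of $\sigma_\ell^{(q'')}$ \emph{and} $\sigma_\ell^{(q'')}$ is $p$-adjacent to $\sigma_j^{(q)}$; wrapping it in $\min(1,-)$ collapses that to the indicator ``such a $\sigma_\ell^{(q'')}$ exists,'' and multiplying by $adj^p(\sigma_j^{(q)},\sigma_k^{(q')})$ ensures we only subtract simplices that were counted in $\deg^p_A$ in the first place. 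Then $\deg^{p^*}_A(\sigma_j^{(q)})=\deg^p_A(\sigma_j^{(q)})-\sum_{q',k}\Delta_{q',k}$ follows by matching Definition \ref{d:qhAdj} (maximal $p$-adjacency) with this subtraction, term by term over the basis.

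The subtle points I would be careful about are the following. First, in $adj^p$ one must take $h-1=q-p-1$ and $h'-1=q'-p-1$ nonnegative for the $(p{+}1)$-lower nonadjacency factor to make sense; when $p=q$ or $p=q'$ the ``$p^*$-lower'' condition degenerates (a simplex cannot be $(q{+}1)$-lower adjacent to anything of its own vertices), so the indicator should be set to its correct boundary value rather than evaluated by a vacuous sum — I would state this as a convention inside the definition of $adj^p$. Second, the matrices $b^{(q',h')}$ with $h'=q'-p$ are the entries of $B_{q',q'-p}$, the $(q',q'-p)$-boundary operator, and I should confirm $|b_{ik}^{(q',q'-p)}|$ is indeed the indicator that $\gamma_i^{(p)}$ is a $p$-face of $\sigma_k^{(q')}$ — this is Proposition \ref{p:bounop} together with the fact that $|\odeg_L^p(\tau^{(q')},\gamma^{(p)})|=|\sig_L(\tau^{(q')},\gamma^{(p)};\gamma^{(p)})|\in\{0,1\}$. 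Third, and this is the main obstacle, I need the non-$p'$-upper-adjacency factor to be \emph{correct}: $\sigma^{(q)}$ and $\sigma^{(q')}$ are $p'$-upper adjacent iff there is a $p'$-simplex $\tau^{(p')}$ containing both, which by the remark after Definition \ref{d:qhAdj} is the smallest possible containing simplex given that they share exactly $p{+}1$ vertices — so I must verify that $\sum_i|b_{ji}^{(p',p'-q)}||b_{ki}^{(p',p'-q')}|$ genuinely detects a \emph{common} $p'$-upper simplex and does not spuriously fire; this is the same mechanism as Theorem \ref{t:3} but applied at the single level $q''=p'$ rather than summed over a range, and the delicate part is confirming the dimension bookkeeping $p'=q+q'-p$ makes this single level the \emph{only} relevant one for the $p^*$-lower-adjacent pairs. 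Once these indicator identifications are locked down, both formulas in the theorem reduce to counting basis elements satisfying the translated conditions, and the proof is a bookkeeping argument with no further analytic content; I would relegate the full verification — in particular the precise definition of $adj^p$ and the degenerate-case conventions — to Appendix \ref{s:A} as the statement itself advertises.
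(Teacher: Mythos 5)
Your proposal is correct and follows essentially the same route as the paper: the appendix defines $adj^p$ as exactly the product of the three indicators you describe ($m_L^p\cdot(1-m_L^{p+1})\cdot(1-m_U^{p'})$, written directly in terms of $|\sig_L|$ and $|\sig_U|$ rather than the matrix entries $|b_{ij}|$, which coincide by the identifications in Theorems \ref{t:2} and \ref{t:3}), derives part (1) from the equivalence $\sigma^{(q')}\sim_{A_p}\sigma^{(q)}\iff adj^p=1$, and obtains part (2) by the same subtraction of non-maximal $p$-adjacent simplices via the $\min(1,-)$ indicator of a dominating $\sigma_\ell^{(q'')}$. The degenerate cases you flag are handled automatically in the paper because $\sig_L$ vanishes by definition whenever the candidate common face is not contained in both simplices, so no extra convention is needed.
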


\begin{proof} See Appendix \ref{s:A}.
\end{proof}

\subsection{On some potential theoretical applications.}\label{ss:spec}\quad 

So far we have used the multi-combinatorial Laplacian of Definition \ref{d:multiLap} as a tool for computing the higher-order simplicial degrees of Section \ref{s:qhdeg}, but Laplacian operators have their own interest in many other research topics. The study of the eigenvalues of the graph Laplacian matrix, spectral graph theory, is essential for understanding different fields such as invariants in graph theory (\cite{C96,M91}), data analysis (\cite{BN03}), computer science (\cite{CS11}) or control theory (\cite{Bul18}). Due to its significance, spectral graph theory has been adapted to hypergraphs (see  \cite{C93}) and to simplicial complexes (see for example \cite{HJ13}), where it is also applied to the study of random walks (alternate sequences of classical boundary and coboundary operators in a simplicial complex) in \cite{MS13, PR17}. Recently, and using the language of sheaves coming from Algebraic Geometry, a new spectral sheaf theory has come onto the scene introducing the notion of a sheaf Laplacian $\Delta_{\mathcal F}$ (see \cite{HG18}), where if the chosen sheaf $\mathcal F$ on a simplicial complex is the constant sheaf, then one recovers the original combinatorial Laplacian. 

Most of the recent developments on spectral theory on simplicial complexes make use of a classical statement proved in \cite{E44}: the kernel of the combinatorial Laplacian computes the cohomology of the simplicial complex, that is, harmonic functions encodes the shape and many topological invariants of the simplicial network (an equivalent theorem is also proven in the sheaf context in \cite{HG18}). This result, and many of its implications, makes a significant use of the fact that the square of the classical boundary operator in a simplicial complex is zero: $\delta_q\circ \delta_{q+1}=0$.

The multi-parameter boundary operator and the multi-combinatorial Laplacian can be used to study similar applications in future research. For example, we can define a generalized random walk on a simplicial complex as an alternating sequence of the multi-parameter boundary $\delta_{q,h}$ and coboundary $\delta^*_{q,h}$ operators, a notion which recovers the usual definition of a random walk for $h=1$ (and which might also be related with the notion of $p$-walk given in \cite{HS19}). Moreover, since the multi-combinatorial Laplacian is a linear operator (so that we have a kernel), harmonic functions can be defined and thus we can study certain eigenvalues of the multi-combinatorial Laplacian. Nonetheless, the multi parameter boundary operator $\delta_{q,h}$ does not verify that its square is zero, that is, it is not always true that the composition $\delta_{q,h}\circ \delta_{q+h,h}$ is zero. Further research will be conducted in future works in order to elucidate what kind of topological invariants the multi-combinatorial Laplacian and its spectra are measuring in a simplicial network.

\section{Analysis of higher-order connectivity on real world networks}\label{s:realapp}

Little is known of higher-order connectivity models and patterns on complex networks since graph networks cannot encode higher-order relations and thus these multi-iterations are missed during data collection, and also due to the fact that storing simplicial complexes and working with them is normally a difficult computational task.

The main goal of this section is to prove the existence of a rich variety of distribution patterns for higher-order connectivity in real-world simplicial complex systems, which as far as we know, were not known. To achieve this, we propose the use of the maximal upper simplicial degree and the maximal simplicial degree (given in Definition \ref{d:simpdeg}) and to compute the size distribution and the simplicial degree distributions of 17 real-world datasets used in \cite{BASJK18}. These datasets are  already collected there as simplices bounded to a maximum of 25 nodes (which makes the calculations easier), and are obtained from real-world data of diverse domains such as coauthor networks, cosponsoring Congress bills, contacts in schools, drug abuse warning networks, e-mail networks, national drug code classes and substances, publications in online forums or users in online forums. The main results obtained from the analysis of these real-world datasets are the following:

\begin{itemize}
\item We find a rich variety of higher-order connectivity structures in simplicial complex networks.

\item We show that similar-type datasets reflect similar higher-order connectivity patterns.

\item We observe the same type of classical node and node-to-facets degree distributions -- a result that is consistent with the results of \cite{PGV17} and \cite{MDS15}.

\item We prove that for every analysed dataset, the degree distribution associated with the maximal upper simplicial degree follows  a power-law distribution similar to that of the corresponding distributions associated with the classical node and node-to-facets degree, but with a more pronounced decay (and thus less small degree saturation).

\item We show that if one instead uses the maximal simplicial degree, then its degree distribution is in general surprisingly different to that of the classical node degree one (and also to that of the node-to-facets and the maximal upper degree ones). In addition, a higher small degree saturation is shown.
\end{itemize}

In classical Network Science, hubs (which are nodes with a high number of connections) represent a deep organizing principle in the network topology. Similarly,  this structural analysis of higher-order connectivity patterns on real-world simplicial networks reveals the existence of ``simplicial hubs'' (simplices with a high number of simplicial connections) and their relevance in the study of the topology of the simplical network.

\subsection{Dataset description and summary statistics.}\label{s:datasets}

In order to carry out an applied analysis, let us start by describing and studying some statistical properties of the data processed. The data correspond to 17 real-world datasets which comprise sets of  simplices bounded to a maximum of 25 nodes, are publicly available at \url{https://github.com/arbenson/ScHoLP-Data} and have been obtained from \cite{BASJK18} (where Benson et al. analysed their temporal evolution, the simplicial closure phenomena and propose a higher-order link prediction). Their data and results have been extremely helpful for validating our own analysis.

Although the provided datasets contain more information, the following list summarizes the datasets in terms of the data that is relevant to our study, i.e., the datasets' simplices and nodes.

\begin{description}
	\item[coauth-DBLP, coauth-MAG-History, coauth-MAG-Geology.] A simplex is a publication. Nodes are the different authors.
	\item[congress-bills.] Nodes are members of Congress. A simplex is the set of members in a cosponsoring bill.
	\item[contact-high-school, contact-primary-school.] Nodes are people. Simplices are sets of people that were connected during 20-second intervals.
	\item[DAWN.] Simplices are the drugs used by a patient as obtained from the Drug Abuse Warning Network (DAWN). Nodes are the different drugs.
	\item[email-Enron, email-Eu.] A simplex is an email. The nodes of a simplex are the recipient addresses along with the sender one.
	\item[NDC-classes, NDC-substances.] Each simplex corresponds to a national drug code (NDC) for a drug. Nodes are the classes applied to that drug (NDC-classes) or the substances that make up that drug (NDC-substances).
	\item[tags-ask-ubuntu, tags-math-sx, tags-stack-overflow.] Each simplex is an entry/publication in an online forum. Nodes are the tags applied to it.
	\item[threads-stack-overflow, threads-math-sx, threads-ask-ubuntu.] Nodes are users of an online forum. A simplex is the set of users answering to the same question in the forum. 
\end{description}

\begin{table}[htb]
	\caption{Summary statistics.}
	\label{tab:sumstat}
	\resizebox{\textwidth}{!}{%
		\begin{tabular}{l rrrrrrr } \toprule
			& \begin{tabular}{c}nodes\end{tabular}
			& \begin{tabular}{c}simplices\end{tabular}
			& \begin{tabular}{c}distinct\\ simplices\end{tabular}
			& \begin{tabular}{c}unordered\\distinct\\ simplices\end{tabular}
			& \begin{tabular}{c}facets\end{tabular}
			& \begin{tabular}{c}\% facets / \\ simp\end{tabular}
			& \begin{tabular}{c}\% facets / \\ u.d.simp.\end{tabular}
			\\ \midrule
			coauth-DBLP & 1,924,991 & 3,700,067 & 2,599,087 & 2,466,792 & 1,730,664 & 46.77\% & 70.16\% \\ 
			coauth-MAG-Geology & 1,256,385 & 1,590,335 & 1,207,390 & 1,203,895 & 925,027 & 58.17\% & 76.84\% \\ 
			coauth-MAG-History & 1,014,734 & 1,812,511 & 895,668 & 895,439 & 774,495 & 42.73\% & 86.49\% \\ 
			congress-bills & 1,718 & 260,851 & 85,082 & 84,799 & 48,898 & 18.75\% & 57.66\% \\ 
			contact-high-school & 327 & 172,035 & 7,937 & 7,818 & 4,862 & 2.83\% & 62.19\% \\ 
			contact-primary-school & 242 & 106,879 & 12,799 & 12,704 & 8,010 & 7.49\% & 63.05\% \\ 
			DAWN & 2,558 & 2,272,433 & 143,523 & 141,087 & 72,421 & 3.19\% & 51.33\% \\ 
			email-Enron & 143 & 10,883 & 1,542 & 1,512 & 433 & 3.98\% & 28.64\% \\ 
			email-Eu & 998 & 234,760 & 25,791 & 25,027 & 8,102 & 3.45\% & 32.37\% \\ 
			NDC-classes & 1,161 & 49,724 & 1,222 & 1,088 & 563 & 1.13\% & 51.75\% \\ 
			NDC-substances & 5,311 & 112,405 & 10,025 & 9,906 & 6,555 & 5.83\% & 66.17\% \\ 
			tags-ask-ubuntu & 3,029 & 271,233 & 151,441 & 147,222 & 95,639 & 35.26\% & 64.96\% \\ 
			tags-math-sx & 1,629 & 822,059 & 174,933 & 170,476 & 96,596 & 11.75\% & 56.66\% \\ 
			tags-stack-overflow & 49,998 & 14,458,875 & 5,675,497 & 5,537,637 & 3,781,574 & 26.15\% & 68.29\% \\ 
			threads-ask-ubuntu & 125,602 & 192,947 & 167,001 & 166,999 & 149,025 & 77.24\% & 89.24\% \\ 
			threads-math-sx & 176,445 & 719,792 & 595,778 & 595,749 & 519,573 & 72.18\% & 87.21\% \\ 
			threads-stack-overflow & 2,675,955 & 11,305,343 & 9,705,709 & 9,705,562 & 8,694,667 & 76.91\% & 89.58\% \\
			\bottomrule
		\end{tabular}
	}
\end{table}

Table~\ref{tab:sumstat} shows the summary statistics obtained from the analysis of the datasets. Let us explain the terminology used and contrast the data against those collected in Table 1 of \cite{BASJK18}. Column `simplices' correspond to the data available in the datasets (called ``timestamped simplices'' in Table 1 of \cite{BASJK18}); column `distinct simplices' represents the number of simplices where duplicates are discarded (it corresponds to the column ``unique simplices'' in Table 1 of \cite{BASJK18}); the  `unordered distinct simplices' are counted by deleting the `distinct simplices' that have the same set of vertices (regardless of their order) than another distinct simplex; and the `facets' are calculated by removing the unordered distinct simplices that were already faces of another unordered distinct simplex.
\begin{example}
If a dataset were made of sets $\{1,4,6\}$, $\{4,1,6\}$, $\{6,4\}$ and $\{6,4\}$, the number of simplices is 4; the number of distinct simplices is 3 (simplex $\{6,4\}$ is repeated); the number of unordered distinct simplices is 2 ($\{1,4,6\}$ and $\{4,1,6\}$ contain the same set of vertices); and the number of facets is 1 ($\{6,4\}$ is a face of $\{1,4,6\}$).
\end{example}

Note that the lower the facets-to-simplices ratio in a category, the more repeated simplices of the same group of nodes there are in that category. For instance, with regard to a coauthors' network, this translates into a greater number of papers that the group of authors would have written jointly. Consequently, in  categories in which this ratio is very low, it would also be convenient to make an analysis of higher-order connectivity distributions taking into account these repetitions; that is to say, using a weighted simplicial degree. The weighted approach is out of the scope of this work but definitely appears as a future line of research. For this analysis we are not going to take into account the number of repetitions of a simplex in the datasets, nor the order of the vertices of a simplex, that is, we will be using the `unordered distinct simplices' and the facets columns of Table \ref{tab:sumstat} (in particular we are not considering the orientation).

\begin{table}[htb]
	\centering
	\caption{Facet size statistics. We show the maximum, average, median and most probable size of the facet.}
	\label{tab:facetstat}
	\resizebox{.55\textwidth}{!}{%
		\begin{tabular}{l rrrr} 
			\toprule
			& \begin{tabular}{c}max $s$\end{tabular}
			& \begin{tabular}{c}$\langle s \rangle$\end{tabular}
			& \begin{tabular}{c}$s_m$\end{tabular}
			& \begin{tabular}{c}$s_p$\end{tabular} \\ \midrule
			coauth-DBLP   &  25&3.51&3&3\\
			coauth-MAG-Geology & 25&3.50&3&2\\
			coauth-MAG-History   &   25&1.60&1&1\\
			congress-bills & 25&12.38&11&6\\
			contact-high-school  &5&2.42&2&2\\
			contact-primary-school &5&2.57&3&3\\
			DAWN  &16&4.90&4&4\\
			email-Enron &18&3.85&3&2\\
			email-Eu &25&3.82&2&2\\
			NDC-classes & 24&5.54&4&2\\
			NDC-substances & 25&6.70&5&1\\
			tags-ask-ubuntu &5&3.82&4&4\\
			tags-math-sx &5&3.96&4&4\\
			tags-stack-overflow &5&4.22&4&5\\
			threads-ask-ubuntu &14&2.00&2&2\\
			threads-math-sx &21&2.58&2&2\\
			threads-stack-overflow &25&2.59&2&2\\
			\bottomrule
		\end{tabular}
	}
\end{table}

\begin{figure}[htb]
	\centering
	\includegraphics[width=.7\linewidth]{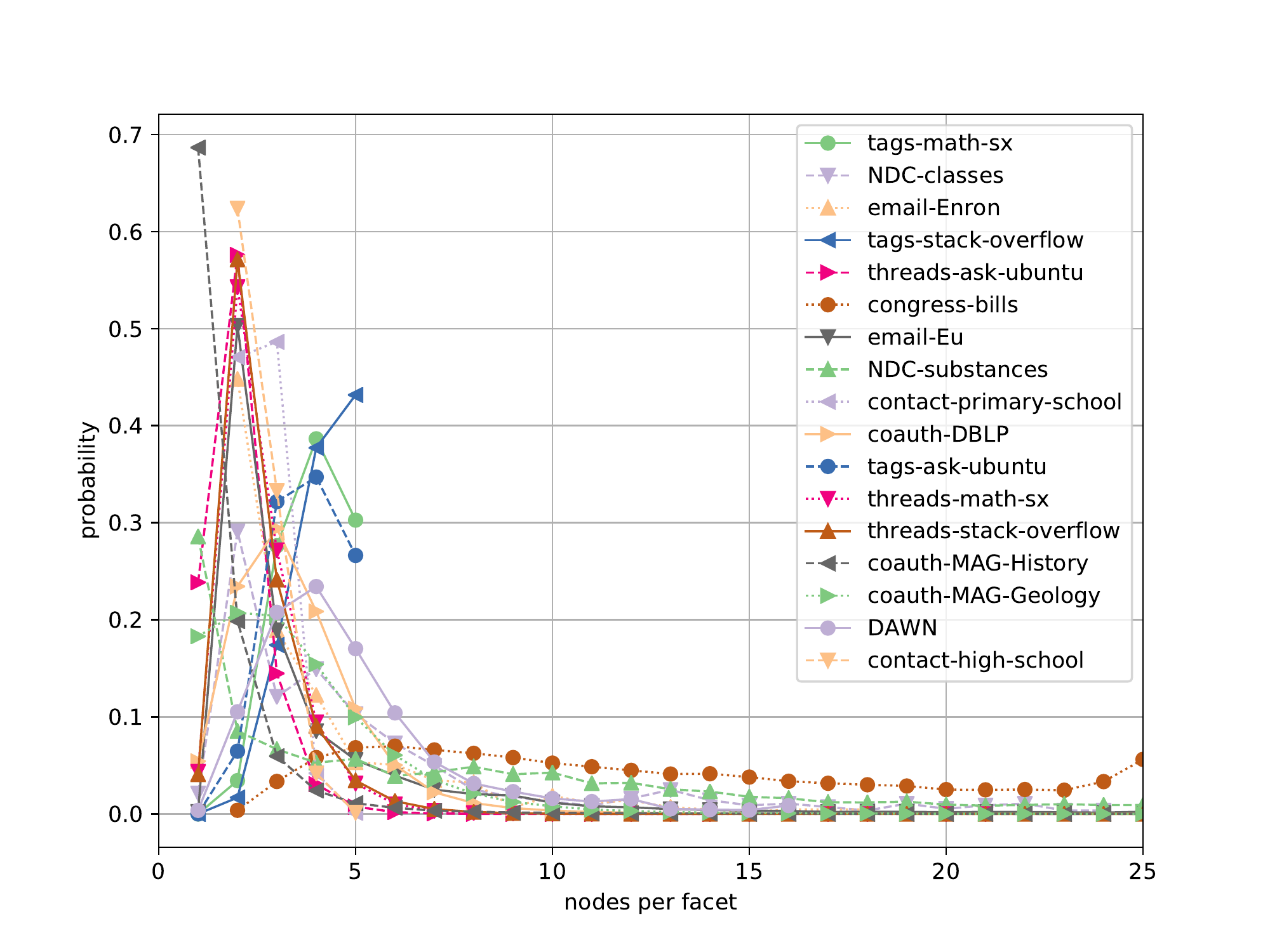}
	\caption{Distribution of facets size.}
	\label{fig:facetstat}
\end{figure}

As can be observed in Table \ref{tab:sumstat}, although the number of nodes and simplices of each category is very different, the facet size distributions are similar in almost all categories (see Figure \ref{fig:facetstat}). 

Table \ref{tab:facetstat} shows the facet size statistics. Recall that if a facet is given by a simplex $\sigma^{(q)}$, then its size is $s=q+1$, i.e. the number of vertices it has. We denote `max $s$' for the maximum size of the facets in a dataset; $\langle s \rangle$ stands for the average size of the facets in a dataset; we use $s_m$ for the median of the sizes of the facets; and $s_p$ denotes the most frequent size of the facets in the dataset, we will also refer to it as the most probable size of the facet. In this table, apart from the congress-bill dataset, we observe that both the median and the most probable value of the number of nodes per facet across categories is relatively low. Notice that the datasets' simplices were bounded to a maximum of 25 nodes, a value that eases the degrees' calculations and, as claimed in \cite{BASJK18}, only omits simplices with more than 25 nodes which ``are rare and not relevant for our analysis''. Verifying the real impact of this threshold is still work that should be performed, especially in the case of the congress-bill dataset, whose tail does not seem to tend to 0 when facet's size is 25. Moreover, as in classical network theory, this data incompleteness may also be responsible for an exaggerated pronounced decay on the distribution tail cutoffs we are about to present in the following subsection.

\subsection{Simplicial degree distributions.}

\begin{table}[t!]
	\caption{Classical-node degree and node-to-facets degree statistics. We show the maximum, the average and the median of both classical node degree (max $k$, $\langle k \rangle$ and $k_m$ resp.) and node-to-facets degree (max $k^F$, $\langle k^F \rangle$ and $k^F_m$ resp.)}
	\label{tab:classdeg}
	\centering
	\resizebox{.7\textwidth}{!}{%
		\begin{tabular}{@{}l rrr c rrr@{}}
			\toprule
			& \multicolumn{3}{c}{classical node degree} & \phantom{abc} & \multicolumn{3}{c}{node-to-facets degree} \\
			\cmidrule{2-4} \cmidrule{6-8}
			& max $k$ & $\langle k \rangle$ & $k_m$ && max $k^{F}$ & $\langle k^{F}\rangle$  & $k^{F}_m$ \\
			\midrule
			coauth-DBLP 			& 2,099 & 8.21 & 4 && 756 & 3.16 & 1 \\ 
			coauth-MAG-Geology 		& 1,236 & 8.15 & 4 && 545 & 2.57 & 1 \\ 
			coauth-MAG-History 		& 892 & 2.28 & 0 && 401 & 1.22 & 1 \\ 
			congress-bills 			& 3 & 0.00 & 0 && 2,562 & 352.40 & 231 \\ 
			contact-high-school 	& 87 & 35.58 & 36 && 90 & 35.99 & 35 \\ 
			contact-primary-school 	& 134 & 68.74 & 68 && 174 & 85.19 & 82 \\ 
			email-Enron 			& 69 & 25.17 & 23 && 36 & 11.65 & 10 \\ 
			email-Eu 				& 394 & 58.72 & 45 && 358 & 31.05 & 17 \\ 
			DAWN 					& 1,241 & 96.14 & 20 && 13,204 & 138.70 & 8 \\ 
			NDC-classes 			& 167 & 10.72 & 4 && 66 & 2.69 & 1 \\ 
			NDC-substances 			& 903 & 33.24 & 5 && 519 & 8.26 & 1 \\ 
			tags-ask-ubuntu 		& 2,082 & 87.62 & 37 && 9,093 & 120.46 & 22 \\ 
			tags-math-sx 			& 1,108 & 112.57 & 62 && 8,627 & 234.68 & 53 \\ 
			tags-stack-overflow 	& 4 & 0.27 & 0 && 375,325 & 319.47 & 24 \\ 
			threads-ask-ubuntu 		& 2,727 & 2.98 & 1 && 2,054 & 2.37 & 1 \\ 
			threads-math-sx 		& 11,001 & 12.35 & 2 && 9,996 & 7.61 & 1 \\ 
			threads-stack-overflow 	& 41,474 & 15.70 & 2 && 32,358 & 8.41 & 1 \\
			\bottomrule
		\end{tabular}
	}
\end{table}

The second part of the topological analysis of the higher-order connectivity in simplicial data systems corresponds to the study of the distributions associated with the maximal upper simplicial degree and the maximal simplicial degree (given in Definition \ref{d:simpdeg}) of the 17 real-world dataset mentioned before (see Tables \ref{tab:maxsimpupdeg} and \ref{tab:maxsimpdeg} for degree statistics). We will compare  them with the simplicial degree distributions of the classical node degree (\cite{BA16}) and the node-to-facets degree of \cite{PGV17} (see Table \ref{tab:classdeg} for degree statistics) and interpret the results and figures by contrasting them with known properties of classical node degree distributions in Network Theory (\cite{BA16, BR02}). These simplicial degree distributions are studied both for the median and the most probable value of the number of nodes per facet since they are statistically representative values which, being relatively low (see Table \ref{tab:maxsimpupdeg} for instance), facilitate the computational computations.

In Tables \ref{tab:classdeg}, \ref{tab:maxsimpupdeg} and \ref{tab:maxsimpdeg} we have abbreviated the notations as follows: 
\begin{itemize}
	\item $k$ denotes the classical degree of a node $v$ in a network: the number of its incident edges. That is, $k=\deg (v)=\deg_U^{1,1}(v)$ following Definition \ref{d:hpUdeg}.
	\item $q_m=s_m-1$ and recall that $s_m$ stands for the median number of nodes per facet in each dataset.
	\item $q_p=s_p-1$ and $s_p$ is the most frequent size of the facets of the dataset.
	\item We denote $k_U^{*}(q)$ the maximal upper simplicial degree of a $q$-simplex $\sigma^{(q)}$: the number of different facets to which $\sigma^{(q)}$ belongs to. That is, $k_U^{*}(q)=\deg_U^*(\sigma^{(q)})$ following Definition \ref{d:simpdeg}.
	\item We denote $k^{F}$ the node to facets degree: the number of different maximal simplices to which a node $v$ belongs to. That is, $k^F=k_U^*(0)=\deg_U^*(v)$.
	\item We denote $k^{*}(q)$ the maximal simplicial degree of a $q$-simplex $\sigma^{(q)}$: the number of different facets to which $\sigma^{(q)}$ belongs to and to which its strict faces also belong to. That is, $k^{*}(q)=\deg_A^*(\sigma^{(q)})$ following Definition \ref{d:simpdeg}.
\end{itemize}

\begin{table}[htb]
	\caption{Maximal upper simplicial degree statistics. We show the maximum, average and median value of the maximal upper simplicial degree for the median $q_m=s_m - 1$ and most probable $q_p=s_p - 1$ simplex' dimension.}
	\label{tab:maxsimpupdeg}
	\resizebox{\textwidth}{!}{%
		\begin{tabular}{l rrrr c rrrr}
			\toprule
			& $q_m$ & max $k_U^*(q_m)$ & $\langle k_U^*(q_m)\rangle$ & $k_U^*(q_m)_m$
			& \phantom{ab}
			& $q_p$ & max  $k_U^*(q_p)$ & $\langle k_U^*(q_p)\rangle$ & $k_U^*(q_p)_m$ \\ 
			\cmidrule{2-5} \cmidrule{7-10}
			coauth-DBLP 			& 2 & 93 & 1.11 & 1 && 2 & 93 & 1.11 & 1 \\ 
			coauth-MAG-Geology 		& 2 & 60 & 1.12 & 1 && 1 & 141 & 1.29 & 1 \\
			coauth-MAG-History 		& 0 & 401 & 1.22 & 1 && 0 & 401 & 1.22 & 1 \\
			contact-high-school 	& 1 & 19 & 1.58 & 1 && 1 & 19 & 1.58 & 1 \\
			contact-primary-school 	& 2 & 6 & 1.04 & 1 && 2 & 6 & 1.04 & 1 \\
			DAWN 					& 3 & 197 & 1.15 & 1 && 3 & 197 & 1.15 & 1 \\
			email-Enron 			& 2 & 12 & 1.47 & 1 && 1 & 18 & 2.25 & 1 \\
			email-Eu 				& 1 & 95 & 3.09 & 1 && 1 & 95 & 3.09 & 1 \\
			NDC-classes 			& 3 & 45 & 1.55 & 1 && 1 & 63 & 2.19 & 1 \\
			NDC-substances 			& 4 & 50 & 1.36 & 1 && 0 & 519 & 8.26 & 1 \\
			tags-ask-ubuntu 		& 3 & 59 & 1.10 & 1 && 3 & 59 & 1.10 & 1 \\
			tags-math-sx 			& 3 & 59 & 1.18 & 1 && 3 & 59 & 1.18 & 1 \\
			threads-ask-ubuntu 		& 1 & 81 & 1.05 & 1 && 1 & 81 & 1.05 & 1 \\
			threads-math-sx 		& 1 & 330 & 1.21 & 1 && 1 & 330 & 1.21 & 1 \\
			\bottomrule
		\end{tabular}
	}
\end{table}

\begin{table}[htb]
	\caption{Maximal simplicial degree statistics. We show the maximum, average and median value of the maximal simplicial degree for the median $q_m=s_m - 1$ and most probable $q_p=s_p - 1$ simplex' dimension.}
	\label{tab:maxsimpdeg}
	\resizebox{\textwidth}{!}{%
		\begin{tabular}{l rrrr c rrrr}
			\toprule
			& $q_m$ & max $k^*(q_m)$ & $\langle k^*(q_m)\rangle$ & $k^*(q_m)_m$
			& \phantom{ab}
			& $q_p$ & max  $k^*(q_p)$ & $\langle k^*(q_p)\rangle$ & $k^*(q_p)_m$ \\
			\cmidrule{2-5} \cmidrule{7-10}
			coauth-DBLP				& 2 & 2,043 & 64.64 & 30 		&& 2 & 2,043 & 64.64 & 30 \\ 
			coauth-MAG-Geology 		& 2 & 934 & 57.38 & 36 			&& 1 & 844 & 31.32 & 15 \\ 
			coauth-MAG-History 		& 0 & 401 & 1.22 & 1 			&& 0 & 401 & 1.22 & 1 \\ 
			contact-high-school 	& 1 & 168 & 81.76 & 80 			&& 1 & 168 & 81.76 & 80 \\ 
			contact-primary-school 	& 2 & 477 & 293.68 & 293 		&& 2 & 477 & 293.68 & 293 \\ 
			DAWN 					& 3 & 26,329 & 6,869.43 & 6,519	&& 3 & 26,329 & 6,869.43 & 6,519 \\ 
			email-Enron 			& 2 & 79 & 41.78 & 42 			&& 1 & 64 & 27.68 & 27 \\ 
			email-Eu 				& 1 & 636 & 132.40 & 107 		&& 1 & 636 & 132.40 & 107 \\ 
			NDC-classes 			& 3 & 100 & 48.34 & 45 			&& 1 & 88 & 19.93 & 15 \\ 
			NDC-substances 			& 4 & 1,660 & 486.90 & 457 		&& 0 & 519 & 8.26 & 1 \\ 
			tags-ask-ubuntu 		& 3 & 26,946 & 6,612.21 & 6,076 && 3 & 26,946 & 6,612.21 & 6,076 \\ 
			tags-math-sx 			& 3 & 27,123 & 7,859.79 & 7,245 && 3 & 27,123 & 7,859.79 & 7,245 \\ 
			threads-ask-ubuntu 		& 1 & 3,651 & 286.41 & 88 		&& 1 & 3,651 & 286.41 & 88 \\ 
			threads-math-sx 		& 1 & 18,625 & 1,334.60 & 582 	&& 1 & 18,625 & 1,334.60 & 582 \\
			\bottomrule
		\end{tabular}
	}
\end{table}

\begin{figure}[htb]
	\centering
	\includegraphics[width=0.48\linewidth]{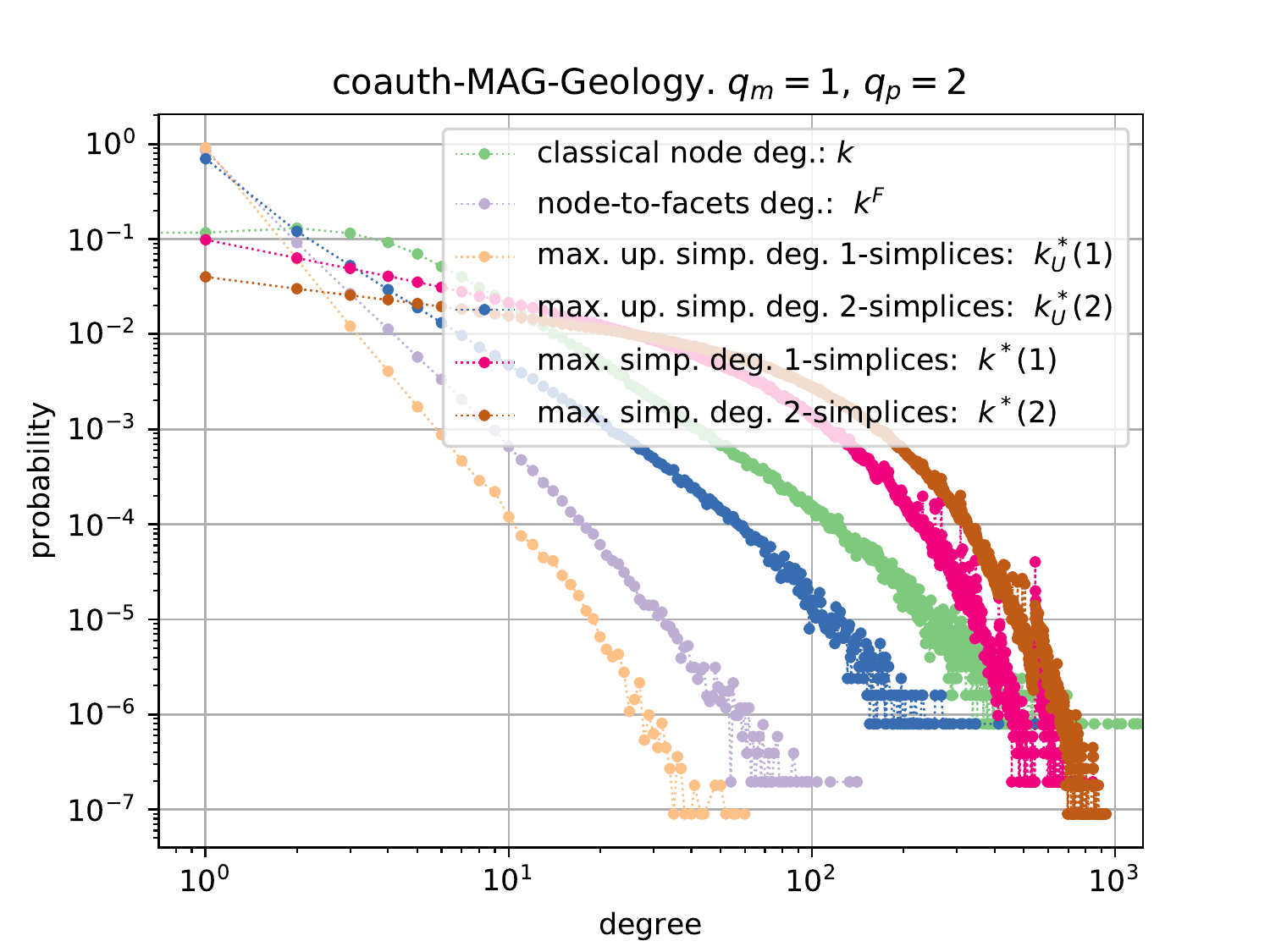}
	\includegraphics[width=0.48\linewidth]{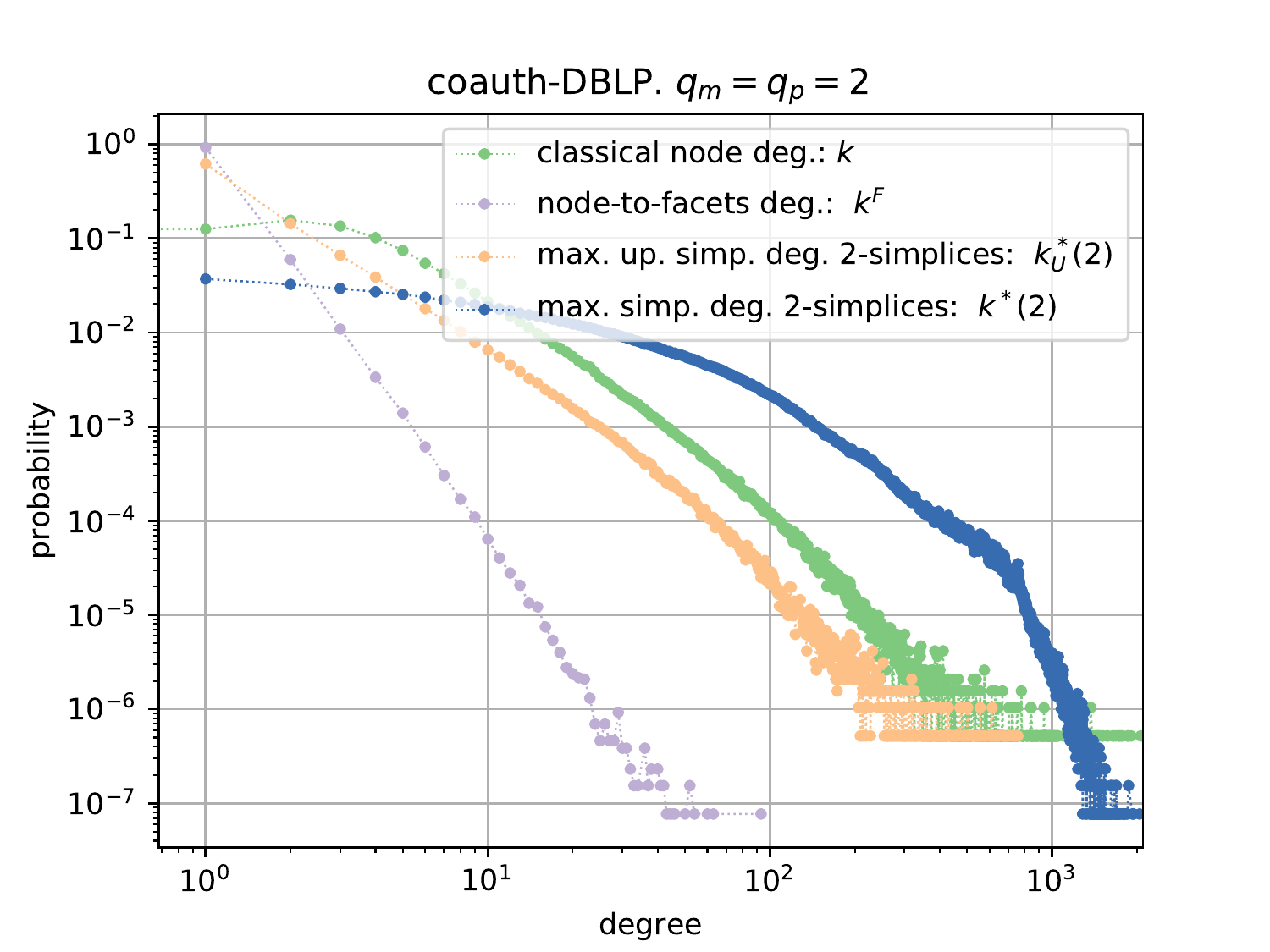}
	\caption{Log-log plot of simplicial degree distributions of the coauthor's datasets.}
	\label{fig:coauthor_llmsd}
\end{figure}

\begin{figure}[htb]
	\centering
	\includegraphics[width=0.48\linewidth]{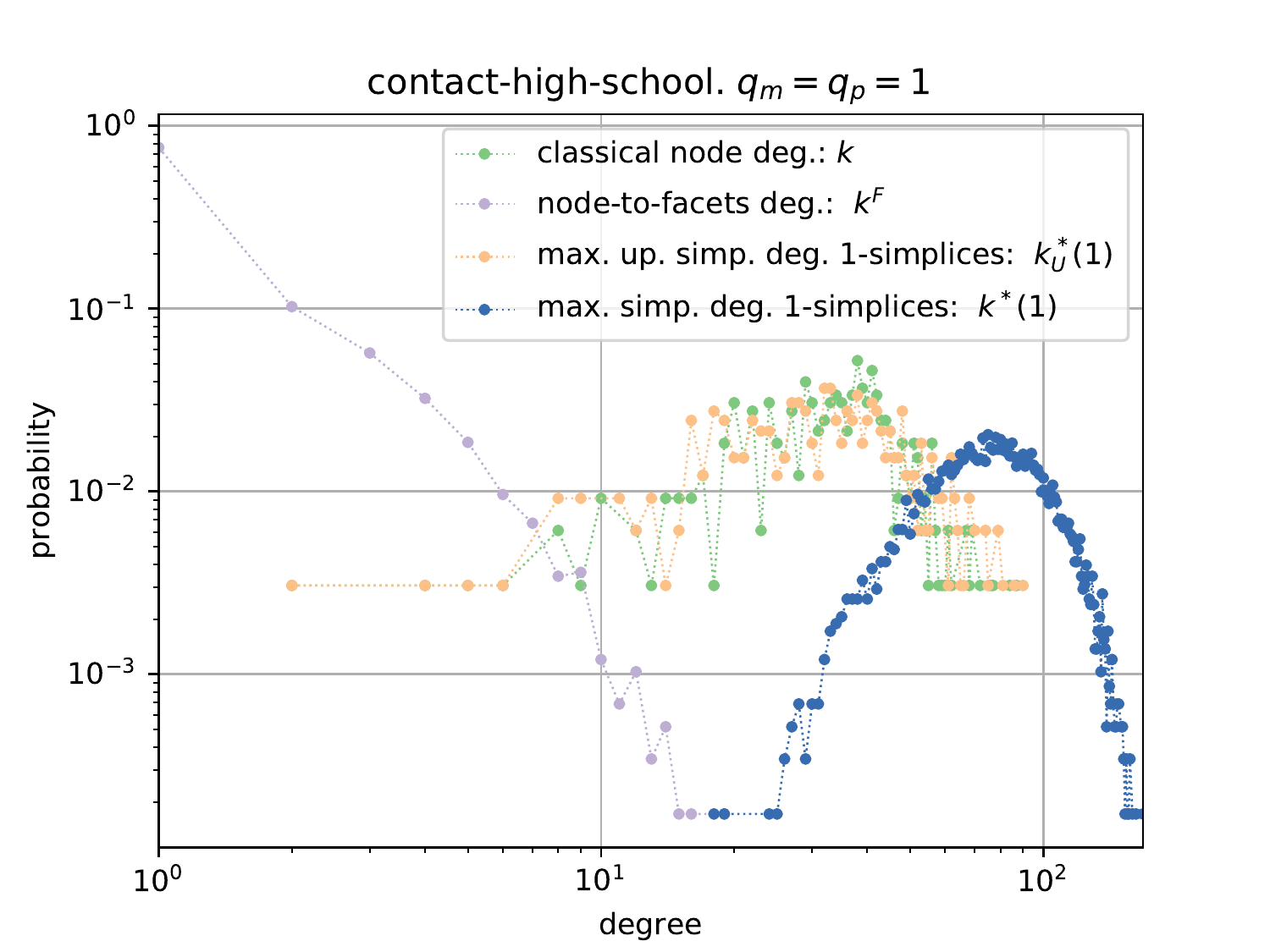}
	\includegraphics[width=0.48\linewidth]{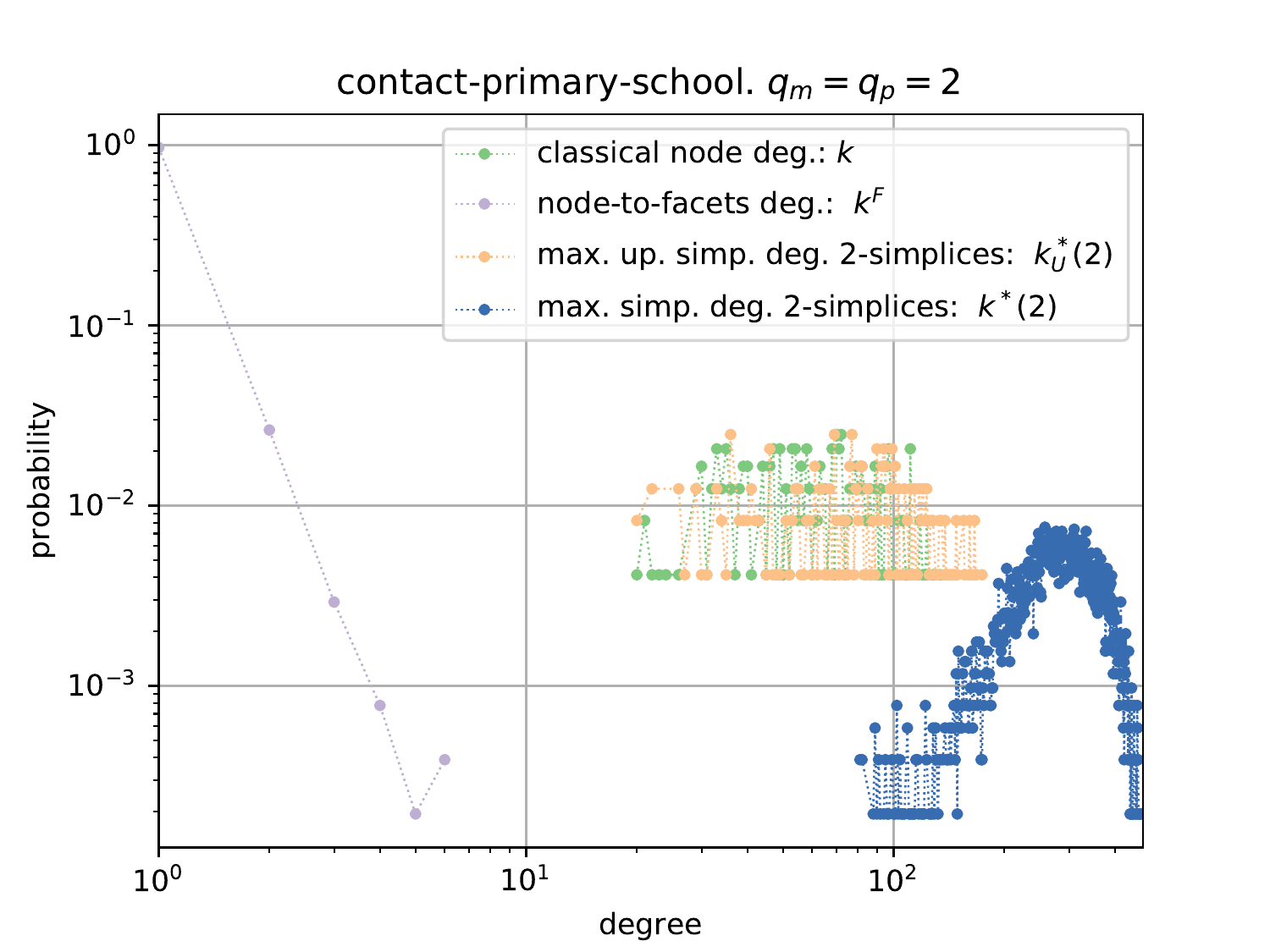}
	\caption{Log-log plot of simplicial degree distributions of contact-high-school and contact-primary-school datasets.}
	\label{fig:G2}
\end{figure} 

\begin{figure}[htb]
	\centering
	\includegraphics[width=0.48\linewidth]{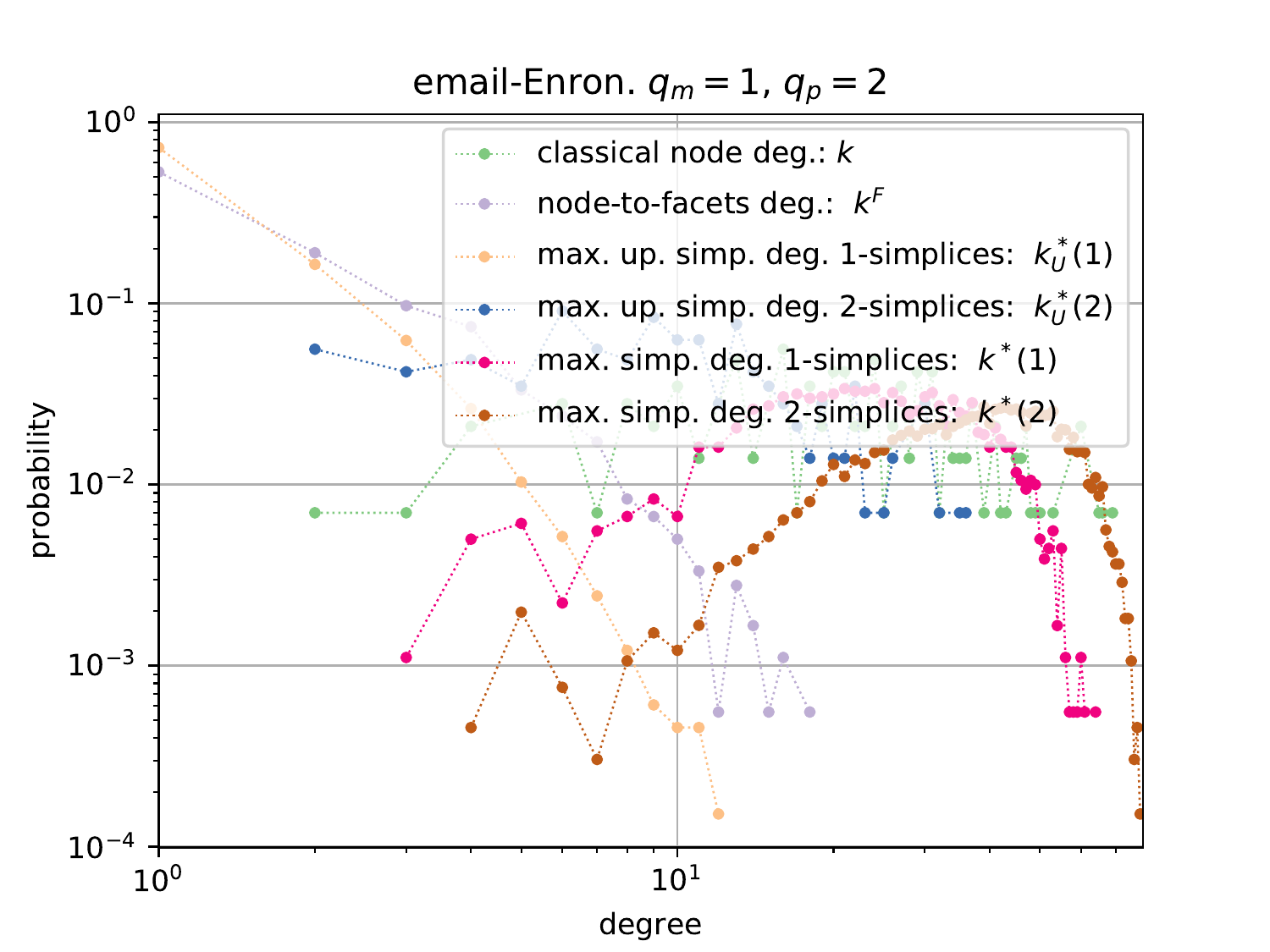}
	\includegraphics[width=0.48\linewidth]{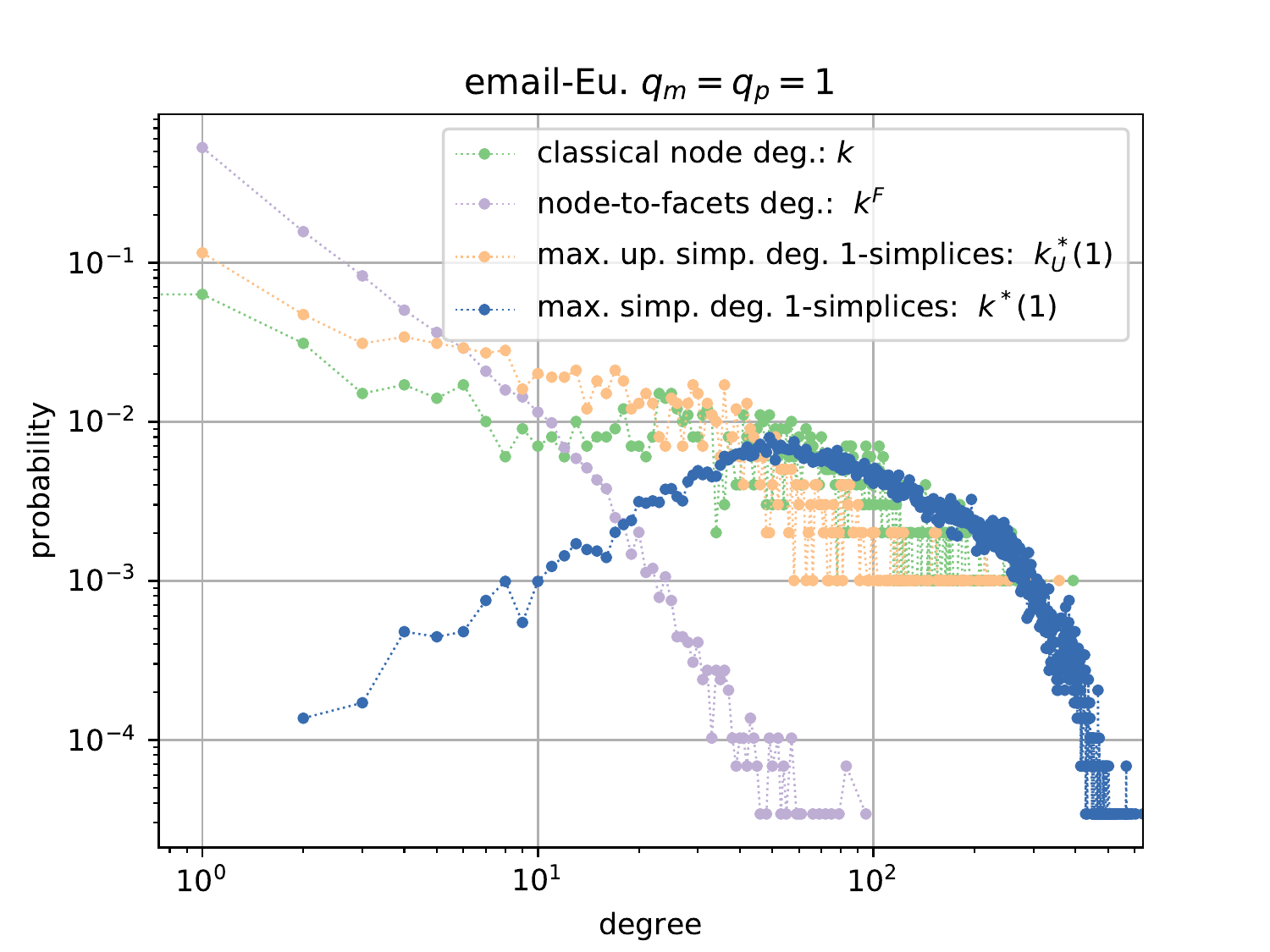}
	\caption{Log-log plot of simplicial degree distributions of emails' datasets.}
	\label{fig:G3a}
\end{figure} 

\begin{figure}[htb]
	\centering
	\includegraphics[width=0.48\linewidth]{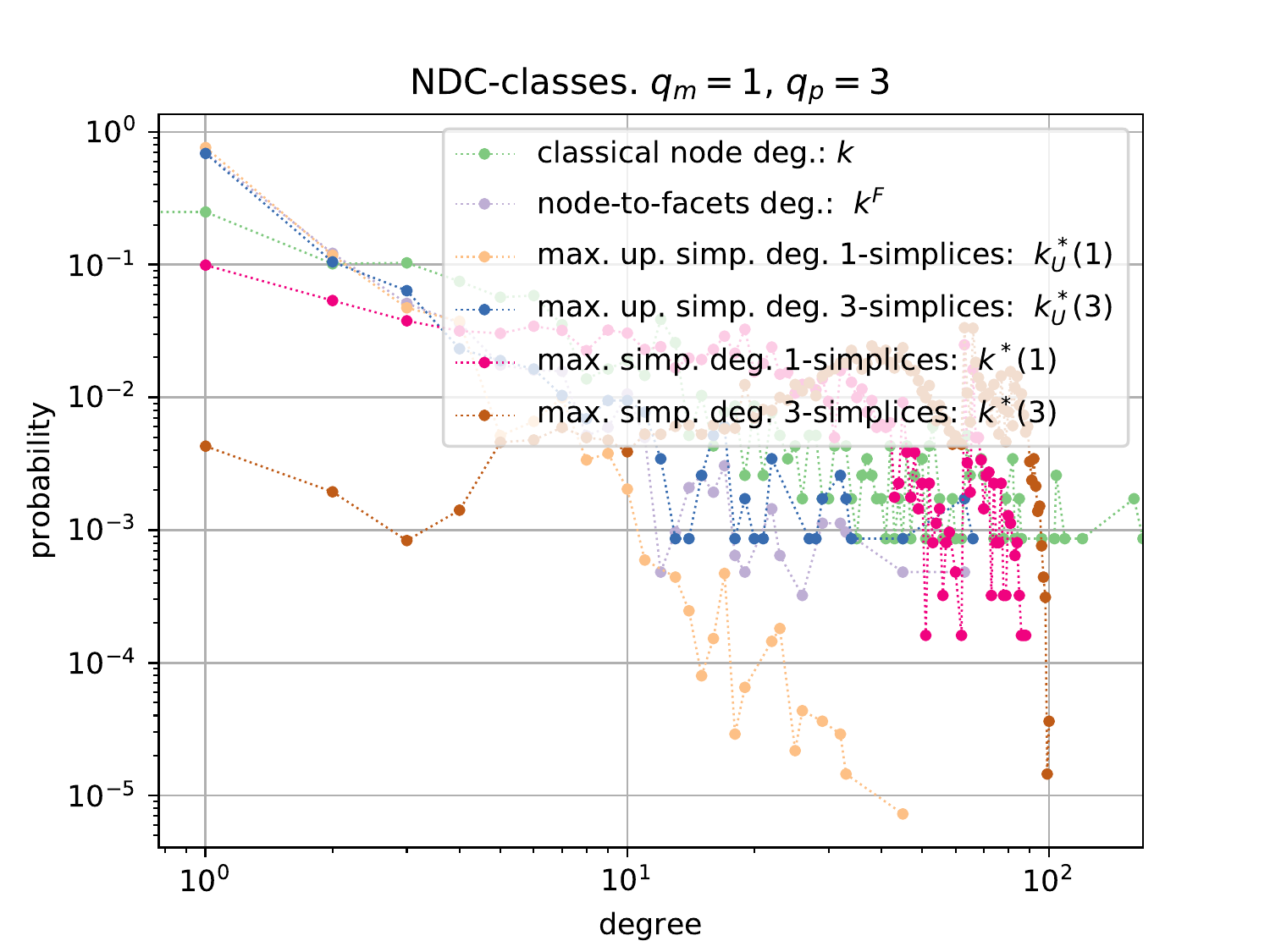}
	\includegraphics[width=0.48\linewidth]{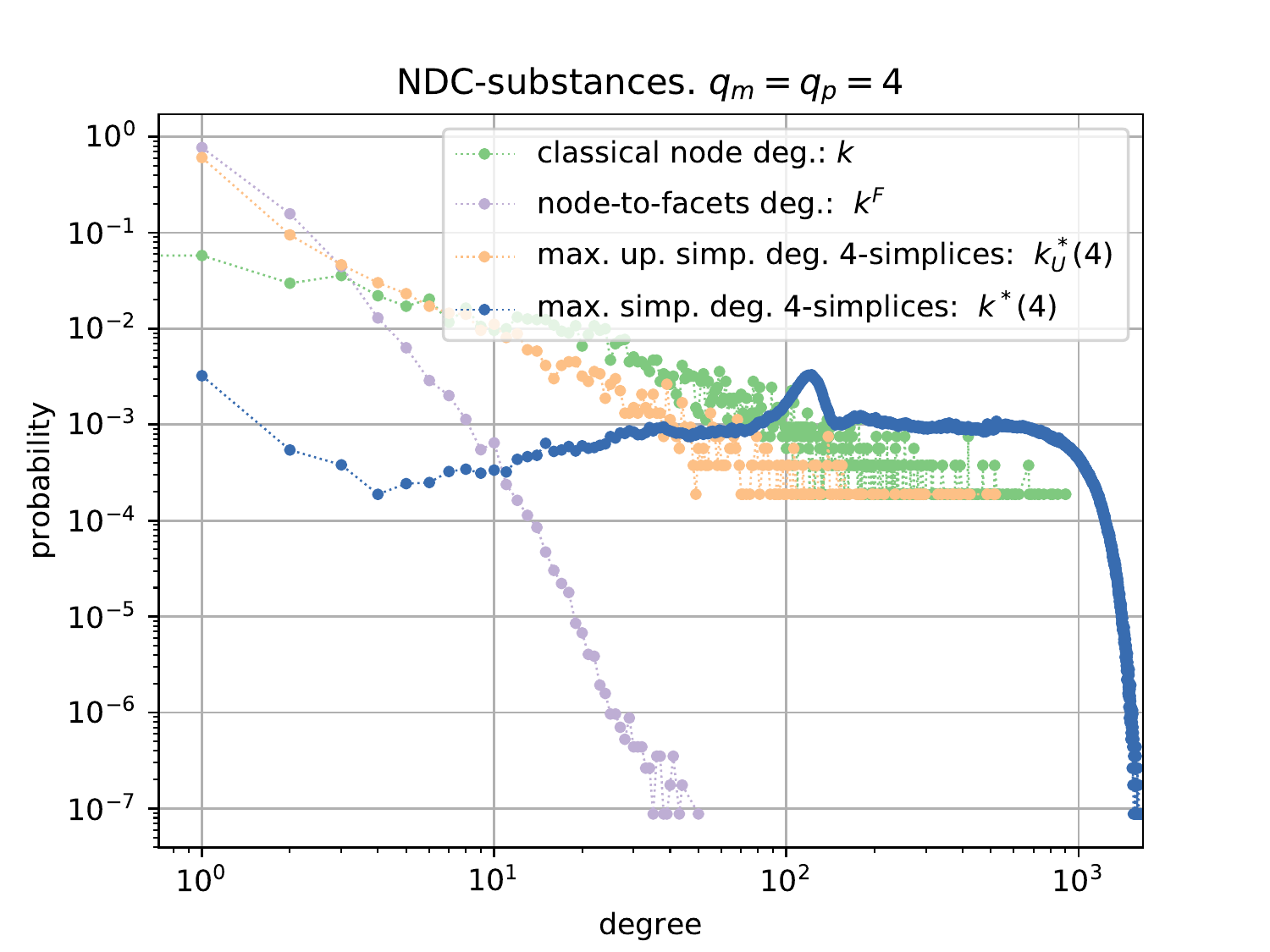}
	\caption{Log-log plot of simplicial degree distributions of NDC datasets.}
	\label{fig:G3b}
\end{figure}

Let us start by stating common conclusions to the degree distributions represented in Figures \ref{fig:coauthor_llmsd}, \ref{fig:G2}, \ref{fig:G3a}, \ref{fig:G3b}, \ref{fig:G4} and \ref{fig:G5}. 

First thing to notice is that a rich variety of higher-order connectivity patterns is unveiled and that the same type of classical-node and node-to-facets degree distributions is observed in all datasets -- a result that is consistent with the results of \cite{PGV17} and \cite{MDS15}. In addition to this, a consistency of our data treatment might be observed by taking into account that, all the real-world datasets analysed having the same type of classical node or node-to-facets degree distributions (grouped in the figures) have similar and consistent higher-order connectivity distribution patterns for both the maximal upper simplicial degree and the maximal simplicial degree.

With regard to the distribution associated with the maximal upper simplicial degree, the figures indicate that this distribution is always closer to a power law distribution, even for those datasets whose node degree distribution does not follow a power law, and it has a more pronounced decay for those datasets whose node degree distribution also follows a power law. Figures show a stretched exponential cutoff and less small $k^*_U$-degree saturation, even for the real-world datasets whose nature was not that of a scale-free network. This, together with the fact that the median of the maximal upper degree is always $k^*_U(q_m)_m=k^*_U(q_p)_m=1$ (see Table \ref{tab:maxsimpupdeg}), might suggest that for the simplicial maximal upper adjacency there are less simplicial communities concentrating a huge amount of maximal upper simplicial collaborations, which could be interpreted as there are fewer and smaller simplicial hubs (in this case maximal upper simplicial hubs).

Let us now comment on some results common to all datasets for the distributions associated with the maximal simplicial degree of Definition \ref{d:simpdeg}. We believe that this simplicial degree is a better notion to measure the relevance of a simplicial community, since it also keeps track of the connections of the strict faces or sub-communities of a simplex (which would lead to a right notion of ``simplicial influencers'' or ``influential communities'').

\begin{itemize}
\item We first observe a rich variety of higher-order connectivity patterns and, in many figures, a surprisingly different shape to that of the classical node degree distribution. It can be seen that all the figures present either a higher small $k^*$-degree saturation than in the classical node $k$-degree distributions, or are directly of a bell-shaped curve type. As these characteristics are typical of networks with a certain amount of random behaviour, this last fact might suggest that when we use of the maximal simplicial degree a random component is added to the simplicial network, making its topology more homogeneous. This remarkable phenomenon may also be associated with the fact that simplicial communities potentiate a supportive behaviour and add an initial attractiveness (working as a team, and particularly as a ``team of teams'',  always benefits and also makes the team network more solidary than the individual network). 

\item In most figures one can also see that a pronounced tail cutoff exists for the $k^*$-degree distribution, which resembles to that of a sublinear preferential attachment model but having a more pronounced decrease than that of the classical node degree distribution. Moreover, the $k^*$-degree distribution takes smaller values than those taken in the classical node $k$-degree distribution, which might be due to the fact that the simplicial sample size is usually larger than the number of nodes. These facts lead to fewer and smaller simplicial hubs and also to a slower growth of the maximum $k^*$-degree. We believe that this should also imply that there are longer distances in the simplicial network (see \cite{HS19} for some definitions on distances on simplicial complexes), and that the topologies associated with such networks deviate from those of small world networks.
\end{itemize}

Nonetheless, let us point out that, as in Network Science, the structural cutoffs mentioned above might be a consequence of data incompleteness (recall that simplices of the datasets where bound to a maximum of $25$ nodes), so that this additional phenomena still need to be understood. In fact, in many systems empirical data is not enough to properly fit real-world degree distributions and thus generative simplicial models, which would analytically predict the expected simplicial degree distributions, would be necessary. From this point of view, it would be quite interesting to have an equivalent notion to that of the degree exponent $\gamma$ of a distribution in our simplicial case, and to study its relationship with or dependence on the classical degree exponent (which would shed some light on classifying simplicial networks). If we denote by $\gamma^F$, $\gamma^*_U$ and $\gamma^*$ the presumably existing degree exponents associated with the node-to-facets degree, the maximal upper simplicial degree and the maximal simplicial degree, respectively, Figures \ref{fig:coauthor_llmsd}, \ref{fig:G2}, \ref{fig:G3a}, \ref{fig:G3b} and \ref{fig:G4} also suggest that for the tail cutoff of the distributions one would have $\gamma^*\geq \gamma^*_U\geq \gamma^F \geq\gamma$.

Below we will comment on particular characteristics and interpretations that can be deduced from this analysis for the different groups of real-world datasets represented in each of the figures.

Figure  \ref{fig:coauthor_llmsd} (coauth-MAG-Geology and coauth-DBLP datasets) shows in the low values of the maximal simplicial degree axis, a higher $k^*$-degree saturation, which causes an approximation of the graphic (in this region) to a stretched exponential curve distribution. In the high $k^*$-degree region, the structural cutoff resembles to a sublinear preferential attachment regime but with a faster decay than that of the classical node $k$-degree distribution. This might be interpreted as less dependence on simplicial network growth and bigger initial simplicial attractiveness or fitness. The probability of being part of a simplicial community with a relatively high $k^*$-degree is greater than that of being an individual with a high classical node $k$-degree, whereas finding a maximal simplicial hub is less probable than finding a usual hub.

In Figure \ref{fig:G2} (contact-high-school and contact-primary-school datasets) we observe a random model for the classical $k$-degree and the node-to-facets $k^F$-degree distributions; a fact that also holds true, and in a more clear way, for the maximal simplicial $k_U^*$-degree distribution. Nonetheless, the maximal upper simplicial $k^*$-degree distribution is close to a power law model; a strange phenomenon that still needs to be understood. Something similar is happening in Figure \ref{fig:G3a} for the e-mail-Enron and email-EU datasets.

\begin{figure}[htb]
	\centering
	\includegraphics[width=0.48\linewidth]{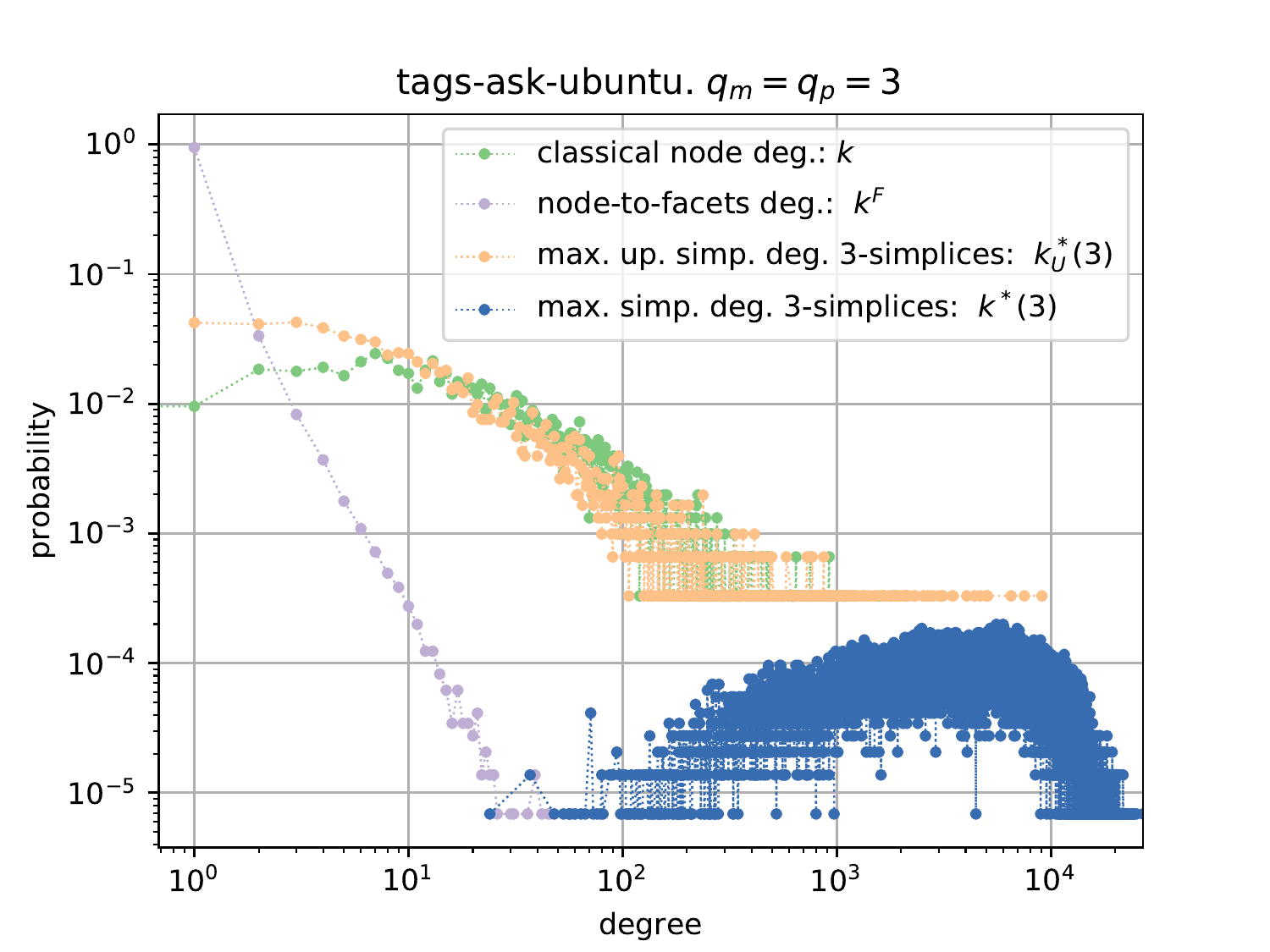}
	\includegraphics[width=0.48\linewidth]{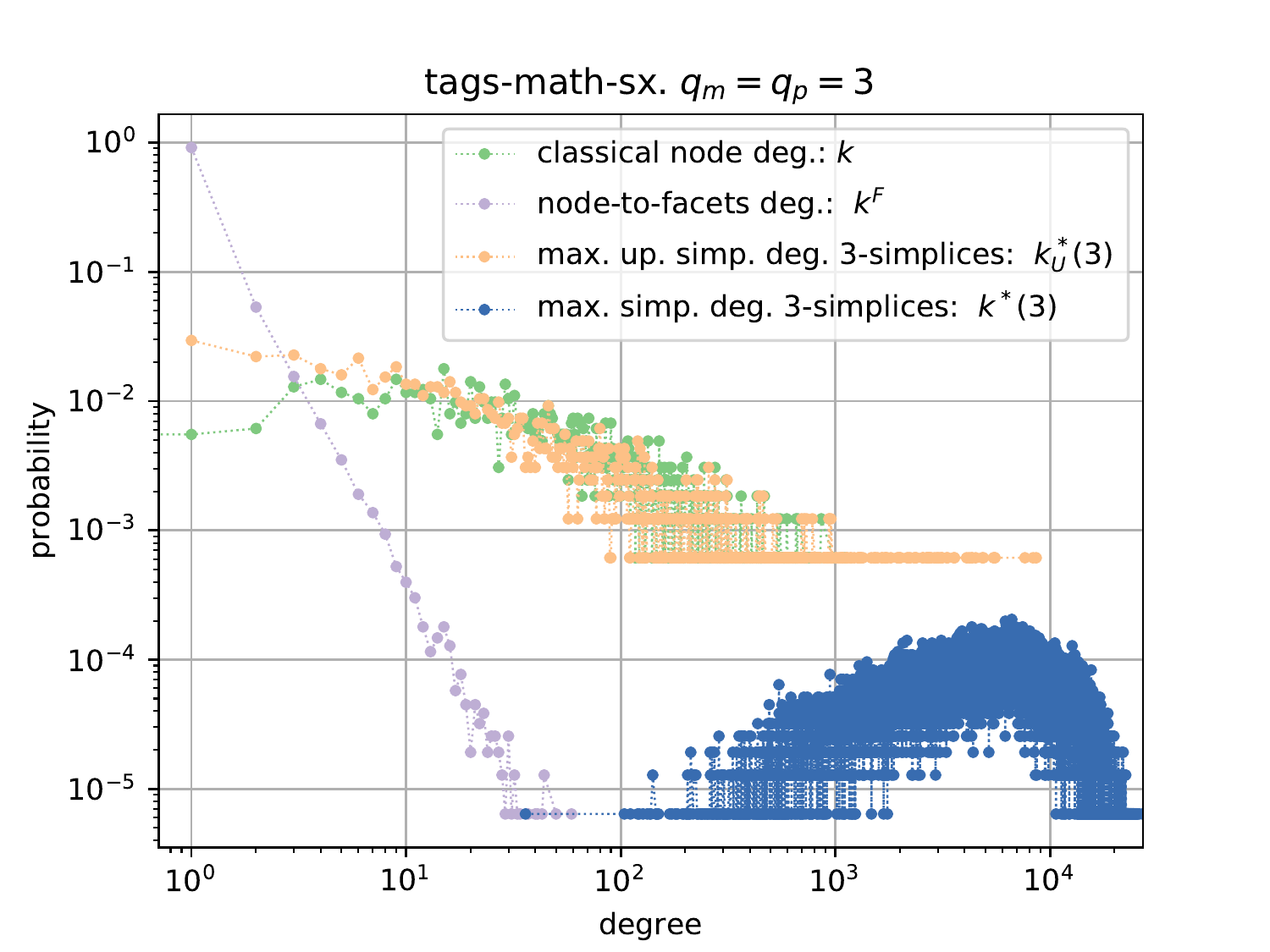}

	\includegraphics[width=0.48\linewidth]{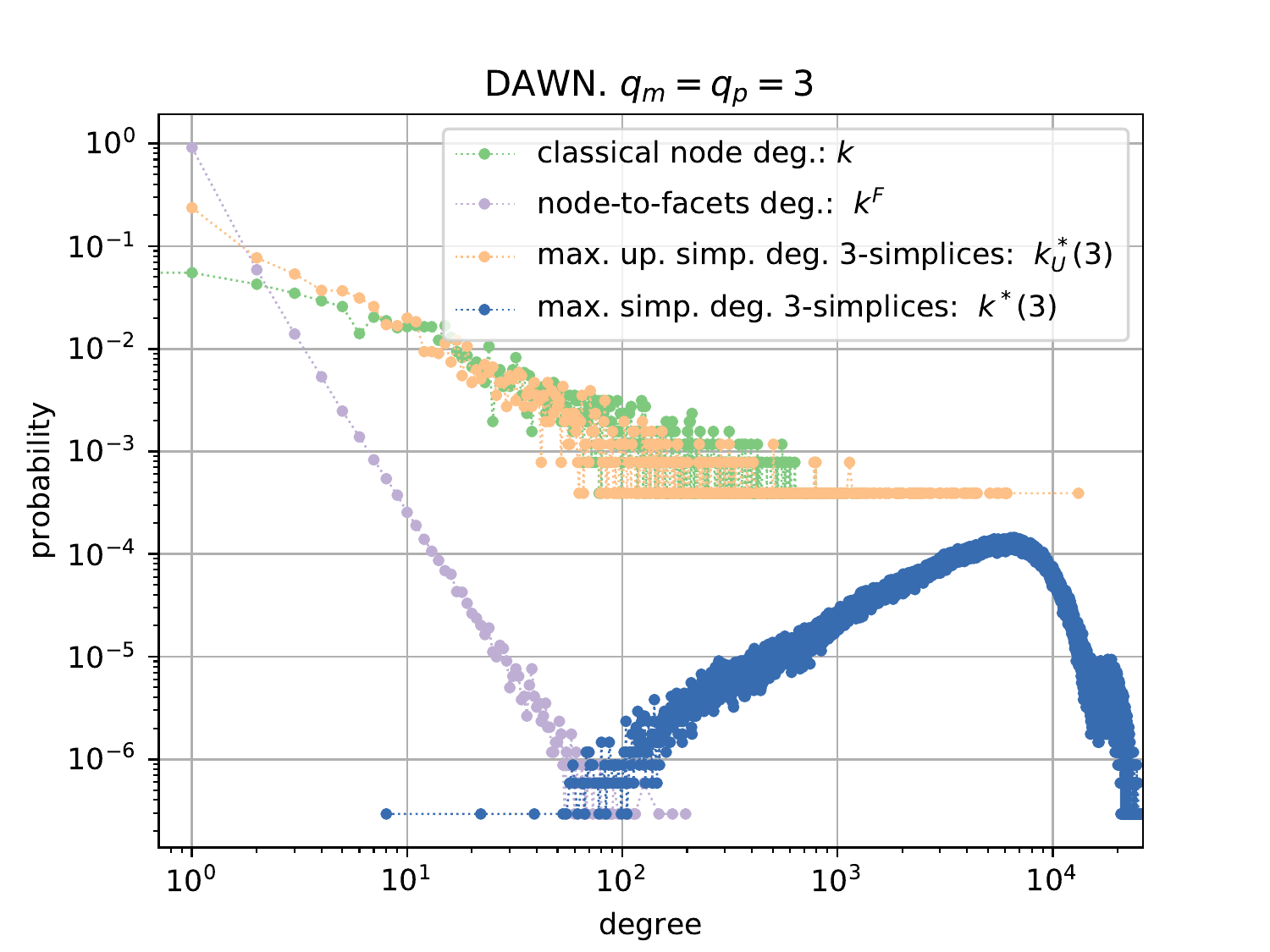}
	\caption{Log-log plot of simplicial degree distributions of tags-ask-ubuntu, tags-maths-sx and DAWN datasets.}
	\label{fig:G4}
\end{figure} 

In Figure \ref{fig:G4} (DAWN, tags-ask-ubuntu and tags-math-sx datasets) it is remarkable that the maximal simplicial $k^*$-degree distribution is surprisingly different to the classical node $k$-degree distribution. While the later reflects a sublinear preferential attachment regime, the maximal simplicial degree distribution exhibits, out of the high $k^*$-degree region, a bell-shaped Poisson-like curve typical to that of a random network, which in the high $k^*$-degree region (the tail cutoff) transforms into a power law with exponential cutoff, a graphical representation much closer to the classical node degree distribution (but with faster decay). We believe that this rare phenomenon has certain consistency since, as in the classical bell-shaped random models, the peak of the bell-curve for the maximal simplicial degree distribution is also achieved for the average maximal simplicial degree, $\langle k^*(q_m)\rangle$, of the datasets (fact that can be contrasted using Table \ref{tab:maxsimpdeg} and Figure \ref{fig:G4}).

\begin{figure}[htb]
	\centering
	\includegraphics[width=0.48\linewidth]{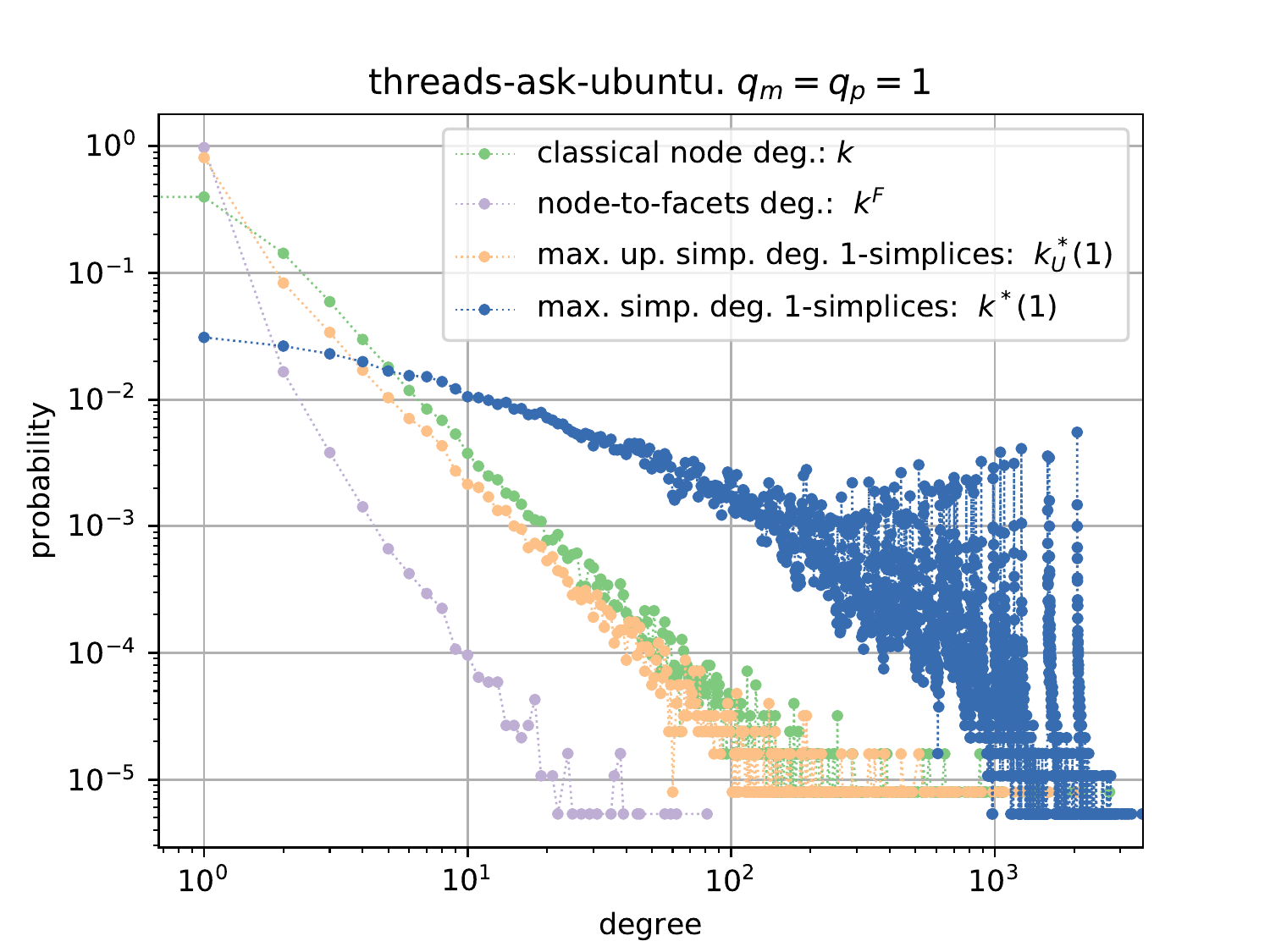}
	\includegraphics[width=0.48\linewidth]{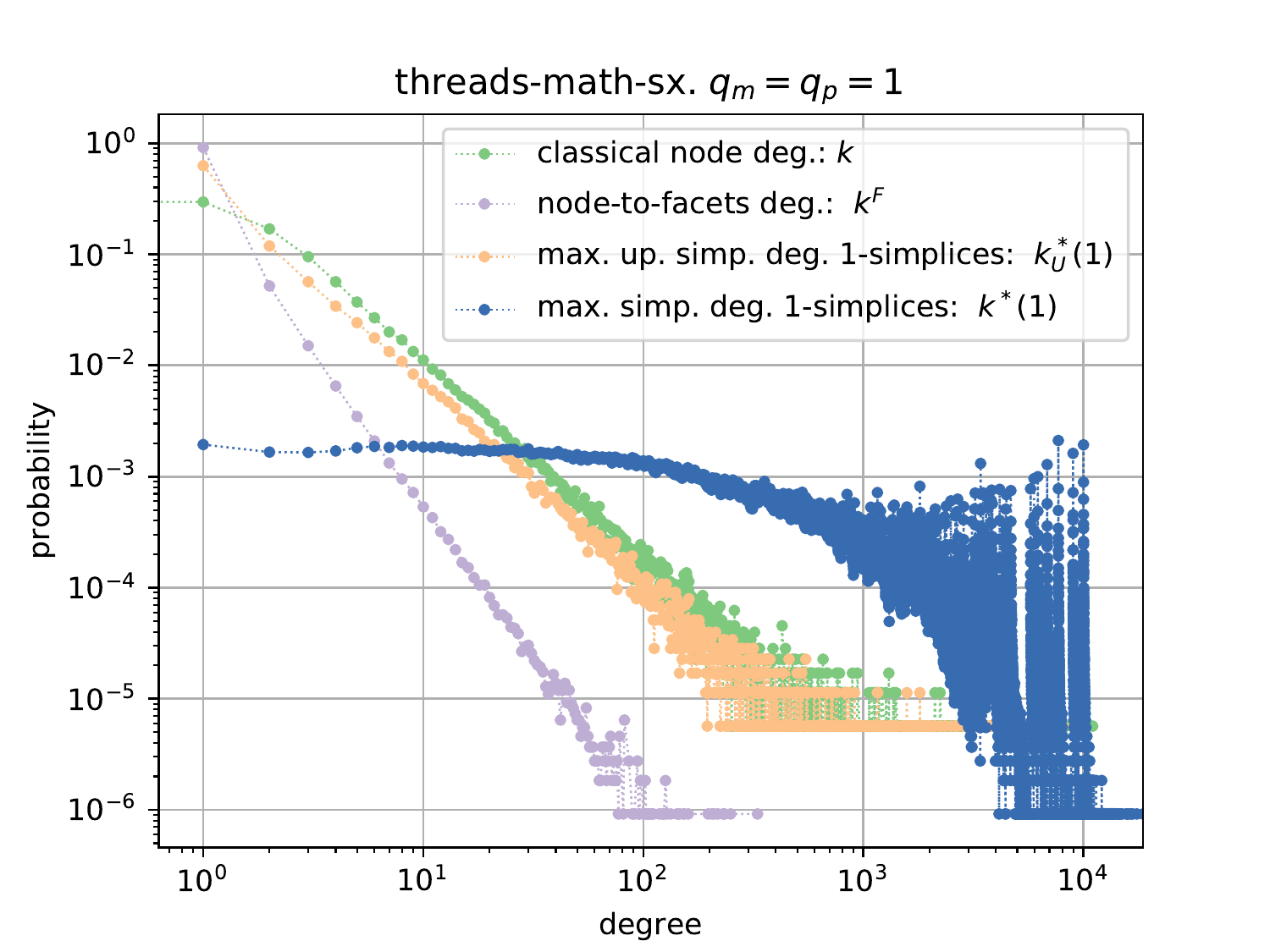}
	\caption{Log-log plot of simplicial degree distributions of threads-ask-ubuntu and threads-math-sx datasets.}
	\label{fig:G5}
\end{figure}

Figure \ref{fig:G5}  (threads-ask-ubuntu and threads-math-sx datasets) still needs to be properly interpreted. The maximal upper simplicial $k_U^*$-degree distribution is showing a kind of superlinear preferential attachment regime, getting thus closer to a richer-gets-richer process. The maximal simplicial degree $k^*$-distribution seems to present a higher small $k^*$-degree saturation and a tail cutoff where the probabilities go up, which seems to lead to a winner-takes-all phenomenon.

In addition, it seems that these higher-order distribution models may depend on the dimension of the simplex we are considering: the bigger the simplex dimension is, the closer to a random model the distribution is.

\section{Conclusions and future research}\label{s:concl}

Many real networks in the social, biological and biomedical sciences or computer science have an inherent structure of simplicial complexes, which reflect the multi interactions among agents and groups of agents. As far as we know, higher-order notions of adjacency and degree for simplices valid for any dimensional simplicial comparison are lacked in the literature. We have proposed these notions, shown their combinatoric properties, and we have given explicit methods for computing them introducing a multi-parameter combinatorial Laplacian on a simplicial complex (which generalizes the knowns graph and combinatorial Laplacians). These notions of higher-order simplicial adjacencies have allowed us to define two important degrees for the potential applications: the maximal upper simplicial degree of a simplex (counting the number of distinct facets the simplex is nested in); and the maximal simplicial degree of a simplex (counting the number of distinct facets the simplex belongs to and also the different facets its strict faces are contained in). We have studied their associated degree distributions in many and diverse real-world datasets, and we have shown a rich variety of higher-order connectivity structures that, for datasets of the same type, reflect similar simplicial collaboration patterns. Furthermore, we have shown that if we use the maximal simplicial degree, then its associated degree distribution is, in general, surprisingly different from the classical and upper adjacency ones, and shows that a random component is being intrinsically added to the initial preferential attachment nature of many of the datasets (since in general the log-log plot of the maximal simplicial degree distribution present a higher small degree saturation to that of the classical node degree distribution). In addition, this study reveals the existence of ``simplicial hubs''  (an equivalent notion to that of an influencer in social networks which could be referred to as ``simplicial influencers'' or ``influential communities'') and shows that they represent a deep organizing principle in the simplicial network topology.

In subsection \ref{ss:spec}, we have proposed potential theoretical applications for the multi-combinatorial Laplacian, like the study of generalized random walks in simplicial complexes and the spectral properties of the multi-Laplacian operator, both aimed at understanding and calculating new topological invariants in simplicial networks. 

In \cite{HS19} we have introduced new centrality measures on simplicial complexes based on the theoretical notions proposed in these notes (such as a simplicial clustering coefficient or closeness and betweeness centralities using a new notion of simplicial distance). It would be interesting to do an statistical study of these centrality measures in simplicial complex systems in the future, since it would represent a starting point for studying the geometry and robustness of the simplicial networks, together with the importance of certain simplicial communities in how information flows through a simplicial network.

Let us point out that, in many systems, empirical data is not always enough to distinguish distributions, and that data collection incompleteness might affect to the tail cutoff of the degree distributions representations. Thus, in order to state a proper classification of simplicial networks or to study the dynamics of a simplicial network, generative simplicial models predicting the expected simplicial degree distributions from a rigorous analytically point of view are still needed. This study might help in constructing and understanding connectivity algorithms and configurations models in simplicial complexes, which would allow to generalize some of the results in \cite{BR02,BA16} to the simplicial case and go beyond the $d$-pure simplicial networks studied in \cite{BK19}. Hopefully, these notes contribute to developing new applications of Topological Data Analysis in complex systems, and thus to expand the basis of an emergent Simplicial Network Science.

\section{Appendix}\label{s:A}\quad 

We give in this section the proofs of Theorems \ref{t:2}, \ref{t:3} and \ref{t:4}.

\subsection{Proof of Theorem \ref{t:2}.}\quad 

Let $p,h$ and $h'$ be non negative integers, put $q=p+h,\, q'=p+h'$ and fix 
$\{\tau_1^{(q')},\dots,\tau_m^{(q')}\},\,\{\sigma_1^{(q)},\dots,\sigma_n^{(q)}\}$ and $\{\gamma_1^{(p)},\dots,\gamma_r^{(p)}\}$ basis of $C_{q'}(K),\, C_{q}(K)$ and $C_{p}(K)$ respectively. Given the boundary operators $\partial_{q,h}\colon C_q(K)\to C_p(K)$ and  $\partial_{q',h'}\colon C_{q'}(K)\to C_p(K)$, denote by $B_{q,h}$ and $B_{q',h'}$ their corresponding matrices with respect to those bases. For the composition 
$$\xymatrix{C_q(K)\ar[r]^{\partial_{q,h}} & C_p(K)\ar[r]^{\partial^*_{q',h'}} & C_{q'}(K)}$$ one has:
\begin{align*}
(\partial^*_{q',h'}\circ\partial_{q,h} )(\sigma_j^{q})&=\displaystyle\sum_{i,k} b_{ij}^{(q,h)}b_{ik}^{(q',h')}\tau_k^{(q')}\\
&=\displaystyle\sum_{i,k}\sig_L\big(\sigma_j^{(q)},\gamma_i^{(p)};\gamma_i^{(p)}\big)\sig_L\big(\tau_k^{(q')}, \gamma_i^{(p)};\gamma_i^{(p)}\big)\tau_k^{(q')}\\
&= \displaystyle\sum_{i,k}\sig_L\big(\sigma_j^{(q)},\tau_k^{(q')};\gamma_i^{(p)})\tau_k^{(q')}\,.
\end{align*}

$\gamma_i^{(p)}$ is a $p$-face of both $\sigma_j^{(q)}$ and $\tau_k^{(q')}$ if and only if $|\sig_L\big(\sigma_j^{(q)},\tau_k^{(q')};\gamma_i^{(p)})|=1$, so we obtain that:
\begin{itemize}
\item  the number of $p$-faces of both $\sigma_j^{(q)}$ and $\tau_k^{(q')}$ is $$\ord^p_L(\sigma_j^{(q)},\tau_k^{(q')})=\displaystyle\sum_{i}|\sig_L\big(\sigma_j^{(q)},\tau_k^{(q')};\gamma_i^{(p)})|=\displaystyle\sum_{i} |b_{ij}^{(q,h)}||b_{ik}^{(q',h')}|\,,$$

\item the number of $q'$-simplices which are $p$-lower adjacent to $\sigma_j^{(q)}$ in $\gamma_i^{(p)}$ is 
$$\displaystyle\sum_{k}|\sig_L\big(\sigma_j^{(q)},\tau_k^{(q')};\gamma_i^{(p)})|=\displaystyle\sum_{k} |b_{ij}^{(q,h)}||b_{ik}^{(q',h')}|.$$
\end{itemize}

We have to prove that (Theorem \ref{t:2}):
$$\deg_L^p(\sigma_j^{(q)})= -1+\sum_{q'=p}^{\dim K}\sum_k\min\big(1,\sum_{i}|b_{ij}^{(q,h)}||b^{(q',h')}_{ik}|\big)$$

Fixed basis as above, we define the following sign matrix:
\begin{equation}\label{eq:signsmatrix}
S_{q,h,h'}(j):=\begin{pmatrix}
|\sig_L\big(\sigma_j^{(q)},\tau_1^{(q')};\gamma_1^{(p)})|&\cdots&|\sig_L\big(\sigma_j^{(q)},\tau_1^{(q')};\gamma_r^{(p)})|\\
\vdots&&\vdots\\ 
\vdots&&\vdots\\ 
|\sig_L\big(\sigma_j^{(q)},\tau_m^{(q')};\gamma_1^{(p)})|&\cdots&|\sig_L\big(\sigma_j^{(q)},\tau_m^{(q')};\gamma_r^{(p)})|\end{pmatrix}
\end{equation}
whose $(k,i)$-th entry is: 

$$s^{(q,h,h')}_{ki}(j)=|b_{ij}^{(q,h)}||b_{ik}^{(q',h')}|=\begin{cases}1 & \text{ if } \gamma_i^{(p)}\subseteq\sigma_j^{(q)}\cap\tau_k^{(q')}\\ 0 & \text{otherwise}\end{cases}$$

Note that if $s^{(q,h,h')}_{ki}(j)\neq 0$ for some $i$, then $\sigma_j^{(q)}$ and $\tau_k^{(q')}$ are $p$-lower adjacent in order $\ord^p_L(\sigma_j^{(q)},\tau_k^{(q')})=\sum_{i}|b_{ij}^{(q,h)}||b^{(q',h')}_{ik}|$. Hence, the number of $q'$-simplices $p$-lower adjacent to $\sigma_j^{(q)}$, counted each one with its order, is  $\sum_{ki}|b_{ij}^{(q,h)}||b^{(q',h')}_{ik}|$. To avoid that $\tau_k^{(q')}$ to be counted more than once in $\deg_L^p(\sigma_j^{(q)})$, we consider the minimum between 1 and its order, so that (assuming $\tau_k^{(q')}\neq \sigma_j^{(q)}$):
$$
\tau_k^{(q')}\sim_{L_p} \sigma_j^{(q)}\iff \min\big(1,\sum_{i}|b_{ij}^{(q,h)}||b^{(q',h')}_{ik}|\big)=1\,.
$$
 Therefore, the number of $q'$-simplices $p$-lower adjacent to $\sigma_j^{(q)}$ is:
$$
\deg_L^{q-q',p}(\sigma_j^{(q)})=\begin{cases}
\displaystyle\sum_k\min\big(1,\sum_{i}|b_{ij}^{(q,h)}||b^{(q',h')}_{ik}|\big) & \text{ for } q'\neq q\\
\displaystyle\sum_k\big(\min\big(1,\sum_{i}|b_{ij}^{(q,h)}||b^{(q,h)}_{ik}|\big)\big)-1 & \text{ for } q'=q
\end{cases}
$$
and the result follows. 

Notice that if $q=q'$, then the degree $\deg_L^{0,p}(\sigma_j^{(q)})$ is $\displaystyle\sum_k\min\big(1,\displaystyle\sum_{i}|b_{ij}^{(q,h)}||b^{(q,h)}_{ik}|\big)$ minus 1 since, by definition, $\sigma_j^{(q)}$ is not $p$-lower adjacent to itself.

\subsection{Proof of Theorem \ref{t:3}.}\quad 

Taking into account the following notations and partial results, the proof of Theorem \ref{t:3} is analogous to $p$-lower degree case of Theorem \ref{t:2}.

Let $p,h$ and $h'$ be non negative integers, denote $q=p-h,\, q'=p-h'$ and fix 
$\{\tau_1^{(q')},\dots,\tau_m^{(q')}\},\,\{\sigma_1^{(q)},\dots,\sigma_n^{(q)}\}$ and $\{\gamma_1^{(p)},\dots,\gamma_r^{(p)}\}$ basis of $C_{q'}(K),\, C_{q}(K)$ and $C_{p}(K)$ respectively. We denote by $B_{q+h,h}$ and $B_{q'+h',h'}$ the corresponding matrices (with respect to those bases) to the boundary operators: $$C_p(K)\xrightarrow{\partial_{q+h,h}} C_{q}(K) \text{ and } C_{p}(K)\xrightarrow{\partial_{q'+h',h'}} C_{q'}(K)\,.$$

For the composition
$$C_q(K)\xrightarrow{\partial^*_{q+h,h}} C_{p}(K)\xrightarrow{\partial_{q'+h',h'}} C_{q'}(K)$$ one has: 
\begin{align*}
(\partial_{q'+h',h'}\circ\partial^*_{q+h,h} )(\sigma_j^{q})&=\displaystyle\sum_{i,k} b_{ji}^{(q+h,h)}b_{ki}^{(q'+h',h')}\tau_k^{(q')}\\
&= \displaystyle\sum_{i,k}\sig_U\big(\sigma_j^{(q)},\tau_k^{(q')};\gamma_i^{(p)})\tau_k^{(q')}\,.
\end{align*}

Since $\gamma_i^{(p)}$ is a $p$-simplex containing $\sigma_j^{(q)}$ and $\tau_k^{(q')}$ as faces if and only if $|\sig_U\big(\sigma_j^{(q)},\tau_k^{(q')};\gamma_i^{(p)})|=1$ we obtain that:
\begin{itemize}
\item  the number of $p$-simplices which contain both $\sigma_j^{(q)}$ and $\tau_k^{(q')}$ as faces is $$\ord^p_U(\sigma_j^{(q)},\tau_k^{(q')})=\displaystyle\sum_{i}|\sig_U\big(\sigma_j^{(q)},\tau_k^{(q')};\gamma_i^{(p)})|=\displaystyle\sum_{i} |b_{ji}^{(q+h,h)}||b_{ki}^{(q'+h',h')}|\,,$$

\item the number of $q'$-simplices $\tau_k^{(q')}$ such that $\sigma_j^{(q)}$ and $\tau_k^{(q')}$ are faces of $\gamma_i^{(p)}$ is 
$$\displaystyle\sum_{k}|\sig_U\big(\sigma_j^{(q)},\tau_k^{(q')};\gamma_i^{(p)})|=\displaystyle\sum_{k} |b_{ji}^{(q+h,h)}||b_{ki}^{(q'+h',h')}|\,,$$

\end{itemize}

\subsection{Proof of Theorem \ref{t:4}.}\quad 

Assume $\sigma^{(q)}$ is a $q$-simplex, the $p$-adjacency degree of $\sigma^{(q)}$ has been defined as the number of $q'$-simplices $\sigma^{(q')}$ such that 
$\sigma^{(q)}\sim_{L_{p^*}} \sigma^{(q')}$ and $\sigma^{(q)}\not\sim_{U_{p'}} \sigma^{(q')}$, with $p'=q+q'-p$, and its maximal $p$-adjacent degree is the number of $q'$-simplices $\sigma^{(q')}$ such that 
$\sigma^{(q')}\sim_{A_p} \sigma^{(q)}$ and $\sigma^{(q')}$ is not a face of a $q''$-simplex $\sigma^{(q'')}$ which is also $p$-adjacent to $\sigma^{(q)}$ (see Definitions \ref{d:qhAdj} and \ref{d:Adjdeg}). 
As we have already remarked, the fact that a $q'$-simplex $\sigma^{(q')}$ to be $p$-lower adjacent to $\sigma^{(q)}$ can be encoded in terms of the lower sign of both simplices. In other words, if one sets a base $\{\gamma_1^{(p)},\dots,\gamma_r^{(p)}\}$ of $C_{p}(K)$, then a $q'$-simplex $\sigma^{(q')}$ is $p$-lower adjacent to $\sigma^{(q)}$ if and only if $\sig_L(\sigma^{(q)},\sigma^{(q')};\gamma_i^{(p)})\neq 0$ for some $\gamma_i^{(p)}$, so that, one has:

\begin{align*}
\sigma^{(q')}\sim_{L_p}\sigma^{(q)}\iff &\displaystyle\sum_{i} |\sig_L\big(\sigma^{(q)},\sigma^{(q')};\gamma_i^{(p)})|\geq 1\\ \iff &\min\Big(1,\displaystyle\sum_{i} |\sig_L\big(\sigma^{(q)},\sigma^{(q')};\gamma_i^{(p)}|\big)\Big)=1\,.
\end{align*}
In a similar way, one has: 

\begin{align*}
\sigma^{(q')}\sim_{U_{p}}\sigma^{(q)}\iff &\displaystyle\sum_{i} |\sig_U\big(\sigma^{(q)},\sigma^{(q')};\gamma_i^{(p)})|\geq 1\\ \iff &\min\Big(1,\displaystyle\sum_{i} |\sig_U\big(\sigma^{(q)},\sigma^{(q')};\gamma_i^{(p)}|\big)\Big)=1\,.
\end{align*}
 
For simplicity we shall denote:
\begin{align*}
m_L^p(\sigma^{(q)},\sigma^{(q')})=&\min\Big(1,\displaystyle\sum_{i} |\sig_L\big(\sigma^{(q)},\sigma^{(q')};\gamma_i^{(p)}|\big)\Big)\\ 
m_U^{p}(\sigma^{(q)},\sigma^{(q')})=&\min\Big(1,\displaystyle\sum_{i} |\sig_U\big(\sigma^{(q)},\sigma^{(q')};\gamma_i^{(p)}|\big)\Big)\\
adj^p(\sigma^{(q)},\sigma^{(q')})= &\, m^p_L(\sigma^{(q)},\sigma^{(q')})\big(1-m^{p+1}_L(\sigma^{(q)},\sigma^{(q')})\big)\big(1-m^{p'}_U(\sigma^{(q)},\sigma^{(q')})\big)\,. 
\end{align*}
Hence, 

\begin{equation}\label{eq:adjacency}
\begin{aligned}
\sigma^{(q')}\sim_{A_p}\sigma^{(q)}\iff adj^p(\sigma^{(q)},\sigma^{(q')})=1\,.
\end{aligned}
\end{equation}

The proof of (1) in Theorem \ref{t:4} follows from formula (\ref{eq:adjacency}). To prove (2) we need only to check the number of those $q'$-simplices, which being  $p$-adjacent to $\sigma_j^{(q)}$, are also faces of $q''$-simplices adjacent to $\sigma_j^{(q)}$. Assume that $\sigma_k^{(q')}$ is a $q'$-simplex such that the following hold:
\begin{itemize}
\item [(a)] $\sigma_k^{(q')}\sim_{A_p}\sigma_j^{(q)}$
\item [(b)] There exists a (likely not uniquely determined) $q''$-simplex $\sigma^{(q'')}$ such that $\sigma_k^{(q')}\subset\sigma^{(q'')}$ and $\sigma^{(q'')}\sim_{A_p}\sigma_j^{(q)}$.
\end{itemize}
By assumption $(a)$, we have that $adj^p(\sigma_j^{(q)},\sigma_k^{(q')})=1$, and assumption $(b)$ is equivalent to say that $|\sig_L(\sigma^{(q')}_k,\sigma^{(q'')};\sigma_k^{(q')})|\cdot adj^p(\sigma_j^{(q)},\sigma^{(q'')})=1$. Therefore, under these assumptions, $\sigma_k^{(q')}$ is not a maximal $p$-adjacent to $\sigma_j^{(q)}$ simplex, and we don't have to take this simplex into account to compute $\deg^{p^*}_A(\sigma_j^{(q)})$. Thus, for every $\sigma_k^{(q')}$, the expression: 
$$\displaystyle\sum_{q''=q'+1}^{\dim K}\displaystyle\sum_{\ell=1}^{\dim C_{q''}(K)}|\sig_L(\sigma^{(q')}_k,\sigma_\ell^{(q'')};\sigma_k^{(q')})|\cdot adj^p(\sigma_j^{(q)},\sigma_\ell^{(q'')})$$ gives the number of $q''$-simplices (where $q''$ runs over all dimensions from $q'+1$ to $\dim K$) which are $p$-adjacent to $\sigma_j^{(q)}$ and contain $\sigma_k^{(q')}$. Then $\sigma_k^{(q')}$ is $p$-adjacent but no maximal to $\sigma_j^{(q)}$ if and only if: 
$$\min\Big(1,\sum_{q''}\sum_{\ell}|\sig_L(\sigma^{(q')}_k,\sigma_\ell^{(q'')};\sigma_k^{(q')})|\cdot adj^p(\sigma_j^{(q)},\sigma_\ell^{(q'')})\Big)\cdot adj^p(\sigma_j^{(q)},\sigma_k^{(q')})=1\,,$$ and we get the statement. 
\vfill

\end{document}